\documentclass[preprint,11pt]{elsarticle}
\usepackage{graphicx}
\usepackage{amsmath,amssymb,amsthm}
\usepackage{enumitem}
\usepackage[colorlinks=true,linkcolor=blue,citecolor=blue,urlcolor=blue]{hyperref}
\usepackage{xcolor}
\usepackage{algorithm,algpseudocode}
\journal{Stochastic Processes and their Applications}

\newtheorem{theorem}{Theorem}[section]
\newtheorem{proposition}[theorem]{Proposition}
\newtheorem{lemma}[theorem]{Lemma}
\newtheorem{corollary}[theorem]{Corollary}
\theoremstyle{remark}
\newtheorem{remark}[theorem]{Remark}

\newcommand{\R}{\mathbb{R}}
\newcommand{\N}{\mathbb{N}}
\newcommand{\E}{\mathbb{E}}
\newcommand{\Prob}{\mathbb{P}}
\newcommand{\ind}{\mathbf{1}}
\newcommand{\Mp}{\mathcal{M}_p}
\DeclareMathOperator{\supp}{supp}
\newcommand{\nuY}{\nu_Y}
\DeclareMathOperator{\Cov}{Cov}
\newcommand{\PsiV}{\Psi_{V}}
\newcommand{\Psiphi}{\Psi_{\varphi}}

\usepackage[left=2.5cm,right=2.5cm,top=2.5cm,bottom=2.5cm]{geometry}

\begin{document}

\begin{frontmatter}

\title{Fleming--Viot Selection of the Yaglom Limit for
Age-Structured Bellman--Harris Processes, with Application to
Livestock Epidemic Surveillance}

\author[lmbp]{Ouerdia Arezki\corref{cor1}}
\ead{ouerdia.arezki@uca.fr}
\author[lmbp]{Paul-Marie Grollemund}
\ead{paul-marie.grollemund@uca.fr}
\author[ul]{Ali Zemouche}
\ead{ali.zemouche@univ-lorraine.fr}

\affiliation[lmbp]{organization={Laboratoire de Math\'ematiques Blaise Pascal
(LMBP), CNRS UMR 6620, Universit\'e Clermont Auvergne},
  addressline={3 place Vasarely}, city={Aubi\`ere}, postcode={63178},
  country={France}}
\affiliation[ul]{organization={Universit\'e de Lorraine, CNRS, CRAN},
  city={Metz}, postcode={F-57000}, country={France}}
\cortext[cor1]{Corresponding author.}

\begin{highlights}
\item First Fleming--Viot particle system for a class of subcritical
Bellman--Harris processes.
\item Particles are age configurations: bounding the population size does not
bound the state.
\item Additive Foster--Lyapunov drift: reproductive value plus exponential age
weight.
\item Explicit Doeblin minorisation and uniform comparison of survival
probabilities.
\item Yaglom theorem in total variation, at an exponential rate in time,
with decay rate the Malthusian parameter.
\item Selection of the Yaglom limit by the particle system, at a
polynomial rate in the number of particles.
\end{highlights}

\begin{abstract}
In this paper, we construct a Fleming--Viot particle system for a class of
subcritical Bellman--Harris processes. We prove that it selects the Yaglom
limit at a polynomial rate in the number of particles. Since lifetimes are
non-exponential, the population size is not Markov, and the analysis must
therefore be carried out on the space of age configurations. In this setting,
the Lyapunov functions used for Galton--Watson processes are no longer
norm-like. Nevertheless, we establish a Yaglom theorem that strengthens the
classical result: the conditional laws converge in total variation at an
exponential rate, with decay rate given by the Malthusian parameter. We also
prove that the drift condition, which links the hazard rate to the offspring
law, is necessary within a natural class of Lyapunov functions, showing that
it is a feature of the measure-valued lift rather than a defect of the
estimates. Finally, we illustrate the estimator through an application to
livestock epidemic surveillance, where the Yaglom limit is the null
distribution of a change-detection test.
\end{abstract}

\begin{keyword}
Bellman--Harris process \sep age-structured population \sep Markov process
\sep Fleming--Viot particle system \sep quasi-stationary distribution
\sep Yaglom limit \sep Foster--Lyapunov drift \sep Lyapunov function
\sep Doeblin minorisation
\MSC[2020] 60J80 \sep 60K35 \sep 60J25 \sep 37A25 \sep 65C05
\end{keyword}

\end{frontmatter}


\section{Introduction}
\label{sec:intro}

Bellman--Harris branching processes describe age-dependent populations in which
reproduction occurs only at the end of an individual's
lifetime~\cite{BellmanHarris1948,Harris1963,Jagers1975,AthreyaNey1972}. Unlike
continuous-time Galton--Watson processes, the future evolution of such a
population is not determined by its current size alone: it also depends on the
ages of all living individuals. Consequently, the population-size process is,
in general, not Markovian. A subcritical branching population dies out almost
surely, so its long-time behaviour is meaningful only conditionally on
survival. The relevant object is the Yaglom limit~\cite{Yaglom1947}, a
quasi-stationary distribution (QSD) of the killed semigroup, one of possibly
many~\cite{SVJ1966,Maillard2018}; see~\cite{MelVil2012} for a survey.
Estimating it by straightforward simulation becomes impractical, because the
survival probability decays exponentially.

To avoid this difficulty, the Fleming--Viot particle
system~\cite{BHIM1996,BHM2000} keeps the number of live trajectories constant:
when a particle reaches the absorbing state, it is immediately replaced by a copy of
a particle chosen uniformly among the survivors, and the empirical measure of the
$N$ particles then approximates the law conditioned on survival. Consistency is
known for diffusions killed at the boundary of a
domain~\cite{BHM2000,Villemonais2011}, for finite and countable state
spaces~\cite{FerrariMaric2007,AFG2011,CloezThai2016} and for Galton--Watson
processes~\cite{AFGJ2016}; central limit theorems are also
available~\cite{CDGR2020,LelievrePVR2018}. From these results, Champagnat and
Villemonais~\cite{CV2021} derived a tractable criterion from their general
quasi-stationarity framework~\cite{CV2023}: on a Polish space,
and for a bounded killing rate, the minimal QSD is selected as soon as the
unkilled process satisfies three conditions, namely a Foster--Lyapunov drift
whose rate exceeds the killing rate, the small-set property for compact sets,
and a uniform comparison of survival probabilities on compact sets.

However, the Bellman--Harris process does not fit that framework. Its lifetimes
are not exponential, so the population size is no longer Markov, and
Markovianity is recovered only after lifting the dynamics to the space of age
configurations~\cite{Tran2008,Chauvin1986}. This lift has a decisive geometric
consequence: bounding the population size no longer bounds the state. The
sublevel sets of any function of the mass alone contain the entire singleton
stratum, which is closed, unbounded, and has no limit point. Consequently, the
powers of the population size that serve as Lyapunov functions in the
Galton--Watson case are no longer norm-like, and coercivity must be secured in
the mass direction and in the age direction at once. Extinction, on the other
hand, raises no such difficulty: it can occur only through the childless death
of a lone individual, so the killing rate is bounded, and the lifted process
falls under soft killing~\cite{BHM2000}.

In this paper, we remove both obstacles. We construct a Lyapunov function that
is norm-like on the lifted space, coercive in the mass direction and in the age
direction alike. We then establish, on its sublevel sets, the Doeblin
minorisation that a discrete state space would obtain from irreducibility and a
diffusion from Harnack inequalities, neither route being open here. Finally, we
prove that the associated Fleming--Viot system selects the Yaglom limit at a
polynomial rate in the number of particles. Three estimates, each specific to the measure-valued setting, carry the
analysis.
\begin{enumerate}
\item A Foster--Lyapunov drift (Lemma~\ref{lem:lyapunov}): its test function adds two terms of different nature: a power of the reproductive value, which contracts the mass at the Malthusian rate, and an exponential age weight, which
secures coercivity in the age direction. The superposition is additive rather than multiplicative, so the age weight enters linearly and generates no convexity remainder under the generator. The analysis also shows that, at large ages, the unkilled process escapes not at the limiting hazard rate, but at that rate weighted by the probability of a non-zero offspring: a lone individual leaves the singleton stratum only through a birth event.
\item An explicit Doeblin minorisation (Lemma~\ref{lem:smallsets}), whose reference measure is carried by the singleton stratum, the very stratum that obstructed coercivity. From any sublevel set of the Lyapunov function, one drives the population down to a single individual of prescribed age in two phases: extinction of the extra lineages, then resetting of the age through a single birth.
\item A uniform comparison of survival probabilities on compact sets (Lemma~\ref{lem:survival}), obtained by comparing lineages age by age in both directions, without recourse to renewal asymptotics.
\end{enumerate}

From these estimates, our main result follows: the Fleming--Viot system selects the Yaglom limit (Theorem~\ref{thm:convergence}) and does so quantitatively. For every bounded test function, the mean discrepancy between the stationary empirical measure and the Yaglom limit is bounded by a negative power of the number of particles, so we obtain convergence in probability; the exponent, expressed in terms of the killing rate and the contraction rate of the conditioned semigroup, is discussed in Remark~\ref{rem:rate}. Along the way, the argument also re-derives the Yaglom limit itself in total variation, at an exponential rate, and identifies its decay rate as the Malthusian parameter (Remark~\ref{rem:yaglom}). A consistency result accompanies it (Theorem~\ref{thm:consistency}), independent of~\cite{CV2021}: every deterministic limit point of the stationary empirical measure is a QSD, whose
decay rate equals the mean killing intensity under that distribution.

To our knowledge, this is the first Fleming--Viot approximation of a Yaglom
limit for an age-dependent branching process. The lift is forced by the loss of
the Markov property, and the drift condition it brings with it is no technical
convenience either: we prove it necessary, and not merely sufficient
(Proposition~\ref{prop:sharp}). It couples the lifetime law to the offspring law
and restricts the mean offspring number to $m>\tfrac12$, a restriction the
Markovian criterion of~\cite{CV2021} does not carry; Remark~\ref{rem:markov}
exhibits a continuous-time Galton--Watson process that it excludes and that
criterion covers. It is thus the price of the measure-valued lift: a feature of
the geometry of $\Mp(\R_+)$, not a defect of the estimates.
The paper is organised as follows. Section~\ref{sec:prelim} constructs the
lifted process, establishes its extended generator, states the assumptions, and
defines the particle system. Section~\ref{sec:main-results} states the three
estimates, proves the necessity of the drift condition, and deduces the two
theorems; the proofs are deferred to the appendices.
Section~\ref{sec:app} illustrates the estimator on a surveillance problem, in
which the Yaglom limit is the null distribution of a change-detection test.
Section~\ref{sec:conclusion} concludes.
\section{Model, assumptions, and the particle system}
\label{sec:prelim}
Let $G$ be a lifetime law on $[0,\infty)$, absolutely continuous with continuous
density $g$ and $G(0)=0$, and let $h(s)=\sum_{n\ge0}p_ns^n$ be the offspring
generating function, with offspring variable $\xi\sim(p_n)_{n\ge0}$, mean
$m:=h'(1)<\infty$ and $p_0:=\Prob(\xi=0)$. In the Bellman--Harris process with
parameters $(G,h)$, each individual lives a lifetime $T\sim G$ and, at death, is
replaced by $\xi$ offspring born at age zero, all lifetimes and offspring
numbers being independent. Write $\bar G:=1-G$, let $\mu:=g/\bar G$ denote the
hazard rate, so that $\bar G(a)=\exp(-\int_0^a\mu)$, and set
\[
\mu^*:=\sup_{a\ge0}\mu(a),\; \mu_\infty:=\liminf_{a\to\infty}\mu(a),
\;\text{so that } \mu_\infty\le\mu^*.
\]
The death intensity $\sum_{i\in\mathcal V_t}\mu(a_i(t))$, where $\mathcal V_t$
is the set of individuals alive at time $t$ and $a_i(t)$ the age of
$i\in\mathcal V_t$, depends on the ages of the living individuals, not on their
number alone, so the population size is not Markov.
Following Tran~\cite{Tran2008}, Markovianity is restored on the space
$\Mp(\R_+)$ of finite point measures, endowed with the narrow topology, and we
set
\[
\eta_t:=\sum_{i\in\mathcal V_t}\delta_{a_i(t)},\;
Z_t:=\langle\eta_t,\ind\rangle,\;
\tau:=\inf\{t>0:\eta_t=\mathbf 0\},\;
E:=\Mp(\R_+)\setminus\{\mathbf 0\}.
\]

The process $(\eta_t)_{t\ge0}$ is strong Markov~\cite[Theorem~2.5]{Tran2008},
and its extended generator in the sense of Meyn and Tweedie~\cite{MT1993} reads
as follows. A measurable $F:E\to\R$ belongs to that extended domain when
there is a measurable $\mathcal LF:E\to\R$ such that
$F(\eta_t)-F(\eta_0)-\int_0^t\mathcal LF(\eta_s)\,ds$ is a local martingale
under $\Prob_\eta$ for every $\eta\in E$; the localisation recorded
after~\eqref{eq:sup-Z} is what places the unbounded $V$ of~\eqref{eq:lyapunov}
in this domain. For $F(\eta)=\Phi(\langle\eta,f\rangle)$ with
$\Phi\in C^1(\R)$ and $f\in C^1(\R_+)$, it reads
\begin{align}
\label{eq:generator}
\mathcal LF(\eta)
&=\Phi'\bigl(\langle\eta,f\rangle\bigr)\langle\eta,f'\rangle \notag\\
&\quad+\int_{\R_+}\!\!\mu(a)\sum_{n\ge0}p_n
\Bigl[\Phi\bigl(\langle\eta,f\rangle-f(a)+nf(0)\bigr)
-\Phi\bigl(\langle\eta,f\rangle\bigr)\Bigr]\eta(da),
\end{align}
individuals aging at unit speed between jumps and an individual of age $a$ being
replaced at a rate of $\mu(a)$ through $\eta\mapsto\eta-\delta_a+\xi\delta_0$.
Throughout, $f$ denotes a function on ages, $f:\R_+\to\R$, and $F$ a
functional on configurations, $F:E\to\R$; functionals on the product space $E^N$
of Section~\ref{sec:prelim} below are denoted by $\Psi$. Taking
$\Phi=\mathrm{id}$ linearises this to
$\mathcal L\langle\eta,f\rangle=\langle\eta,\mathcal Af\rangle$, where
\begin{equation}
\label{eq:generator-linear}
(\mathcal Af)(a):=f'(a)+\mu(a)\bigl(m\,f(0)-f(a)\bigr).
\end{equation}

Two elementary moment bounds for $\eta$ are used repeatedly; the corresponding
bounds for $X$, which differ, are established after~\eqref{eq:LX}. Taking
$f=\ind$ in~\eqref{eq:generator-linear} gives $\mathcal A\ind=(m-1)\mu$, hence
\begin{equation}
\label{eq:LZ}
\mathcal LZ(\eta)=\sum_{i}\mu(a_i)\,(m-1)\ \le\ 0,\; \eta\in E,
\end{equation}
by {\rm(H2)}: under $\eta$ the mass is a supermartingale, and it is exactly this
monotonicity that~\eqref{eq:LXZ} will show to fail under $X$. To exploit it we
first rule out explosion. Between jumps the ages grow at unit speed and the
total jump rate at $\eta$ is $\sum_i\mu(a_i)\le\mu^*Z$, so jump times can
accumulate only if $Z$ reaches infinity in finite time. Let
$H_L:=\inf\{t\ge0:Z_t\ge L\}$ for an integer $L>Z_0$. On $[0,H_L)$ the mass is
at most $L-1$ and the jump rate at most $\mu^*(L-1)$, so only finitely many
jumps occur there, and $Z_{H_L}\le L-2+\xi$; since $\E[\xi]=m<\infty$, the
random variable $Z_{t\wedge H_L}$ is integrable, and Dynkin's formula on
$[0,t\wedge H_L]$ gives $\E_\eta[Z_{t\wedge H_L}]\le Z_0$ by~\eqref{eq:LZ}. As
$Z_{H_L}\ge L$ on $\{H_L\le t\}$, Markov's inequality gives
$\Prob_\eta(H_L\le t)\le Z_0/L\to0$: the process $\eta$ is non-explosive, and
Fatou's lemma yields
\begin{equation}
\label{eq:mean-Z}
\E_\eta[Z_t]\ \le\ Z_0,\; t\ge0.
\end{equation}
The mass increases only at births, whose expected rate is
$m\sum_i\mu(a_i)\le m\mu^*Z$, so $\sup_{s\le t}Z_s\le Z_0+B_t$ with $B_t$ the
number of births on $[0,t]$; the same localisation and~\eqref{eq:mean-Z} give
\begin{equation}
\label{eq:sup-Z}
\E_\eta\Bigl[\sup_{s\le t}Z_s\Bigr]\ \le\ Z_0\bigl(1+m\mu^*t\bigr),\; t\ge0.
\end{equation}

Stratifying $\Mp(\R_+)=\bigsqcup_{n\ge0}\{Z=n\}$ with
$\{Z=n\}\cong\R_+^{\,n}/\mathfrak S_n$, one sees that a subset is relatively
compact if and only if masses and ages are bounded on it; the strata are
open and closed, since $\eta\mapsto\langle\eta,\ind\rangle$ is narrowly
continuous and integer-valued. Bounding the population size therefore
does not bound the state, the singleton ray $\{\delta_a:a\ge0\}$ has mass one
throughout and no limit point. Consequently the sublevel sets of
$\eta\mapsto\langle\eta,\ind\rangle^\alpha$, or of any function of the mass
alone, fail to be relatively compact, and the Lyapunov functions used in the
Galton--Watson case are not norm-like here. Supplying a substitute, which
confines masses and ages simultaneously, is the object of
Lemma~\ref{lem:lyapunov}. A branching event takes the mass from $Z$ to $Z-1+n$, which vanishes only if
$Z=1$ and $n=0$, extinction occurs only from a singleton, and only through a
childless death. Reading the rates off~\eqref{eq:generator}, the killing rate is
\begin{equation}
\label{eq:kappa}
\kappa(\eta)=p_0\,\mu(a)\,\ind_{\{\eta=\delta_a\}},
\; \|\kappa\|_\infty=p_0\,\mu^*<\infty \;\text{under {\rm(H4)}},
\end{equation}
where $\|\phi\|_\infty:=\sup_{\eta\in E}|\phi(\eta)|$ denotes the supremum
norm on $E$, used throughout for functions on $E$ as well as, in
Section~\ref{sec:main-results}, for bounded measurable test functions; the
second equality holds because $\kappa$ vanishes off the singleton stratum, so
that the supremum over $E$ reduces to $p_0\sup_{a\ge0}\mu(a)$,
and $\kappa\not\equiv0$, since $p_0\ge1-m>0$ under {\rm(H2)}. Absorption at
$\mathbf 0$, a hard boundary for the population-size process, is thus a
bounded killing rate on the lifted space, and the model enters the
soft-killing framework of~\cite{CV2021}. Let $X$ be the process on $E$ obtained
by deleting the transition $\delta_a\mapsto\mathbf 0$, when a childless death
fires at a singleton, nothing happens, and the individual keeps ageing. This is
the only possible choice, since the killing clock of~\cite{CV2021} is
independent of $X$, any transition performed by $X$ at rate $\kappa$ would be
visible before killing and would alter the law of $\eta$. Then, $\eta$ is $X$
killed at rate $\kappa$, and for every $F$ in the extended domain,
\begin{equation}
\label{eq:LX}
\mathcal L_XF(\eta)=\mathcal LF(\eta)+\kappa(\eta)\bigl(F(\eta)-F(\mathbf 0)\bigr),
\; \eta\in E.
\end{equation}

The criteria of~\cite{CV2021} bear on $\mathcal L_X$,
and \eqref{eq:LX} shows that a drift inequality for $\mathcal L$ does not
transfer to it: under $X$, a lone individual of age $a$ leaves the large-age
region only through a birth event with $n\ge1$, hence at rate $(1-p_0)\mu(a)$
and not $\mu(a)$. The age-confinement rate genuinely available is
$(1-p_0)\mu_\infty$, and it is this quantity, not $\mu_\infty$, that the drift
of Lemma~\ref{lem:lyapunov} must exceed. We record at this point the moment bounds for $X$, which differ from those for
$\eta$ and for which we write $\E^X_\eta$. Deleting the childless death at a
singleton removes the only event that decreases the mass there:
applying~\eqref{eq:LX} to
$Z=\langle\cdot,\ind\rangle$, for which $\mathcal LZ(\eta)=\sum_i\mu(a_i)(m-1)$
by~\eqref{eq:LZ} and $Z(\mathbf 0)=0$,
\begin{equation}
\label{eq:LXZ}
\mathcal L_XZ(\eta)=\sum_{i}\mu(a_i)(m-1)+\kappa(\eta)Z(\eta),
\text{ so that }
\mathcal L_XZ(\delta_a)=\mu(a)\bigl(m-1+p_0\bigr)\ \ge\ 0 ,
\end{equation}
the inequality being $p_0\ge1-m$, with equality if and only if $\xi\in\{0,1\}$
almost surely. The mass drifts upwards along the singleton stratum under $X$, so
$\E^X_\eta[Z_t]$ need no longer be non-increasing, and the monotonicity
of~\eqref{eq:LZ} is unavailable. For the same reason, $X$ is not a
branching process: an individual is replaced at rate $\mu(a)$ when it is not
alone and at rate $(1-p_0)\mu(a)$ when it is. The branching property is
therefore invoked below for $\eta$ only.

What survives is an exponential bound, which is all that is needed. Since
$\kappa\ge0$ and $m<1$, \eqref{eq:LXZ} gives
$\mathcal L_XZ\le\|\kappa\|_\infty Z$ on $E$. With $H_L$ as above, Dynkin's
formula applied to
$(s,\eta)\mapsto e^{-\|\kappa\|_\infty s}Z(\eta)$ on $[0,t\wedge H_L]$, whose
space--time generator
$e^{-\|\kappa\|_\infty s}\bigl(\mathcal L_XZ-\|\kappa\|_\infty Z\bigr)$ is
non-positive, gives
$\E_\eta\bigl[e^{-\|\kappa\|_\infty(t\wedge H_L)}Z_{t\wedge H_L}\bigr]\le Z_0$,
hence $\E_\eta[Z_{t\wedge H_L}]\le Z_0e^{\|\kappa\|_\infty t}$.

As
$Z_{H_L}\ge L$ on $\{H_L\le t\}$, Markov's inequality gives
$\Prob_\eta(H_L\le t)\le Z_0e^{\|\kappa\|_\infty t}/L\to0$ as $L\to\infty$: the
process $X$ is non-explosive. Fatou's lemma then yields
\begin{equation}
\label{eq:mean-Z-X}
\E^X_\eta[Z_t]\ \le\ Z_0\,e^{\|\kappa\|_\infty t},\; t\ge0.
\end{equation}
Off the singletons a branching event occurs at rate $\mu(a_i)$ per individual
with offspring mean $m$; on them it occurs at the reduced rate $(1-p_0)\mu(a)$
with the raised mean $\E[\xi\mid\xi\ge1]=m/(1-p_0)$. The two compensate exactly,
so the expected birth rate is $m\sum_i\mu(a_i)\le m\mu^*Z$ in both cases. As $Z$
increases only at births, \eqref{eq:mean-Z-X} and
$\|\kappa\|_\infty=p_0\mu^*$ of~\eqref{eq:kappa} give
\begin{equation}
\label{eq:sup-Z-X}
\E^X_\eta\Bigl[\sup_{s\le t}Z_s\Bigr]
\ \le\ Z_0\Bigl(1+\frac{m}{p_0}\bigl(e^{\|\kappa\|_\infty t}-1\bigr)\Bigr),
\; t\ge0.
\end{equation}

Identity~\eqref{eq:generator-linear} extends to $f$ of at most exponential
growth, the regime of the age weight $\rho(a)=e^{\beta a}$ used below: ages grow
at unit speed, so $\rho(a_i(t))\le e^{\beta t}\rho(a_i(0))$ for the individuals
alive at time $0$ and $\rho(a_i(t))\le e^{\beta t}$ for those born later, whence
$\langle\eta_t,\rho\rangle
\le e^{\beta t}\bigl(\langle\eta_0,\rho\rangle+\sup_{s\le t}Z_s\bigr)$
is integrable, by~\eqref{eq:sup-Z} under $\eta$ and by~\eqref{eq:sup-Z-X} under
$X$; localisation at the exit times of the sublevel sets of the Lyapunov
function, together with the change-of-variable formula of
Tran~\cite[Lemma~2.6]{Tran2008} and Lemma~\ref{lem:malthus}(i), then applies. Since extinction is almost sure, the long-time behaviour is described
conditionally on survival, through the sub-Markov killed semigroup
$P_t^{\kappa}\varphi(\eta):=\E_\eta[\varphi(\eta_t)\ind_{\{\tau>t\}}]$,
$\eta\in E$, whose generator we write $\mathcal L^{\kappa}$ and which is
dominated by the semigroup $P_t\varphi(\eta):=\E_\eta[\varphi(\eta_t)]$ of the
unkilled process $X$. We write $P_t(\eta,\cdot)$ and $P_t^{\kappa}(\eta,\cdot)$
for the associated kernels, so that $P_t\varphi$ and $\nu P_t^{\kappa}$ denote
their action on bounded measurable functions and on measures respectively. The
long-time behaviour is governed by the Malthusian parameter $\lambda_0$, the
unique root of
\begin{equation}
\label{eq:malthus}
m\int_0^\infty e^{\lambda t}\,dG(t)=1.
\end{equation}
A probability measure $\nu$ on $E$ is quasi-stationary with decay rate
$\lambda>0$, a scalar, if $\nu P_t^{\kappa}=e^{-\lambda t}\nu$ for all $t\ge0$;
the letter $\lambda$ is otherwise used only as the unknown
of~\eqref{eq:malthus}. Such measures need not be unique: for Galton--Watson
processes they form a one-parameter family~\cite{SVJ1966,Maillard2018}, and the
same multiplicity is to be expected here. It is this multiplicity that makes the
selection question non-trivial.
\begin{remark}
\label{rem:minimal}
We call $\nuY$ the Yaglom limit, in the classical sense of the weak
limit of the conditional laws $\Prob_{\delta_0}(\eta_t\in\cdot\mid\tau>t)$, and
this is what Theorem~\ref{thm:convergence}(ii) establishes. Two senses of
minimality must be distinguished. In the sense of~\cite{CV2021,CV2023}, where
the minimal QSD is by definition the one whose domain of attraction contains
every Dirac mass, $\nuY$ is minimal: the function $V$
of~\eqref{eq:lyapunov} is finite at every $\eta\in E$, so
Theorem~\ref{thm:convergence}(i) applies to $\pi_0=\delta_\eta$ for every
$\eta\in E$. What we do not claim is minimality in the ordering sense: our
results give uniqueness within the class $\{\nu:\nu(V)<\infty\}$, not the
inequality $\lambda_0\ge\lambda$ for every other quasi-stationary decay rate
$\lambda$, which we leave open.
\end{remark}

We now list the conditions under which the results of this paper are
established. They bear on the lifetime law $G$ and on the offspring law $h$.
\begin{itemize}\itemsep2pt
\item[(H1)] $G$ is absolutely continuous with continuous density $g$, $G(0)=0$,
$\supp(G)=[0,\infty)$, and $\inf_{[\ell,L]}g>0$ for all $0<\ell<L<\infty$.
\item[(H2)] $m<1$.
\item[(H3)] $p_1=\Prob(\xi=1)>0$.
\item[(H4)] $0<\mu_\infty$ and $\mu^*<\infty$.
\item[(H5)] $p_0\,\mu^*<(1-p_0)\,\mu_\infty$.
\item[(H6)] $\sum_{n\ge0}n^\alpha p_n<\infty$ for some $\alpha>1$ with
$\alpha\,\lambda_0>p_0\,\mu^*$.
\end{itemize}
One such scalar $\alpha>1$ is fixed once and for all, before the particle system
is introduced; every constant below may depend on it, and none depends on $N$.
Since $\mu^*\ge\mu_\infty$, {\rm(H5)} forces $p_0<\tfrac12$, hence $m>\tfrac12$.
In the notation of~\eqref{eq:kappa}, {\rm(H5)} reads
$\|\kappa\|_\infty<(1-p_0)\mu_\infty$: the killing rate is beaten by the age-confinement rate. This is the form in which it will be used throughout, and
{\rm(H6)} likewise reads $\alpha\lambda_0>\|\kappa\|_\infty$. We record for later use that $p_0\ge 1-m>0$ under {\rm(H2)}, since $\E[\xi]\ge\Prob(\xi\ge1)=1-p_0$; consequently $(1-p_0)\mu_\infty\le(1-p_0)\mu^*<\mu^*$, an inequality used in the proof of Lemma~\ref{lem:malthus}(ii).

The Malthusian parameter and the reproductive value, which govern the two terms of the Lyapunov function of Section~\ref{sec:main-results}, are classical objects~\cite[Chapter~IV]{AthreyaNey1972}; we record them in the form used
below.

\begin{lemma}
\label{lem:malthus}
Assume {\rm(H1)}, {\rm(H2)}, {\rm(H4)}, {\rm(H5)}. Then \emph{(i)}
$\int_0^\infty e^{\theta a}\,dG(a)<\infty$ for every scalar $\theta<\mu_\infty$;
\emph{(ii)} equation~\eqref{eq:malthus} has a unique root $\lambda_0$, a scalar
with $0<\lambda_0<(1-p_0)\mu_\infty$; \emph{(iii)} the reproductive value
\begin{equation}
\label{eq:reproductive-value}
v(a):=m\,\E\bigl[e^{\lambda_0(T-a)}\mid T>a\bigr]
=\frac{m\,e^{-\lambda_0a}}{\bar G(a)}\int_a^\infty e^{\lambda_0s}\,dG(s)
\end{equation}
is a function $v:\R_+\to\R_+$ belonging to $C_b^1(\R_+)$, and it satisfies
$v(0)=1$, $m\le v\le v_{\max}<\infty$ for a finite scalar $v_{\max}$, and
$\mathcal Av=-\lambda_0v$, so that
$\mathcal L\langle\eta,v\rangle=-\lambda_0\langle\eta,v\rangle$ for every
$\eta\in E$.
\end{lemma}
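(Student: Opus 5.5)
Part (i) comes straight from the hazard representation. Fix $\theta<\mu_\infty$ and pick $\theta<\theta'<\mu_\infty$. By definition of $\mu_\infty=\liminf\mu$ there is $a_0$ with $\mu(a)\ge\theta'$ for $a\ge a_0$. Then $\bar G(a)\le \bar G(a_0)e^{-\theta'(a-a_0)}$ for $a\ge a_0$. Writing $dG=g\,da=\mu\bar G\,da$ and $\mu\le\mu^*$ by (H4), we get
\[
\int_{a_0}^\infty e^{\theta a}g(a)\,da\le \mu^*\bar G(a_0)e^{\theta'a_0}\int_{a_0}^\infty e^{-(\theta'-\theta)a}\,da<\infty .
\]
The integral over $[0,a_0]$ is trivially finite, so $\int e^{\theta a}\,dG(a)<\infty$.

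For (ii), set $\Lambda(\lambda):=m\int_0^\infty e^{\lambda t}\,dG(t)$, which is finite on $(-\infty,\mu_\infty)$ by (i). There it is continuous, by dominated convergence on compacts of that interval. It is strictly increasing, since $G$ has support $[0,\infty)$ by (H1). Moreover $\Lambda(0)=m<1$ by (H2). So it suffices to exhibit $\lambda_1<(1-p_0)\mu_\infty$ with $\Lambda(\lambda_1)\ge1$. Note $\lambda_1$ must be taken in $(0,(1-p_0)\mu_\infty)$, where $\Lambda$ is finite because $(1-p_0)\mu_\infty<\mu_\infty$. This is the delicate step, and it is where (H5) and the inequality $(1-p_0)\mu_\infty<\mu^*$ recorded above must enter.

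My first attempt at (ii) uses the lower bound $\mu\le\mu^*$, which gives $\bar G(t)\ge e^{-\mu^*t}$. Integrating by parts,
\[
\int_0^\infty e^{\lambda t}\,dG(t)=1+\lambda\int_0^\infty e^{\lambda t}\bar G(t)\,dt\ \ge\ 1+\frac{\lambda}{\mu^*-\lambda}=\frac{\mu^*}{\mu^*-\lambda}\qquad(\lambda<\mu^*).
\]
Hence $\Lambda(\lambda)\ge1$ as soon as $\lambda\ge(1-m)\mu^*$. It then remains to check $(1-m)\mu^*<(1-p_0)\mu_\infty$. From $p_0\ge1-m$ we get $(1-m)\mu^*\le p_0\mu^*$, and (H5) gives $p_0\mu^*<(1-p_0)\mu_\infty$. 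The root therefore exists, is unique by strict monotonicity, and lies in $\bigl(0,(1-m)\mu^*\bigr]\subset\bigl(0,(1-p_0)\mu_\infty\bigr)$. I expect this to close the argument; if the constants were tighter, I would instead bound $\Lambda$ from below on $[a_0,\infty)$ using $\mu\to\mu_\infty$.

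For (iii), set $w(a):=\int_a^\infty e^{\lambda_0 s}\,dG(s)$, which is finite by (i)–(ii). Then $v=me^{-\lambda_0 a}w/\bar G$, and $v(0)=m\,w(0)=\Lambda(\lambda_0)=1$. The bound $v\ge m$ holds because $T>a$ gives $e^{\lambda_0(T-a)}\ge1$.

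Differentiability follows from continuity of $g$ and from $\bar G>0$ (full support). Differentiating $\log v$ gives
\[
v'=-\lambda_0v+\mu v-m\mu=-\lambda_0 v-\mu(m-v),
\]
which is exactly $\mathcal Av=-\lambda_0 v$, because $v(0)=1$ makes $m\,v(0)=m$. Summing over atoms via~\eqref{eq:generator-linear} then yields $\mathcal L\langle\eta,v\rangle=-\lambda_0\langle\eta,v\rangle$.

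For the upper bound, choose $\theta'\in(\lambda_0,\mu_\infty)$ and $a_0$ as in (i). For $a\ge a_0$, the conditional law of $T-a$ given $T>a$ has tail $\bar G(a+u)/\bar G(a)\le e^{-\theta' u}$. This gives $\E[e^{\lambda_0(T-a)}\mid T>a]\le \theta'/(\theta'-\lambda_0)$. On $[0,a_0]$, $v$ is continuous, hence bounded, so $v_{\max}<\infty$. Finally $v'$ is bounded because $|v'|\le\lambda_0v_{\max}+\mu^*(v_{\max}+m)$, and it is continuous; thus $v\in C_b^1(\R_+)$.
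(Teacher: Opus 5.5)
Your proof is correct and follows the paper's argument essentially step by step: the hazard-tail bound for (i), the comparison of $T$ with an $\mathrm{Exp}(\mu^*)$ variable combined with $p_0\ge1-m$ and (H5) for (ii), and the ODE $v'=(\mu-\lambda_0)v-m\mu$ with a residual-tail bound for (iii); in (ii) you evaluate at $(1-m)\mu^*$ rather than at $(1-p_0)\mu_\infty$, which is equivalent. The remaining differences are cosmetic: you get $v\ge m$ directly from $e^{\lambda_0(T-a)}\ge1$ on $\{T>a\}$, and you bound $v$ on $[0,a_0]$ by continuity rather than by the paper's explicit constant $e^{\mu^*A_\varsigma}$.
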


\begin{proof}
See~\ref{app:malthus}.
\end{proof}

The upper bound $\lambda_0<(1-p_0)\mu_\infty$ in (ii) is a consequence
of {\rm(H5)}, not a separate Malthusian hypothesis; it places $\lambda_0$
strictly below $\mu_\infty$, which is what makes the reproductive value bounded
in~(iii). Two families of lifetime laws show that {\rm(H5)} is neither vacuous nor
automatic. For a Gamma law $\Gamma(k,\vartheta)$ with shape $k\ge1$ and scale
$\vartheta>0$ the hazard rate increases to $1/\vartheta$, so
$\mu^*=\mu_\infty=1/\vartheta$ and {\rm(H5)} reduces to $p_0<\tfrac12$; here
$\E[e^{\lambda T}]=(1-\lambda\vartheta)^{-k}$ for $\lambda<1/\vartheta$, whence
$\lambda_0=\vartheta^{-1}(1-m^{1/k})$ explicitly. For a hyperexponential law
$\bar G(a)=\sum_{i=1}^rq_ie^{-\mu_ia}$ with $q_i>0$, $\sum_iq_i=1$ and
$0<\mu_1<\cdots<\mu_r$, the hazard rate decreases from $\mu^*=\sum_iq_i\mu_i$ to
$\mu_\infty=\mu_1$, and {\rm(H5)} becomes
$p_0<\mu_1/(\mu_1+\sum_iq_i\mu_i)$, which fails as soon as the population is
both strongly dispersed in its lifetimes and strongly prone to childless death.
Assumption {\rm(H4)} excludes the Weibull laws of shape $k\ne1$: for $k>1$ the
hazard rate increases without bound, so that $\mu^*=\infty$, while for $k<1$ it
is unbounded near the origin and decreases to zero, so that $\mu^*=\infty$ and
$\mu_\infty=0$; in either case {\rm(H4)} fails. It also excludes, through
$\mu_\infty>0$, the log-normal law, which has no exponential moment, so
that~\eqref{eq:malthus} has no root at all.

\medskip
Fix an integer $N\ge2$, the particles
$\mathbf X_t^N=(X_t^{N,1},\dots,X_t^{N,N})\in E^N$ evolve as independent copies
of the unkilled process $X$ and, at rate $\kappa(X_t^{N,i})$, particle $i$ is
replaced by a copy of one drawn uniformly among the $N-1$ others. Each particle
is a full age configuration, so the system lives on $E^N$, a product of copies
of a measure-valued space rather than of a discrete or Euclidean one. Its
extended generator is
\begin{equation}
\label{eq:gen-N}
\mathcal L^{(N)}\Psi(\mathbf x)
=\sum_{i=1}^N\mathcal L_X^{(i)}\Psi(\mathbf x)
+\sum_{i=1}^N\kappa(x^i)\,\frac1{N-1}
\sum_{\substack{j=1\\ j\ne i}}^{N}
\bigl[\Psi(\mathbf x^{\,i\leftarrow j})-\Psi(\mathbf x)\bigr],
\; \mathbf x\in E^N,
\end{equation}
for every functional $\Psi:E^N\to\R$ in the extended domain of
$\mathcal L^{(N)}$, with $\mathcal L_X^{(i)}$ acting on the $i$-th coordinate
through~\eqref{eq:LX} and $\mathbf x^{\,i\leftarrow j}$ denoting $\mathbf x$
with $x^i$ replaced by a copy of $x^j$; the free motion is that of $X$, the
transition of $\eta$ to $\mathbf 0$ being precisely the one that the rebirth
replaces. Since $\|\kappa\|_\infty<\infty$, the total rebirth rate is at most
$N\|\kappa\|_\infty$, so the rebirth times do not accumulate:
$(\mathbf X_t^N)_{t\ge0}$ is well defined, non-explosive and strong Markov on
$E^N$, and remains exchangeable if its initial law is. The object of interest is
the empirical measure
\begin{equation}
\label{eq:empirical}
m_t^N:=\frac1N\sum_{i=1}^N\delta_{X_t^{N,i}}\ \in\ \mathcal P(E),
\end{equation}
a random probability measure on the space of age configurations; we write
$\mathcal X^N$ for its limit in distribution as $t\to\infty$ at fixed $N$, whose
existence is part of Theorem~\ref{thm:consistency}. Three levels are in play and
should not be conflated: $\eta\in E$ is a point measure on $\R_+$; $m_t^N$ and
$\mathcal X^N$ are random elements of $\mathcal P(E)$, that is, random measures
on a space of measures; and the law of $\mathcal X^N$ is an element of
$\mathcal P(\mathcal P(E))$, which is the space in which the convergence just
stated, and the convergence to $\delta_{\nu^*}$ of
Theorem~\ref{thm:consistency}, both take place. Each of $E$, $\mathcal P(E)$ and
$\mathcal P(\mathcal P(E))$ is Polish, the last two under weak convergence. The
random measure $\mathcal X^N$ remains random at every fixed $N$; a deterministic
element of $\mathcal P(E)$ is recovered only in the further limit $N\to\infty$
(Theorem~\ref{thm:convergence}(iii)).

\section{Main results}
\label{sec:main-results}

The criteria of~\cite{CV2021} require three things of the unkilled
process $X$ on the Polish space $E$: a Foster--Lyapunov function $V\ge1$ with
relatively compact sublevel sets and drift rate exceeding $\|\kappa\|_\infty$;
the small-set property for compact sets; and a uniform comparison of survival
probabilities on compact sets. None of the three is automatic on $\Mp(\R_+)$,
and we establish them in turn. The first is a Foster--Lyapunov drift. Its test function superposes a power of the reproductive
value, which contracts the mass at the Malthusian rate, and an exponential
weight in the age, which confines the ages; the superposition is additive, so
that the age term enters linearly and produces no convexity remainder. The
following lemma states it, together with the description of the sublevel sets
that makes $V$ norm-like on the lifted space.

\begin{lemma}
\label{lem:lyapunov}
Assume {\rm(H1)}--{\rm(H6)}, let $\lambda_0$ and $v$ be as in
Lemma~\ref{lem:malthus} and $\alpha$ as in {\rm(H6)}, and fix scalars
$\beta\in\bigl(0,\ (1-p_0)\mu_\infty-\|\kappa\|_\infty\bigr)$,
$\varepsilon'\in\bigl(0,\ (1-p_0)\mu_\infty-\|\kappa\|_\infty-\beta\bigr)$ and
$\theta\in\bigl(0,\ \alpha\lambda_0-\|\kappa\|_\infty\bigr)$,
these ranges being non-empty by {\rm(H5)} and {\rm(H6)}. Set
$\rho:\R_+\to[1,\infty)$, $\rho(a):=e^{\beta a}$, and, for any scalar
$\varepsilon\in(0,1]$, let $V:\Mp(\R_+)\to[1,\infty)$ be given by
\begin{equation}
\label{eq:lyapunov}
V(\eta):=1+\langle\eta,v\rangle^{\alpha}+\varepsilon\,\langle\eta,\rho\rangle,
\; V(\mathbf 0):=1.
\end{equation}
Then there is a finite constant $C$, depending only on $(G,h)$ and on the
parameters $(\alpha,\theta,\varepsilon,\varepsilon',\beta)$ just fixed, and in
particular neither on $\eta$ nor on $N$, such that
\begin{equation}
\label{eq:lyap-global}
\mathcal L_XV(\eta)\ \le\ -\lambda_1V(\eta)+C,\; \eta\in E,
\end{equation}
where $\lambda_1:=\min\bigl(\alpha\lambda_0-\theta,\
(1-p_0)\mu_\infty-\varepsilon'-\beta\bigr)$ satisfies
$\lambda_1>\|\kappa\|_\infty$.
Moreover $V$ is bounded on compact sets, and for every scalar $L\ge1$ the
sublevel set $K_L:=\{V\le L\}$ is contained in the set of configurations
carrying at most $R_L:=\lfloor L^{1/\alpha}/m\rfloor$ individuals, an integer,
all of age at most $A_L:=\beta^{-1}\log(L/\varepsilon)$, a non-negative scalar
since $\varepsilon\le1\le L$; in particular $K_L$ is relatively compact in the
narrow topology, and every compact subset of $E$ is contained in some $K_L$.
\end{lemma}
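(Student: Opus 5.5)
We plan to compute $\mathcal L_XV$ term by term through \eqref{eq:LX}, which gives
$\mathcal L_XV=\mathcal LV+\kappa(\eta)\bigl(V(\eta)-1\bigr)$ since $V(\mathbf 0)=1$. We treat the mass term $W(\eta):=\langle\eta,v\rangle^\alpha$ and the age term $\langle\eta,\rho\rangle$ separately; the additive structure of $V$ makes this legitimate. Write $x:=\langle\eta,v\rangle$, and note that $mZ\le x\le v_{\max}Z$ by Lemma~\ref{lem:malthus}(iii).

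\emph{Mass term.} We apply \eqref{eq:generator} with $\Phi(y)=y^\alpha$ and $f=v$, recalling $v(0)=1$. We add and subtract the linearisation $\alpha x^{\alpha-1}\langle\eta,\mathcal Av\rangle=-\alpha\lambda_0x^\alpha$, which uses $\mathcal Av=-\lambda_0v$. This gives
\[
\mathcal LW(\eta)=-\alpha\lambda_0x^\alpha+\int\mu(a)\,\E\Bigl[(x+\xi-v(a))^\alpha-x^\alpha-\alpha x^{\alpha-1}(\xi-v(a))\Bigr]\eta(da).
\]
We may reduce to $\alpha\in(1,2]$, since (H6) is preserved on lowering $\alpha$ only if $\alpha\lambda_0>\|\kappa\|_\infty$ survives. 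For $\alpha>2$ we would instead use the Taylor bound $C(|y|^\alpha+x^{\alpha-2}y^2)$. For $\alpha\in(1,2]$ the remainder is bounded by $C|\xi-v(a)|^\alpha$, which is integrable by (H6) and the boundedness of $v$. The whole integral is therefore at most $C\mu^*Z\le C'x$, which is $o(x^\alpha)$. Hence $\mathcal LW\le-(\alpha\lambda_0-\theta)W+C_\theta$. The killing correction $\kappa W$ lives only on singletons, where $W=v(a)^\alpha\le v_{\max}^\alpha$, so it is bounded.

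\emph{Age term.} The identity $\mathcal L\langle\eta,\rho\rangle=\langle\eta,\mathcal A\rho\rangle$ extends to exponential $\rho$ by the localisation recorded after \eqref{eq:sup-Z-X}. It gives $\mathcal A\rho(a)=\beta\rho(a)+\mu(a)\bigl(m-\rho(a)\bigr)$. The key point is to pair the killing correction $\varepsilon\kappa\rho(a)$ on a singleton with this term. At $\eta=\delta_a$ the age coefficient becomes $\beta-(1-p_0)\mu(a)$. Off singletons each individual carries the coefficient $\beta-\mu(a)$, which is smaller. Choose $a_0$ with $\mu(a)\ge\mu_\infty-\varepsilon'/(1-p_0)$ for $a\ge a_0$, using the $\liminf$ definition. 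Individuals of age $\ge a_0$ then contribute at most $-\bigl((1-p_0)\mu_\infty-\varepsilon'-\beta\bigr)\varepsilon\rho(a_i)$. Individuals of age $<a_0$, together with the $m\mu$ terms, contribute at most $C(1+\mu^*)e^{\beta a_0}Z\le C''x$. Finally, $\varepsilon\kappa(\eta)\langle\eta,\rho\rangle$ is exactly the part already absorbed, and the $\kappa\cdot1$ and $-\lambda_1\cdot1$ terms are constants.

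\emph{Assembling the bounds.} Summing the two estimates, every linear-in-$x$ error is dominated by $\tfrac{\theta}{2}x^\alpha+C$, after shrinking $\theta$ by a harmless reparametrisation. This yields \eqref{eq:lyap-global} with the stated $\lambda_1$. The inequality $\lambda_1>\|\kappa\|_\infty$ follows from the choice of ranges for $\theta$, $\beta$ and $\varepsilon'$.

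\emph{Sublevel sets.} If $V(\eta)\le L$, then $x^\alpha\le L$ and $x\ge mZ$ give $Z\le L^{1/\alpha}/m$. Likewise $\varepsilon e^{\beta a_i}\le L$ gives $a_i\le A_L$. Relative compactness then follows from the characterisation of relatively compact sets via the stratification. Conversely, a compact set has bounded masses and ages, so $V$ is bounded on it by $1+(v_{\max}R)^\alpha+\varepsilon Re^{\beta A}$.

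The step we expect to be the main obstacle is the convexity remainder of $W$ combined with the age bookkeeping on the singleton stratum. There one must check that only the rate $(1-p_0)\mu$, not $\mu$, is available. One must also check that the unbounded jump $\xi$ is controlled solely by the $\alpha$-moment of (H6), so that the remainder stays of lower order than $x^\alpha$.
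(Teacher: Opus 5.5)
Your proposal is correct and follows the paper's proof essentially step for step. It uses the same split of $\mathcal L_XV$ into a mass bracket and an age bracket, the same convexity remainder controlled by Taylor bounds and the $\alpha$-moment of (H6), and the same pairing of the singleton killing correction with $\mathcal A\rho$ to get the coefficient $\beta-(1-p_0)\mu(a)$. The Young absorption and the sublevel-set description also match. One correction: you cannot reduce to $\alpha\in(1,2]$, because $\alpha$ is fixed inside $V$ and the calibrated example has $\alpha^*=4.69$. The $\alpha>2$ branch is therefore genuinely needed; there the remainder is $O(x+x^{\alpha-1})$ rather than $O(x)$, which is still $o(x^\alpha)$ and is absorbed by Young's inequality exactly as you intend.
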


\begin{proof}
See~\ref{app:lyapunov}.
\end{proof}

The rate $(1-p_0)\mu_\infty$ appearing in $\lambda_1$ is not an artefact of this
particular choice of $V$. On the lifted space it is the largest drift rate any
Lyapunov function can achieve, and the requirement
$\lambda_1>\|\kappa\|_\infty$ of~\cite{CV2021} then reduces exactly
to {\rm(H5)}. This is the content of the next proposition, which shows the
drift condition to be necessary and not merely sufficient.

\begin{proposition}
\label{prop:sharp}
Assume {\rm(H2)} and {\rm(H4)}. Let $V:\Mp(\R_+)\to[1,\infty)$ belong to the
extended domain of $\mathcal L_X$, have relatively compact sublevel sets, and be
such that the scalar $B_V:=\sum_{n\ge1}p_n\,V(n\delta_0)$ is finite and the
function $W:\R_+\to[1,\infty)$, $W(a):=V(\delta_a)$, is $C^1$ and non-decreasing
for large ages. If $\mathcal L_XV\le-\lambda_1V+C$ on $E$, for some finite constant $C$ and some scalar $\lambda_1>\|\kappa\|_\infty$, then {\rm(H5)} holds.
Equivalently, if {\rm(H5)} fails then no such $V$ exists, and the criteria
of~\cite{CV2021} are unavailable on $\Mp(\R_+)$.
\end{proposition}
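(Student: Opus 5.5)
The plan is to evaluate the drift inequality along the singleton ray $\{\delta_a:a\ge0\}$, where the generator of $X$ is explicit, and to extract from it a differential inequality for $W$ that forces $(1-p_0)\mu_\infty\ge\lambda_1$. At $\eta=\delta_a$, the formulas \eqref{eq:generator} and \eqref{eq:LX} give, since the transition $\delta_a\mapsto\mathbf 0$ has been deleted and the remaining branching events lead to $n\delta_0$ with $n\ge1$,
\[
\mathcal L_XV(\delta_a)=W'(a)+\mu(a)\sum_{n\ge1}p_n\bigl[V(n\delta_0)-W(a)\bigr]
=W'(a)+\mu(a)B_V-(1-p_0)\mu(a)W(a).
\]
For the first term I would use that $W$ is $C^1$, so the drift of $V$ along the deterministic ageing of a lone individual between jumps is $W'(a)$; this identification only needs the extended-domain property applied up to the first jump time from $\delta_a$. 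Combining with the hypothesis $\mathcal L_XV\le-\lambda_1V+C$ yields
\[
W'(a)\le\bigl((1-p_0)\mu(a)-\lambda_1\bigr)W(a)+C-\mu(a)B_V\le\bigl((1-p_0)\mu(a)-\lambda_1\bigr)W(a)+C .
\]

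Next I use coercivity. The singleton ray has no limit point, and sublevel sets of $V$ are relatively compact, so $W(a)\to\infty$ as $a\to\infty$; otherwise infinitely many $\delta_{a_k}$ with $a_k\to\infty$ would lie in one sublevel set. Since $W$ is non-decreasing for large ages, $W'(a)\ge0$ there. Dividing by $W(a)$ gives, for large $a$,
\[
0\le (1-p_0)\mu(a)-\lambda_1+\frac{C}{W(a)}.
\]
Along a sequence $a_k\to\infty$ realising $\mu(a_k)\to\mu_\infty=\liminf\mu$, the last term vanishes, and hence $\lambda_1\le(1-p_0)\mu_\infty$. Since $\lambda_1>\|\kappa\|_\infty=p_0\mu^*$ by \eqref{eq:kappa} under (H4), this gives $p_0\mu^*<(1-p_0)\mu_\infty$, which is (H5). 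Assumption (H2) is used only to make sure $\kappa$ and $X$ are as described; $\mu^*<\infty$ keeps $\mu(a)B_V$ finite.

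The step I expect to be most delicate is the rigorous identification of $\mathcal L_XV(\delta_a)$ with the displayed pointwise expression, because $V$ is only assumed to lie in the extended domain, where $\mathcal L_XV$ is defined up to null sets of occupation time. I would handle it by starting $X$ at $\delta_a$ and working on the interval before the first jump. That jump occurs at rate $(1-p_0)\mu(a+s)$, so the path is $\delta_{a+s}$ up to an exponential-type time. The local martingale property, integrated over short times $t\downarrow0$, then identifies $\int_0^t\mathcal L_XV(\delta_{a+s})\,ds$ with the integral of the displayed expression, using continuity of $\mu$ (from (H1)) and $C^1$ regularity of $W$. This gives the inequality for Lebesgue-almost every $a$, which suffices, because the $\liminf$ argument can be run along almost every large $a$ where $\mu$ is continuous.
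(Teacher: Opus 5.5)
Your proposal is correct and follows the paper's proof essentially step for step. You evaluate the drift at $\delta_a$ to get $W'(a)-(1-p_0)\mu(a)W(a)+\mu(a)B_V\le-\lambda_1W(a)+C$, discard $W'\ge0$ and $\mu B_V\ge0$, use relative compactness of sublevel sets to force $W(a)\to\infty$, and pass to the limit along $a_k$ with $\mu(a_k)\to\mu_\infty$ to get $\lambda_1\le(1-p_0)\mu_\infty$, then conclude with $\lambda_1>p_0\mu^*$. Your extra care in identifying $\mathcal L_XV(\delta_a)$ inside the extended domain, which the paper simply asserts, is a legitimate refinement, and the almost-everywhere caveat it produces is harmless because both sides of the resulting inequality are continuous in $a$.
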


\begin{proof}
Under $X$ a lone individual of age $a$ ages at unit speed and, at rate $\mu(a)$,
is replaced by $n$ offspring; the childless event $n=0$ has been deleted, so it
leaves the singleton stratum only through a birth, and $\sum_{n\ge1}p_n=1-p_0$.
Hence $\mathcal L_XV(\delta_a)=W'(a)-(1-p_0)\mu(a)W(a)+\mu(a)B_V$, and the drift
inequality, evaluated at $\eta=\delta_a$, reads
\begin{equation*}  
[\lambda_1-(1-p_0)\mu(a)]W(a)\le C-W'(a)-\mu(a)B_V\le C\end{equation*} 
for every $a$ large enough, the last step using $W'\ge0$ at large ages and $B_V\ge0$. The sublevel set $\{V\le L\}$ is relatively compact, hence carries bounded ages,
so it meets the singleton stratum in a bounded range of ages; therefore
$W(a)\to\infty$. Suppose $\lambda_1>(1-p_0)\mu_\infty$. Choosing
$a_k\to\infty$ with $\mu(a_k)\to\mu_\infty$ makes the bracket above bounded
below by $\varsigma>0$ for $k$ large, whence
$\varsigma\,W(a_k)\le C$, contradicting $W(a_k)\to\infty$. Consequently
$\lambda_1\le(1-p_0)\mu_\infty$, and together with
$\lambda_1>\|\kappa\|_\infty=p_0\mu^*$ this gives
$p_0\mu^*<(1-p_0)\mu_\infty$.
\end{proof}

The class is not empty, the function~\eqref{eq:lyapunov} lies in it, since
$v(0)=\rho(0)=1$ gives $V(n\delta_0)=1+n^\alpha+\varepsilon n$, whence
$B_V\le1+\E[\xi^\alpha]+\varepsilon m<\infty$ by {\rm(H6)}, while
$W(a)=1+v(a)^\alpha+\varepsilon e^{\beta a}$ is $C^1$ and increasing for large
$a$, the exponential term dominating a bounded derivative.The monotonicity requirement on $W$ is what discards the term $-W'(a)$ above.

We turn to the small-set property. On a discrete state space it follows from
irreducibility, and for diffusions from Harnack inequalities; on $\Mp(\R_+)$
neither route is open, and the minorisation is built by hand. We drive the
population down to a single individual of prescribed age, so that the reference
measure is carried by the singleton stratum. The next lemma makes this explicit.
\begin{lemma}
\label{lem:smallsets}
Assume {\rm(H1)}--{\rm(H6)}. For every scalar $L\ge1$ there exist scalars
$t_L>0$ and $\alpha_L\in(0,1]$ and a probability measure $\nu_L$ on $E$, all
depending only on $(G,h)$ and $L$, and, unlike the constants of
Lemma~\ref{lem:lyapunov}, on $L$ genuinely, such that
\begin{equation}
\label{eq:doeblin}
P^{\kappa}_{t_L}(\eta,\cdot)\ \ge\ \alpha_L\,\nu_L(\cdot),\; \eta\in K_L.
\end{equation}
The measure $\nu_L$ is the law of $\delta_{\mathcal U}$, where $\mathcal U$ is a
random variable uniform on an interval $[c_1,c_2]\subset(0,\infty)$, the letter
$C$ being reserved for scalars, and is therefore carried by the singletons.
Since $P_{t_L}\ge P^{\kappa}_{t_L}$, every compact subset of $E$ is a small set
for $X$.
\end{lemma}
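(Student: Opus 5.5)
We construct the minorisation by hand, following the two-phase scheme announced in the introduction. By Lemma~\ref{lem:lyapunov}, every $\eta\in K_L$ carries at most $R_L$ individuals, all of age at most $A_L$. We fix a horizon $t_L=s_1+s_2$ and bound $P^\kappa_{t_L}(\eta,\cdot)$ from below by the probability of one explicit scenario for the killed process $\eta_t$. Working with $\eta$ rather than $X$ lets us use the branching property, which holds only for $\eta$. The scenario keeps one distinguished individual alive and lets all the others die out.

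\emph{Phase one (extinction of the extra lineages).} Pick one individual of $\eta$, say of age $a_1\le A_L$. Require the following on $[0,s_1]$.
\begin{itemize}[label={}, leftmargin=1.5em]
\item[(a)] The distinguished individual's own lineage dies at some time $u\in[0,s_1]$ and has exactly one child (probability $p_1>0$ by (H3)). That child survives, without reproducing, until time $s_1$.
\item[(b)] Every other individual, of age $a_i\le A_L$, dies childless before $s_1$.
\end{itemize}
By (H1), the conditional density of the residual lifetime, $g(a+\cdot)/\bar G(a)$, is bounded below on compact ranges uniformly in $a\le A_L$, since $\bar G(a)\ge\bar G(A_L)>0$ and $\inf_{[\ell,L']}g>0$. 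So each event in (b) has probability at least $p_0\,\inf_{a\le A_L}\Prob(T-a\le s_1\mid T>a)=:q_L>0$. Independence across individuals, from the branching property of $\eta$, gives a factor at least $q_L^{R_L-1}$.

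Killing is avoided because the distinguished lineage is alive throughout. Indeed $\eta_t=\mathbf 0$ requires a childless death at a singleton, and the distinguished individual never dies childless in this scenario. The offspring are born at age zero, so the reset absorbs the variable initial ages.

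\emph{Phase two (spreading the age).} At time $s_1$ the state is $\delta_{s_1-u}$, where $u$ is the death time of the distinguished individual. We want the law of the final state $\delta_{t_L-u}$ to dominate a multiple of the uniform law on $[c_1,c_2]$. The time $u$ has a density bounded below on a subinterval: we require $u\in[\ell,s_1-\ell]$, whose density $g(a_1+u)/\bar G(a_1)$ is bounded below by (H1) uniformly in $a_1\le A_L$. The child, born at $u$, must then not die before $t_L$. This has probability $\bar G(t_L-u)\ge\bar G(t_L)>0$, since $\supp G=[0,\infty)$ and $G$ is continuous. Consequently the age $t_L-u$ has a density bounded below by some $\alpha'_L$ on $[c_1,c_2]:=[s_2+\ell,\,t_L-\ell]$.

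Multiplying the factors gives \eqref{eq:doeblin} with
\[
\alpha_L=(c_2-c_1)\,\alpha'_L\,p_1\,q_L^{R_L-1}\,\bar G(t_L)
\]
(adjusted by the constants above), and with $\nu_L$ the law of $\delta_{\mathcal U}$. The small-set claim for $X$ then follows from $P_{t_L}\ge P^\kappa_{t_L}$, together with the fact that every compact set lies in some $K_L$ (Lemma~\ref{lem:lyapunov}).

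\emph{Main obstacle.} The delicate point is uniformity over $\eta\in K_L$. Everything relies on conditioned residual-lifetime densities having lower bounds independent of the current age. This holds only because ages are capped by $A_L$, through $\bar G(A_L)>0$, and because $\inf g>0$ on compacts by (H1). Near age $0$, where $g$ may vanish, we keep all intervals away from $0$ by the margin $\ell$.

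One must also check that the events in (a) and (b) are independent. The number of others is random, between $0$ and $R_L-1$, so we use $q_L^{R_L-1}$ as a worst case.
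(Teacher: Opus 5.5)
Your proof is correct and uses the same mechanism as the paper's. The extra lineages die childless, contributing a factor $p_0$ each, with residual-lifetime probabilities bounded below via (H1) on intervals kept away from age $0$. The age is then reset through a single-offspring death using (H3), and the branching property of $\eta$ gives independence; the result is a minorising measure uniform on an age interval carried by the singletons. The only difference is the ordering: the paper keeps the distinguished individual alive on $[0,s_1]$ and lets it die with one child during $[s_1,t_L]$, via the strong Markov property at $s_1$, whereas you let that reset happen during the extinction phase, which saves the factor $\bar G(A+s_1)$ and the Markov step but changes nothing of substance.
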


\begin{proof}
See~\ref{app:smallsets}.
\end{proof}

The third estimate compares survival probabilities on a compact set. A lineage
started at age $a$ can be turned into a lineage started at any other age $a'$,
at a cost bounded below uniformly in $t$: by waiting, if $a<a'$, and by dying
with exactly one offspring and then waiting, if $a>a'$. The second mechanism is
where {\rm(H3)} enters. No renewal asymptotics are needed. The following lemma
states the resulting bound.

\begin{lemma}
\label{lem:survival}
Assume {\rm(H1)}--{\rm(H6)}. For every compact $K\subset E$,
\[
\inf_{t\ge0}\
\frac{\inf_{\eta\in K}\Prob_\eta(\tau>t)}{\sup_{\eta\in K}\Prob_\eta(\tau>t)}
\ \ge\ \frac{\sigma_A}{R}\ >\ 0 ,
\]
where the scalars $R$ and $A$ bound respectively the mass and the ages on $K$,
as in Lemma~\ref{lem:lyapunov}, and $\sigma_A>0$ is the explicit scalar
constructed in~\ref{app:survival}.
\end{lemma}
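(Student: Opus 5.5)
The plan is to reduce everything to singleton configurations via the branching property of $\eta$ (valid for $\eta$, not $X$), and then compare singletons of different ages through explicit lineage manipulations. Write $q_t(a):=\Prob_{\delta_a}(\tau>t)$. For $\eta=\sum_{i=1}^{k}\delta_{a_i}\in K$, with $k\le R$ and all $a_i\le A$, independence of the subtrees gives
\[
\Prob_\eta(\tau>t)=1-\prod_{i=1}^k\bigl(1-q_t(a_i)\bigr).
\]
Hence $\max_i q_t(a_i)\le\Prob_\eta(\tau>t)\le\sum_i q_t(a_i)\le R\max_i q_t(a_i)$. The factor $1/R$ in the statement will come from this sandwich. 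It then suffices to show that for all $a,a'\in[0,A]$ and all $t\ge0$,
\[
q_t(a')\ \ge\ \sigma_A\, q_t(a).
\]
Indeed this gives $\inf_K\Prob_\eta(\tau>t)\ge\inf_{a'\le A}q_t(a')\ge\sigma_A\sup_{a\le A}q_t(a)\ge(\sigma_A/R)\sup_K\Prob_\eta(\tau>t)$.

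\textbf{Case $a'\le a$ (waiting).} Starting from $\delta_{a'}$, consider the event that the individual survives until age $a$, i.e.\ over the time span $a-a'\le A$. This has probability $\bar G(a)/\bar G(a')\ge\bar G(A)>0$ by (H1). On this event, the Markov property at time $a-a'$ leaves the configuration $\delta_a$. Since $q$ is non-increasing in $t$,
\[
q_t(a')\ge\bar G(A)\,q_{t-(a-a')}(a)\ge\bar G(A)\,q_t(a)
\]
when $t\ge a-a'$. When $t<a-a'$, on the same event the process is still alive at time $t$, so $q_t(a')\ge\bar G(A)\ge\bar G(A)\,q_t(a)$.

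\textbf{Case $a'>a$ (reset then wait).} This is the step needing (H3), and I expect it to be the main obstacle, because time shifts go the wrong way. Starting from $\delta_{a'}$, consider the event $\mathcal E$ that the individual dies within one time unit with exactly one child, and that this child then survives long enough to reach age $a$ at a time $s\le 1+A$. The probability of $\mathcal E$ is at least
\[
c_A:=p_1\,\frac{G(a'+1)-G(a')}{\bar G(a')}\,\bar G(A)\ \ge\ p_1\inf_{b\le A}\frac{G(b+1)-G(b)}{\bar G(b)}\,\bar G(A)>0,
\]
which is positive by (H1) and continuity. On $\mathcal E$, the strong Markov property at time $s$ gives the configuration $\delta_a$, so $q_t(a')\ge c_A\,q_{t-s}(a)$. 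The problem is that $q_{t-s}(a)\ge q_t(a)$ is the wrong direction for a lower bound in terms of $q_t(a)$ — actually it is the right one: $q_{t-s}(a)\ge q_t(a)$ by monotonicity, so $q_t(a')\ge c_A q_t(a)$ directly. The case $t<s$ is handled as before, since the process is alive throughout on $\mathcal E$.

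Hence both cases hold with
\[
\sigma_A:=\min\Bigl(\bar G(A),\ p_1\,\bar G(A)\inf_{b\le A}\frac{G(b+1)-G(b)}{\bar G(b)}\Bigr).
\]
The only care required is the measurable choice of the randomised time $s$, handled by conditioning on the death time, and checking that the branching product formula is legitimate for $\eta$. No renewal asymptotics enter, and (H4)--(H6) are not needed here beyond well-posedness.
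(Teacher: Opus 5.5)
Your proof is correct and follows the paper's argument essentially step for step. It uses the same branching sandwich $\max_i q_t(a_i)\le\Prob_\eta(\tau>t)\le R\max_i q_t(a_i)$, the same move from a younger to an older age by waiting at cost $\bar G(A)$, and the same reset from an older to a younger age by a death with exactly one child (using (H3)) followed by waiting; both comparisons are closed by the monotonicity of $q_t(a)$ in $t$, which, as you end up noting, runs in the favourable direction. The only difference is cosmetic: you take the death window $(a',a'+1]$ and bound its conditional probability by continuity and compactness, whereas the paper takes $(a+s/2,a+s)$ with the explicit bound $(s/2)\inf_{[s/2,A+s]}g$, so your $\sigma_A$ is a slightly different but equally valid explicit constant.
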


\begin{proof}
See~\ref{app:survival}.
\end{proof}

We now identify the limit of the empirical measure. The test functions are the
exponential functionals of $\eta$, which are bounded, continuous, vanish at the
cemetery, and, unlike a general class $\{\Phi(\langle\cdot,f\rangle)\}$, are
stable under the killed semigroup. Say that $\varphi:E\to\R$ is a test
functional if $\varphi=\varphi_\psi$ for some $\psi\in C_b^1(\R_+)$ with
$\psi\le0$, where
$\varphi_\psi(\eta):=1-e^{\langle\eta,\psi\rangle}=1-\prod_ie^{\psi(a_i)}$ for
$\eta\in\Mp(\R_+)$.
Such a $\varphi_\psi$ is of the form $\Phi(\langle\eta,\psi\rangle)$ with
$\Phi(x)=1-e^{x}$, which is $C^1$ on $(-\infty,0]$ with $0\le\Phi\le1$,
$|\Phi'|\le1$ and $\Phi(0)=0$; test functionals are therefore bounded,
continuous, and in the extended domain of $\mathcal L$. They vanish at the
cemetery, $\varphi_\psi(\mathbf 0)=0$, so that
$\mathcal L^{\kappa}\varphi_\psi=\mathcal L\varphi_\psi$ and
$P_t^{\kappa}\varphi_\psi(\eta)=\E_\eta[\varphi_\psi(\eta_t)]$. They are
measure-determining on $E$, since $\{\nu(\varphi_\psi)\}_\psi$ determines the
Laplace functional
$\psi\mapsto\int_Ee^{\langle\eta,\psi\rangle}\,\nu(d\eta)$. Finally,
by~\eqref{eq:generator}, with the finite scalar
$c_\psi:=\|\psi'\|_\infty+2\mu^*$,
\begin{equation}
\label{eq:Lphi-growth}
|\mathcal L\varphi_\psi(\eta)|\ \le\ c_\psi\,\langle\eta,\ind\rangle
\ \le\ \frac{c_\psi}{m}\,V(\eta)^{1/\alpha}\ =\ o\bigl(V(\eta)\bigr),
\end{equation}
so $\mathcal L\varphi_\psi$ is continuous and of order $o(V)$, a fact used twice
below. The next lemma shows that this class is preserved by the killed
semigroup, which is what makes it usable in the arguments that follow.

\begin{lemma}
\label{lem:stability}
Assume {\rm(H1)}, {\rm(H2)}, {\rm(H4)}. For every $\psi\in C_b^1(\R_+)$ with
$\psi\le0$ and every $t\ge0$ there is $\psi_t\in C_b^1(\R_+)$ with
$\psi_t\le0$ such that $P_t^{\kappa}\varphi_\psi=\varphi_{\psi_t}$.
\end{lemma}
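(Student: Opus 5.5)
The plan is to use the branching property of the killed process $\eta$, which the paper reserves for $\eta$ only. The first step is a reduction to single individuals. Under $\Prob_\eta$ with $\eta=\sum_i\delta_{a_i}$, the subpopulations descending from distinct initial individuals evolve independently. Each is a Bellman--Harris process started from $\delta_{a_i}$, whose founder has a residual lifetime with the conditional law of $T-a_i$ given $T>a_i$. Since $e^{\langle\eta_t,\psi\rangle}$ is multiplicative over these subpopulations, we get
\[
\E_\eta\bigl[e^{\langle\eta_t,\psi\rangle}\bigr]=\prod_i u_t(a_i),\qquad u_t(a):=\E_{\delta_a}\bigl[e^{\langle\eta_t,\psi\rangle}\bigr],
\]
with the convention $e^{\langle\mathbf 0,\psi\rangle}=1$ on extinction. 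Because $\varphi_\psi(\mathbf 0)=0$, we have $P_t^\kappa\varphi_\psi(\eta)=\E_\eta[\varphi_\psi(\eta_t)]=1-\prod_iu_t(a_i)$. The natural candidate is therefore $\psi_t:=\log u_t$, which yields $P_t^{\kappa}\varphi_\psi=\varphi_{\psi_t}$ exactly. It then remains to show that $\psi_t\le0$, that $\psi_t$ is well defined and bounded, and that $\psi_t\in C^1_b(\R_+)$.

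The sign and boundedness come first. Since $\psi\le0$, we have $u_t\le1$, so $\psi_t\le0$. For a lower bound, condition on the founder surviving past $t$ without branching. This event has probability $\bar G(a+t)/\bar G(a)$, and on it the contribution is $e^{\psi(a+t)}\ge e^{-\|\psi\|_\infty}$. Extinction contributes the value $1$, so in fact
\[
u_t(a)\ \ge\ \Prob_{\delta_a}(\tau\le t)+e^{-\|\psi\|_\infty}\Prob_{\delta_a}(\tau>t)\ \ge\ e^{-\|\psi\|_\infty}>0.
\]
Hence $\|\psi_t\|_\infty\le\|\psi\|_\infty$ uniformly in $t$, and $\psi_t$ is bounded and nonpositive.

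The main obstacle is the regularity $\psi_t\in C^1_b$. I will get it from the first-jump renewal equation for the founder:
\[
u_t(a)=\frac{\bar G(a+t)}{\bar G(a)}e^{\psi(a+t)}+\int_0^t\frac{g(a+s)}{\bar G(a)}\,h\bigl(u_{t-s}(0)\bigr)\,ds.
\]
Here $s\mapsto u_s(0)$ is a measurable function bounded in $[e^{-\|\psi\|_\infty},1]$ that does not depend on $a$, and it can be obtained as the unique bounded solution of the same equation at $a=0$ by a Gronwall/Picard argument. The right-hand side depends on $a$ only through $\bar G(a+\cdot)$, $g(a+\cdot)$ and $e^{\psi(a+\cdot)}$.

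Differentiability in $a$ comes from the change of variables $r=a+s$, which turns the integral into $\int_a^{a+t}g(r)\,h(u_{t-r+a}(0))\,dr$. To avoid differentiating $u$ in its time argument, I would instead differentiate along characteristics: set $w(a):=u_t(a)$ and use the identity $\frac{d}{da}w=\mu(a)(w-h(u_{t}(0)))+\dots$. A cleaner route is to note that $a\mapsto\bar G(a)u_t(a)$ is the sum of $\bar G(a+t)e^{\psi(a+t)}$, which is $C^1$ with bounded derivative because $g$ is continuous and bounded (since $\mu\le\mu^*$ under (H4)) and $\psi\in C^1_b$, and the integral $\int_0^tg(a+s)h(u_{t-s}(0))\,ds$. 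The integral is differentiable in $a$ provided $g$ is $C^1$. Since $g$ is only continuous, I would rewrite it as a convolution whose $a$-derivative is $g(a+t)h(u_0(0))-g(a)h(u_t(0))+\int_0^t g(a+s)\,\partial_s[h(u_{t-s}(0))]\,ds$. This requires $s\mapsto u_s(0)$ to be Lipschitz, which I would prove from the renewal equation at $a=0$ together with the bound $\|\mu\|_\infty\le\mu^*$.

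Dividing by $\bar G(a)$, which is $C^1$ and positive, then gives $u_t\in C^1$. The bound on the derivative follows because each term is at most $\mu^*$ times bounded quantities, and $h'\le m$ on $[0,1]$. Finally, since $u_t\ge e^{-\|\psi\|_\infty}$, taking logarithms preserves the $C^1_b$ property, which concludes the proof.
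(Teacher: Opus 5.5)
Your overall route is the paper's: the branching factorisation $P^{\kappa}_t\varphi_\psi(\eta)=1-\prod_j u_t(a_j)$ with $\psi_t:=\log u_t$, the first-jump renewal equation, the shift $r=a+s$ that moves the $a$-dependence onto $g$ and the endpoints, a Lipschitz bound on $s\mapsto u_s(0)$, and division by $\bar G(a)>0$. The regularity half is sound. The lower bound on $u_t$, however, rests on a false inequality. On $\{\tau>t\}$ the integrand is $\prod_i e^{\psi(a_i(t))}$, which is bounded below only by $e^{-\|\psi\|_\infty Z_t}$, not by $e^{-\|\psi\|_\infty}$. The event that the founder is still alive at time $t$ is a strict subset of $\{\tau>t\}$, and on the rest of $\{\tau>t\}$ the configuration may carry many atoms. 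Already for $\psi\equiv-M$ your inequality reads $\E_{\delta_a}[e^{-MZ_t};\,Z_t\ge1]\ge e^{-M}\Prob_{\delta_a}(Z_t\ge1)$. This fails whenever $\Prob_{\delta_a}(Z_t\ge2)>0$, for instance for Poisson offspring and any $t>0$.

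The conclusion $u_t\ge e^{-\|\psi\|_\infty}$, uniformly in $a$ and $t$, is nevertheless true. The paper obtains it by convexity rather than by conditioning. Since $\prod_ie^{\psi(a_i(t))}\ge c^{Z_t}$ with $c:=e^{\inf\psi}\in(0,1]$, Jensen together with the supermartingale bound $\E_{\delta_a}[Z_t]\le1$ (this is where (H2) enters) gives $u_t(a)\ge c^{\E_{\delta_a}[Z_t]}\ge c$.

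If you only need the lemma at a fixed $t$, your own event suffices once you discard everything else: $u_t(a)\ge e^{-\|\psi\|_\infty}\bar G(a+t)/\bar G(a)\ge e^{-\|\psi\|_\infty-\mu^*t}$ by (H4). This is uniform in $a$ and keeps $\log u_t$ bounded with bounded derivative. It does not, however, give your claim $\|\psi_t\|_\infty\le\|\psi\|_\infty$ uniformly in $t$.

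Two smaller remarks.

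First, after the integration by parts, the last term of your $a$-derivative has the wrong sign. With $H(\sigma):=h(u_\sigma(0))$ it should be $+\int_0^tg(a+s)H'(t-s)\,ds=-\int_0^tg(a+s)\,\partial_s[H(t-s)]\,ds$, which matches the last term of the paper's formula for $(Q_tf)'$. The slip is harmless for boundedness.

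Second, your observation that a Lipschitz bound on $s\mapsto u_s(0)$ already suffices is a genuine economy. Continuity in $a$ of $\int_0^tg(a+s)H'(t-s)\,ds$ follows by dominated convergence, since $g$ is continuous and bounded and $H'\in L^\infty$. The paper, by contrast, first proves $u\in C^1$ through $g*F=F(0)G+G*F'$ before differentiating in $a$. Your Gronwall route to the Lipschitz bound uses $h'\le m$ on $[0,1]$ and $g\le\mu^*$. It gives a constant of order $e^{m\mu^*T}$ on $[0,T]$ and never uses $m<1$. The paper's contraction argument instead uses (H2) to obtain the global constant $(\|\phi'\|_\infty+\mu^*)/(1-m)$, with $\phi=\bar Ge^{\psi}$.
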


\begin{proof}
See~\ref{app:stability}.
\end{proof}

The first theorem holds for every integer $N\ge2$ and does not use the survival
comparison: it says that the particle system is ergodic and that any
deterministic limit point of its stationary empirical measure is
quasi-stationary, with a decay rate that is identified.
\begin{theorem}
\label{thm:consistency}
Assume {\rm(H1)}--{\rm(H6)} and let $N\ge2$ be an integer. Then the following
hold.
\begin{enumerate}[label=\emph{(\roman*)},leftmargin=2.4em,itemsep=1pt,topsep=2pt]
\item The particle system $(\mathbf X^N_t)_{t\ge0}$ is exponentially ergodic on
$E^N$ and admits a unique invariant probability measure $\Lambda^N$.
\item The empirical measure $m^N_t$ of~\eqref{eq:empirical} converges in
distribution, as $t\to\infty$, to the random probability measure
$\mathcal X^N:=N^{-1}\sum_{i=1}^N\delta_{x^i}$ with $\mathbf x\sim\Lambda^N$, and
\begin{equation}
\label{eq:moment-XN}
\sup_{N\ge2}\ \E\bigl[\mathcal X^N(V)\bigr]
\ \le\ \frac{C}{\lambda_1-\|\kappa\|_\infty}\ <\ \infty ,
\end{equation}
with $\lambda_1$ and $C$ as in Lemma~\ref{lem:lyapunov}.
\item If, along a subsequence $N_k\to\infty$, the law of $\mathcal X^{N_k}$
converges weakly in $\mathcal P(\mathcal P(E))$ to $\delta_{\nu^*}$ for some
$\nu^*\in\mathcal P(E)$, then $\nu^*$ is quasi-stationary for
$(P^{\kappa}_t)_{t\ge0}$, with decay rate
$\lambda^*=\nu^*(\kappa)\in(0,\|\kappa\|_\infty]$.
\end{enumerate}
\end{theorem}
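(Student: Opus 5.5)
The three parts are handled in order: (i) comes from a Harris-type argument on $E^N$, (ii) from ergodicity plus a mean-field Lyapunov bound, and (iii) from passing to the limit in a Dynkin identity for the empirical measure.

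\textbf{Part (i): exponential ergodicity.} Take the additive Lyapunov functional $\PsiV(\mathbf x):=\sum_{i=1}^NV(x^i)$ and apply the generator~\eqref{eq:gen-N}.

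The free part is handled by Lemma~\ref{lem:lyapunov}, which gives $\sum_i\mathcal L_X^{(i)}\PsiV\le-\lambda_1\PsiV+NC$.

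For the rebirth part, a jump $i\leftarrow j$ changes $\PsiV$ by $V(x^j)-V(x^i)\le V(x^j)$. It fires at rate $\kappa(x^i)/(N-1)$, so the rebirth part is at most
\[
\frac{1}{N-1}\sum_{i}\sum_{j\ne i}\kappa(x^i)V(x^j)\le\|\kappa\|_\infty\PsiV .
\]
Combining the two gives
\[
\mathcal L^{(N)}\PsiV\le-(\lambda_1-\|\kappa\|_\infty)\PsiV+NC .
\]
The rate $\lambda_1-\|\kappa\|_\infty$ is positive by Lemma~\ref{lem:lyapunov}, and localisation uses the exit times of sublevel sets, as after~\eqref{eq:sup-Z}.

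Sublevel sets of $\PsiV$ lie in $K_L^N$, which is relatively compact. For these we need a minorisation of the $N$-particle system. On the event that no rebirth occurs during $[0,t_L]$, the particles evolve independently as $X$ killed at rate $\kappa$. Lemma~\ref{lem:smallsets} then gives
\[
P^{(N)}_{t_L}(\mathbf x,\cdot)\ \ge\ \alpha_L^N\,\nu_L^{\otimes N}, \qquad \mathbf x\in K_L^N,
\]
because "particle $i$ not killed" is exactly the killed semigroup. Hence $K_L^N$ is a small set.

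With the drift and the small set in hand, the Meyn--Tweedie (or Hairer--Mattingly) Harris theorem gives $V$-uniform exponential ergodicity and a unique invariant measure $\Lambda^N$. Aperiodicity is free in continuous time, by taking the skeleton chain.

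\textbf{Part (ii): convergence of $m^N_t$ and the moment bound.} Convergence in distribution of $m^N_t$ follows because it is the image of $\mathbf X^N_t$ under the continuous map $\mathbf x\mapsto N^{-1}\sum_i\delta_{x^i}$.

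For~\eqref{eq:moment-XN}, integrate the drift against $\Lambda^N$, truncating by $\PsiV\wedge n$ and applying Fatou to make $\Lambda^N(\PsiV)<\infty$ rigorous. This gives
\[
\Lambda^N(\PsiV)\le \frac{NC}{\lambda_1-\|\kappa\|_\infty}.
\]
Dividing by $N$ gives $\E[\mathcal X^N(V)]$ and the bound uniform in $N$.

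\textbf{Part (iii): identifying the limit as a QSD.} Fix a test functional $\varphi_\psi$ and apply $\mathcal L^{(N)}$ to $\Psiphi(\mathbf x):=m^N(\varphi)=N^{-1}\sum_i\varphi(x^i)$. The generator splits as
\[
\mathcal L^{(N)}\Psiphi=m^N(\mathcal L_X\varphi)+\frac1{N-1}\sum_i\kappa(x^i)\sum_{j\ne i}\frac{\varphi(x^j)-\varphi(x^i)}{N}.
\]
Since $\varphi(\mathbf 0)=0$, we have $\mathcal L_X\varphi=\mathcal L\varphi+\kappa\varphi$. The rebirth term equals
\[
\frac{N}{N-1}\Bigl(m^N(\kappa)m^N(\varphi)-\frac1N m^N(\kappa\varphi)\Bigr)-m^N(\kappa\varphi).
\]
Hence
\[
\mathcal L^{(N)}\Psiphi=m^N(\mathcal L\varphi)+m^N(\kappa)m^N(\varphi)+O(1/N).
\]
Stationarity gives $\E_{\Lambda^N}[\mathcal L^{(N)}\Psiphi]=0$. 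To justify this, the function must be $\Lambda^N$-integrable, which holds by~\eqref{eq:Lphi-growth} and (ii).

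Now pass to the limit along $N_k$, where $\mathcal X^{N_k}\to\nu^*$ in law, with a deterministic limit. The map $\nu\mapsto\nu(\kappa)\nu(\varphi)$ is bounded, but $\kappa$ may be discontinuous at points where $\mu$ is. Assuming $\mu$ is continuous, which holds by (H1), $\kappa$ is continuous on $E$ because the strata are clopen. The unbounded map $\nu\mapsto\nu(\mathcal L\varphi)$ is handled by uniform integrability: $\mathcal L\varphi=o(V)$ and $\E[\mathcal X^N(V)]$ is uniformly bounded, so truncation passes the limit. The result is
\[
\nu^*(\mathcal L\varphi_\psi)=-\nu^*(\kappa)\,\nu^*(\varphi_\psi)\quad\text{for all }\psi .
\]

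To turn this into quasi-stationarity, apply the identity to $\varphi_{\psi_s}=P^\kappa_s\varphi_\psi$, which is again a test functional by Lemma~\ref{lem:stability}. This gives $\tfrac{d}{ds}\nu^*P^\kappa_s\varphi=-\lambda^*\,\nu^*P^\kappa_s\varphi$. Justifying the derivative requires the Kolmogorov forward equation, $\tfrac d{ds}P^\kappa_s\varphi=P^\kappa_s\mathcal L\varphi$, dominated by $V$-moments. Integrating yields $\nu^*P^\kappa_t\varphi=e^{-\lambda^*t}\nu^*(\varphi)$. Test functionals determine measures, and they also handle the total mass through $\psi\to-\infty$ constants: $\varphi_{-c}\uparrow\ind_E$. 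So $\nu^*P^\kappa_t=e^{-\lambda^*t}\nu^*$.

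Finally, $\lambda^*\le\|\kappa\|_\infty$ is trivial. For $\lambda^*>0$: if $\nu^*(\kappa)=0$ the mass would be preserved, contradicting almost-sure extinction, $\nu^*P^\kappa_t(\ind)\to0$.

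The main obstacle is this last identification step: justifying the limit of the unbounded term $m^N(\mathcal L\varphi)$ together with the semigroup differentiation. The plan is to use the $o(V)$ bound~\eqref{eq:Lphi-growth} throughout, together with Lemma~\ref{lem:stability}.
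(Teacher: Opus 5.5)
Your proposal is correct and follows the paper's proof essentially step for step. It uses the same additive lift $\PsiV$ with the rebirth term bounded by $\|\kappa\|_\infty\PsiV$, the same no-rebirth argument giving $\alpha_L^N\nu_L^{\otimes N}$ on $K_L^N$, and the same integration of the drift against $\Lambda^N$; it also passes the same stationary identity for $\Psiphi$ to the limit by truncation and the $o(V)$ bound~\eqref{eq:Lphi-growth}.

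There are three small points. First, handling the quadratic term through the bounded continuous map $\nu\mapsto\nu(\kappa)\nu(\varphi)$ on $\mathcal P(E)$ is a legitimate shortcut that avoids the paper's exchangeability-plus-Sznitman step. Second, in the differentiation step you actually need $\partial_sP^\kappa_s\varphi=\mathcal L P^\kappa_s\varphi$. You get this by applying Dynkin's formula to the test functional $\varphi_{\psi_s}$ and using $P^\kappa_{s+h}=P^\kappa_hP^\kappa_s$. The forward form $P^\kappa_s\mathcal L\varphi$ you wrote does not close the argument by itself, since $\mathcal L\varphi$ is not a test functional. Third, you need $\nu^*(V)<\infty$ explicitly; it follows from~\eqref{eq:moment-XN} by the portmanteau lemma.
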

\begin{proof}
Write $\kappa_i:=\kappa(x^i)$, $V_i:=V(x^i)$ for $i\in\{1,\dots,N\}$, and let
$\lambda_*:=\lambda_1-\|\kappa\|_\infty$, which is positive by
Lemma~\ref{lem:lyapunov}. The parameters
$\alpha,\theta,\varepsilon,\varepsilon',\beta$ of that lemma are fixed before
$N$ is introduced, so $\lambda_1$, $C$ and $\lambda_*$ do not depend on $N$;
this is used at every occurrence of the constant $NC$ below.

We begin with the additive lift. Set $\PsiV(\mathbf x):=\sum_{i=1}^N V(x^i)$
for $\mathbf x=(x^1,\dots,x^N)\in E^N$. Since $V\ge1$ on $E$ we have
$\PsiV\ge N$, and if $\PsiV(\mathbf x)\le L$ then, for each
$i\in\{1,\dots,N\}$, $V_i=\PsiV(\mathbf x)-\sum_{k=1,\,k\ne i}^{N}V_k
\le L-(N-1)\le L$, so
\begin{equation}
\label{eq:sublevel-product}
\{\PsiV\le L\}\ \subset\ K_L^{\,N}:=K_L\times\cdots\times K_L ,
\end{equation}
which is relatively compact in $E^N$ by the last statement of
Lemma~\ref{lem:lyapunov}. The lift is deliberately not normalised: it must
inherit the coercivity of $V$. Because $\PsiV$ is additive and
$\mathbf x^{\,i\leftarrow j}$ differs from $\mathbf x$ in the $i$-th coordinate
only,
\begin{equation}
\label{eq:jump}
\PsiV(\mathbf x^{\,i\leftarrow j})-\PsiV(\mathbf x)
=\Bigl(\sum_{\substack{k=1\\k\ne i}}^{N}V_k+V_j\Bigr)
-\Bigl(\sum_{\substack{k=1\\k\ne i}}^{N}V_k+V_i\Bigr)=V_j-V_i ,
\; i\ne j,
\end{equation}
the $N-1$ terms of index $k\ne i$ being identical in the two configurations; in
particular the jump involves two coordinates only, whatever $N$. For the same
reason $\sum_{k=1,\,k\ne i}^{N}V_k$ is constant in the variable $x^i$, so that
the operator $\mathcal L_X^{(i)}$ of~\eqref{eq:gen-N}, which acts on that
variable alone, annihilates it and
$\mathcal L_X^{(i)}\PsiV(\mathbf x)=\mathcal L_XV(x^i)$. Substituting these two facts into~\eqref{eq:gen-N} and computing the inner sum,
\[
\frac{1}{N-1}\sum_{\substack{j=1\\j\ne i}}^{N}\bigl(V_j-V_i\bigr)
=\frac{1}{N-1}\Bigl(\sum_{\substack{j=1\\j\ne i}}^{N}V_j-(N-1)V_i\Bigr)
=\frac{\PsiV(\mathbf x)-V_i}{N-1}-V_i ,
\]
since $\sum_{j=1,\,j\ne i}^{N}V_j=\PsiV(\mathbf x)-V_i$. Multiplying by
$\kappa_i$ and summing over $i\in\{1,\dots,N\}$ yields the exact identity
\begin{equation}
\label{eq:exact-identity}
\mathcal L^{(N)}\PsiV(\mathbf x)
=\underbrace{\sum_{i=1}^N\mathcal L_XV(x^i)}_{=:D(\mathbf x)}
+\underbrace{\frac{1}{N-1}\sum_{i=1}^N\kappa_i\bigl(\PsiV(\mathbf x)-V_i\bigr)
-\sum_{i=1}^N\kappa_iV_i}_{=:J(\mathbf x)} ,
\; \mathbf x\in E^N,
\end{equation}
in which no inequality has been used so far. The two terms are now bounded in
turn. For the drift term, Lemma~\ref{lem:lyapunov} applies to each coordinate with the
same constants $\lambda_1$ and $C$, $V$ being a globally defined function
on $E$ whose parameters were fixed before $N$. Summing the $N$ inequalities
$\mathcal L_XV(x^i)\le-\lambda_1V_i+C$ of~\eqref{eq:lyap-global} over
$i\in\{1,\dots,N\}$,
\begin{equation}
\label{eq:D-bound}
D(\mathbf x)\ \le\ -\lambda_1\sum_{i=1}^{N}V_i+NC
\ =\ -\lambda_1\PsiV(\mathbf x)+NC ,
\end{equation}
the sole origin of the constant $NC$; it would fail if $C$ depended on $N$. For the rebirth term, each summand of the first sum in $J$ is non-negative,
since $\PsiV(\mathbf x)-V_i=\sum_{j=1,\,j\ne i}^{N}V_j\ge N-1\ge1$ by $V\ge1$
and $N\ge2$, so the bound $\kappa_i\le\|\kappa\|_\infty$ of~\eqref{eq:kappa} may
be applied inside it without reversing a sign. The resulting sum is then
estimated as a whole, not term by term, because it telescopes:
\begin{equation}
\label{eq:telescope}
\sum_{i=1}^{N}\bigl(\PsiV(\mathbf x)-V_i\bigr)
=N\PsiV(\mathbf x)-\sum_{i=1}^{N}V_i
=N\PsiV(\mathbf x)-\PsiV(\mathbf x)=(N-1)\PsiV(\mathbf x) ,
\end{equation}
and not $N\PsiV(\mathbf x)$, which is what majorising each
$\PsiV(\mathbf x)-V_i$ by $\PsiV(\mathbf x)$ separately would give. The factor
$N-1$ thus cancels against the one carried by the rebirth mechanism
of~\eqref{eq:gen-N}, and since $\kappa\ge0$ and $V\ge1$ make the second sum in
$J$ non-negative, so that $-\sum_{i=1}^{N}\kappa_iV_i\le0$,
\begin{equation}
\label{eq:J-bound}
J(\mathbf x)\ \le\ \frac{\|\kappa\|_\infty}{N-1}(N-1)\PsiV(\mathbf x)
\ =\ \|\kappa\|_\infty\PsiV(\mathbf x).
\end{equation}
Together with~\eqref{eq:exact-identity} and~\eqref{eq:D-bound},
\begin{equation}
\label{eq:drift-N}
\mathcal L^{(N)}\PsiV(\mathbf x)\ \le\ -\lambda_*\PsiV(\mathbf x)+NC ,
\; \mathbf x\in E^N,\ N\ge2 ,
\end{equation}
with $\lambda_*>0$ independent of $N$, and no restriction on $N$ is incurred.

That~\eqref{eq:drift-N} may be used is seen as follows. The system is
non-explosive: its free motion is that of $N$ copies of $X$, each of which is
non-explosive with mass controlled by~\eqref{eq:sup-Z-X}, and the total rebirth
rate is at most
$N\|\kappa\|_\infty<\infty$ by~\eqref{eq:kappa}, so rebirth times do not
accumulate. As $\PsiV$ is unbounded we localise at
$T_L:=\inf\{t\ge0:\PsiV(\mathbf X^N_t)\ge L\}$ for a scalar $L\ge N$, and apply
Dynkin's formula to $(t,\mathbf x)\mapsto e^{\lambda_*t}\PsiV(\mathbf x)$, which
is bounded on $[0,t\wedge T_L]$. Its space--time generator equals
$e^{\lambda_*t}\bigl(\lambda_*\PsiV+\mathcal L^{(N)}\PsiV\bigr)
\le e^{\lambda_*t}NC$ by~\eqref{eq:drift-N}, whence
$\E_{\mathbf x}[e^{\lambda_*(t\wedge T_L)}\PsiV(\mathbf X^N_{t\wedge T_L})]
\le\PsiV(\mathbf x)+(NC/\lambda_*)(e^{\lambda_*t}-1)$. The right-hand side does
not depend on $L$, and $T_L\to\infty$ almost surely by non-explosiveness, so
Fatou's lemma applies to the left-hand side alone as $L\to\infty$, the integrand
being non-negative, and after division by $e^{\lambda_*t}$,
\begin{equation}
\label{eq:gronwall}
\E_{\mathbf x}\bigl[\PsiV(\mathbf X^N_t)\bigr]
\ \le\ e^{-\lambda_*t}\PsiV(\mathbf x)
+\frac{NC}{\lambda_*}\bigl(1-e^{-\lambda_*t}\bigr),\; t\ge0.
\end{equation}
In particular $\PsiV$ lies in the extended domain of $\mathcal L^{(N)}$ and
$\sup_{t\ge0}\E_{\mathbf x}[\PsiV(\mathbf X^N_t)]<\infty$.

The sublevel sets of $\PsiV$ are moreover small. Fix a scalar $L\ge N$ and let
$t_L,\alpha_L,\nu_L$ be as in Lemma~\ref{lem:smallsets}. In the graphical
construction each particle carries its own killing clock, conditionally
independent of the others given the trajectories; on the event $\mathcal N_t$
that no rebirth occurs on $[0,t]$ the $N$ coordinates therefore evolve as $N$
independent copies of $\eta$, that is, of $X$ killed at rate $\kappa$, so that
$\Prob_{\mathbf x}(\mathbf X^N_t\in B_1\times\cdots\times B_N,\ \mathcal N_t)
=\prod_{i=1}^{N}P^{\kappa}_t(x^i,B_i)$ for all measurable
$B_1,\dots,B_N\subset E$. Discarding $\mathcal N_{t_L}^{\,c}$ and
applying~\eqref{eq:doeblin} in each of the $N$ coordinates, which is justified
because $\mathbf x\in\{\PsiV\le L\}$ forces $x^i\in K_L$ for every
$i\in\{1,\dots,N\}$ by~\eqref{eq:sublevel-product}, we obtain
\begin{equation*}
    \Prob_{\mathbf x}(\mathbf X^N_{t_L}\in B_1\times\cdots\times B_N)
\ge\prod_{i=1}^{N}P^{\kappa}_{t_L}(x^i,B_i)
\ge\prod_{i=1}^{N}\alpha_L\nu_L(B_i)
=\alpha_L^{\,N}\nu_L^{\otimes N}(B_1\times\cdots\times B_N)\end{equation*}
Measurable rectangles form a $\pi$-system generating the product
$\sigma$-field, so the bound extends to all measurable subsets of $E^N$:
$P^{(N)}_{t_L}(\mathbf x,\cdot)\ge\alpha_L^{\,N}\nu_L^{\otimes N}(\cdot)$ on
$\{\PsiV\le L\}$. Every sublevel set of $\PsiV$ is thus small, and
$\mathbf X^N$ is $\nu_L^{\otimes N}$-irreducible and aperiodic, the
minorisation holding at the single time $t_L$ with a non-trivial measure. Ergodicity and the moment bound now follow. Inequality~\eqref{eq:drift-N} is
the drift condition of~\cite[Theorem~6.1]{MT1993} for the norm-like function
$\PsiV$, whose sublevel sets are relatively compact
by~\eqref{eq:sublevel-product} and Lemma~\ref{lem:lyapunov}, and the preceding
paragraph supplies the small-set requirement. That theorem yields a unique
invariant probability measure $\Lambda^N$ on $E^N$, the $\PsiV$-regularity of
$\mathbf X^N$, and scalars $\Gamma_N<\infty$, $\varrho_N\in(0,1)$ with
$\|P^{(N)}_t(\mathbf x,\cdot)-\Lambda^N\|_{\PsiV}
\le\Gamma_N\PsiV(\mathbf x)\varrho_N^{\,t}$, which is~(i).
Integrating~\eqref{eq:drift-N} against $\Lambda^N$, justified by
$\PsiV$-regularity and~\eqref{eq:gronwall}, gives
$\int_{E^N}\mathcal L^{(N)}\PsiV\,d\Lambda^N=0$, hence
$0\le-\lambda_*\int_{E^N}\PsiV\,d\Lambda^N+NC$ and
\begin{equation}
\label{eq:moment-Psi}
\int_{E^N}\PsiV\,d\Lambda^N\ \le\ \frac{NC}{\lambda_*}.
\end{equation}
Since $\mathcal X^N(V)=N^{-1}\sum_{i=1}^{N}V(x^i)=N^{-1}\PsiV(\mathbf x)$ by
definition, dividing~\eqref{eq:moment-Psi} by $N$ gives~\eqref{eq:moment-XN},
whose right-hand side $C/\lambda_*=C/(\lambda_1-\|\kappa\|_\infty)$ is free of
$N$. Finally $\mathbf x\mapsto N^{-1}\sum_{i=1}^{N}\delta_{x^i}$ is continuous
from $E^N$ to $\mathcal P(E)$, so the convergence of the law of
$\mathbf X^N_t$ to $\Lambda^N$ transfers to that of $m^N_t$
of~\eqref{eq:empirical} to $\mathcal X^N$ in $\mathcal P(\mathcal P(E))$, which
is~(ii).

We turn to~(iii), and first record a stationary identity. Let
$\varphi=\varphi_\psi$ be a test functional and
$\Psiphi(\mathbf x):=N^{-1}\sum_{i=1}^N\varphi(x^i)=m^N(\mathbf x)(\varphi)$,
the empirical measure evaluated at $\varphi$; the normalisation is what makes it
so, and no coercivity is needed, $\varphi$ being bounded. Its rebirth jump is
$\Psiphi(\mathbf x^{\,i\leftarrow j})-\Psiphi(\mathbf x)
=N^{-1}(\varphi(x^j)-\varphi(x^i))$, again by additivity, so that the second sum
in~\eqref{eq:gen-N} equals
$N^{-1}\sum_{i=1}^{N}\kappa_i\bigl[(N-1)^{-1}
\sum_{j=1,\,j\ne i}^{N}\varphi(x^j)-\varphi(x^i)\bigr]$.
Since $\varphi(\mathbf 0)=0$, \eqref{eq:LX} gives
$\mathcal L_X\varphi-\kappa\varphi=\mathcal L\varphi=\mathcal L^{\kappa}\varphi$,
and the terms $-N^{-1}\kappa_i\varphi(x^i)$ are exactly those absorbed by this
identity, so that
\[
\mathcal L^{(N)}\Psiphi(\mathbf x)
=\frac1N\sum_{i=1}^N\mathcal L^{\kappa}\varphi(x^i)
+\frac1N\sum_{i=1}^N\kappa_i\,\frac1{N-1}
\sum_{\substack{j=1\\j\ne i}}^{N}\varphi(x^j).
\]
Both terms are $\Lambda^N$-integrable, the second being bounded by
$\|\kappa\|_\infty\|\varphi\|_\infty$ and the first by~\eqref{eq:Lphi-growth}
and~\eqref{eq:moment-Psi}, so $\PsiV$-regularity and Dynkin's formula give
$\int_{E^N}\mathcal L^{(N)}\Psiphi\,d\Lambda^N=0$. Under $\Lambda^N$ the
particles are exchangeable: the first term has expectation
$\E_{\Lambda^N}[\mathcal L^{\kappa}\varphi(x^1)]$, while in the second the
double sum runs over the $N(N-1)$ ordered pairs $(i,j)$ with $i\ne j$, each
contributing $\E_{\Lambda^N}[\kappa(x^1)\varphi(x^2)]$ against the prefactor
$1/(N(N-1))$. The vanishing of the integral therefore reads
$\E_{\Lambda^N}[\mathcal L^{\kappa}\varphi(x^1)]
+\E_{\Lambda^N}[\kappa(x^1)\varphi(x^2)]=0$, that is,
\begin{equation}
\label{eq:stationary-id}
\E\bigl[\mathcal X^N(\mathcal L^{\kappa}\varphi)\bigr]
=\E_{\Lambda^N}\bigl[\mathcal L^{\kappa}\varphi(x^1)\bigr]
=-\,\E_{\Lambda^N}\bigl[\kappa(x^1)\varphi(x^2)\bigr],\; N\ge2.
\end{equation}

It remains to pass to the limit. Assume the law of $\mathcal X^{N_k}$
converges weakly to $\delta_{\nu^*}$. By the equivalence of
Sznitman~\cite[Proposition~2.2]{Sznitman1991}, valid for exchangeable systems,
the law of $(x^1,x^2)$ under $\Lambda^{N_k}$ converges weakly to
$\nu^*\otimes\nu^*$ on $E\times E$. The killing rate~\eqref{eq:kappa} is bounded
by $\|\kappa\|_\infty$ and continuous on $E$: the strata $\{Z=n\}$ are open and
closed in the narrow topology, $\kappa$ vanishes on all of them but $\{Z=1\}$,
and on $\{Z=1\}\cong\R_+$ it equals $p_0\mu(\cdot)$, continuous by {\rm(H1)}.
As $\varphi$ is bounded and continuous, the product $\kappa\otimes\varphi$ is
bounded and continuous on $E\times E$, so the right-hand side
of~\eqref{eq:stationary-id} converges to $\nu^*(\kappa)\nu^*(\varphi)$. The
left-hand side needs more, $\mathcal L^{\kappa}\varphi$ being continuous but
unbounded. Put the finite scalar $c_\varphi:=c_\psi/m$, so that
$|\mathcal L^{\kappa}\varphi|\le c_\varphi V^{1/\alpha}$
by~\eqref{eq:Lphi-growth}. On $\{|\mathcal L^{\kappa}\varphi|>M\}$ one has
$c_\varphi V^{1/\alpha}>M$, hence $V>(M/c_\varphi)^\alpha$; since
$1/\alpha-1<0$ because $\alpha>1$ by {\rm(H6)}, raising this to the power
$1/\alpha-1$ reverses the inequality and gives
$V^{1/\alpha}=V\cdot V^{1/\alpha-1}\le V(M/c_\varphi)^{1-\alpha}$, whence
\begin{equation}
\label{eq:UI}
\E\Bigl[\mathcal X^{N}\bigl(|\mathcal L^{\kappa}\varphi|
\ind_{\{|\mathcal L^{\kappa}\varphi|>M\}}\bigr)\Bigr]
\ \le\ c_\varphi\Bigl(\frac{M}{c_\varphi}\Bigr)^{1-\alpha}
\E\bigl[\mathcal X^{N}(V)\bigr]
\ \le\ c_\varphi^{\,\alpha}\,\frac{C}{\lambda_*}\,M^{1-\alpha}\
\xrightarrow[M\to\infty]{}\ 0 ,
\end{equation}
uniformly in $N\ge2$ by the uniformity in~\eqref{eq:moment-XN}. Truncating
$\mathcal L^{\kappa}\varphi$ at level $M$ yields a bounded continuous function,
to which weak convergence in probability applies; letting $M\to\infty$
and using~\eqref{eq:UI} gives
$\E[\mathcal X^{N_k}(\mathcal L^{\kappa}\varphi)]\to
\nu^*(\mathcal L^{\kappa}\varphi)$. Passing to the limit
in~\eqref{eq:stationary-id},
\begin{equation}
\label{eq:qs-generator}
\nu^*\bigl(\mathcal L^{\kappa}\varphi_\psi\bigr)=-\lambda^*\nu^*(\varphi_\psi),
\; \lambda^*:=\nu^*(\kappa)\in[0,\|\kappa\|_\infty],
\end{equation}
for every test functional $\varphi_\psi$. This passes from the generator to the semigroup. Fix $\varphi_\psi$ and set
$\chi(t):=\nu^*(P^{\kappa}_t\varphi_\psi)$. By Lemma~\ref{lem:stability},
$P^{\kappa}_t\varphi_\psi=\varphi_{\psi_t}$ is again a test functional,
so~\eqref{eq:qs-generator} applies to it; and the proof of that lemma,
in~\ref{app:stability}, shows $t\mapsto\psi_t$ to be $C^1$ with derivative
bounded uniformly on compact time intervals, so $t\mapsto\varphi_{\psi_t}(\eta)$
is $C^1$ with $\partial_t\varphi_{\psi_t}=\mathcal L^{\kappa}\varphi_{\psi_t}$
and derivative dominated, locally uniformly in $t$, by
$c\langle\eta,\ind\rangle\le c'V(\eta)^{1/\alpha}$ for finite scalars $c,c'$,
This dominating function is $\nu^*$-integrable, since $V^{1/\alpha}\le V$ by
$V\ge1$ and $\nu^*(V)<\infty$, the latter following from~\eqref{eq:moment-XN} by
the portmanteau lemma applied to the truncations $V\wedge M$ and monotone
convergence. Differentiation under the integral sign is therefore justified and
$\chi'(t)=\nu^*(\mathcal L^{\kappa}\varphi_{\psi_t})
=-\lambda^*\nu^*(\varphi_{\psi_t})=-\lambda^*\chi(t)$, whence
$(\nu^*P^{\kappa}_t)(\varphi_\psi)=e^{-\lambda^*t}\nu^*(\varphi_\psi)$ for every
test functional. These being measure-determining on $E$, the identity between
the two finite measures forces
\begin{equation}
\label{eq:qsd}
\nu^*P^{\kappa}_t=e^{-\lambda^*t}\nu^*,\; t\ge0 ,
\end{equation}
so $\nu^*$ is quasi-stationary with decay rate $\lambda^*$. Finally $\lambda^*$
is positive: were it zero, taking $\varphi\equiv\ind$ in~\eqref{eq:qsd}, justified by
monotone approximation with test functionals, would give
$\Prob_{\nu^*}(\tau>t)=\nu^*P^{\kappa}_t(E)=1$ for every $t\ge0$, hence
$\Prob_{\nu^*}(\tau=\infty)=1$, contradicting the almost sure extinction of a
subcritical Bellman--Harris process under {\rm(H2)}, valid from every $\eta\in E$.
Therefore $\lambda^*>0$, and
$\lambda^*=\nu^*(\kappa)\le\|\kappa\|_\infty$ by~\eqref{eq:kappa}. This
proves~(iii).
\end{proof}

\begin{remark}
\label{rem:cov}
Write $\kappa_i:=\kappa(x^i)$, $V_i:=V(x^i)$,
$\bar\kappa:=N^{-1}\sum_{i=1}^{N}\kappa_i$ and
$\bar V:=N^{-1}\PsiV(\mathbf x)$. The rebirth term $J(\mathbf x)$
of~\eqref{eq:exact-identity} is exactly
\begin{align}
J(\mathbf x)&=\frac{1}{N-1}\Bigl(\PsiV(\mathbf x)\sum_{i=1}^N\kappa_i
-N\sum_{i=1}^N\kappa_iV_i\Bigr)=-\,N\,\Cov_{N-1}(\kappa,V),
\notag \\
\label{eq:cov-identity}
\Cov_{N-1}(\kappa,V)&:=\frac{1}{N-1}\sum_{i=1}^N(\kappa_i-\bar\kappa)(V_i-\bar V).
\end{align}
The identity is algebraic, valid at fixed $\mathbf x$; no probabilistic
property of $\Cov_{N-1}$ is invoked, the particles being exchangeable but not
independent. The rebirth term is thus positive only when $\kappa$ and $V$ are
negatively correlated across the population, and vanishes when $\kappa$ is
constant. The $N-1$ is the one carried by the rebirth mechanism.

The constant $\|\kappa\|_\infty$ in~\eqref{eq:drift-N} is sharp, though
approached rather than attained: by~\eqref{eq:kappa}, $\kappa_i$ and $V_i$ are
functions of the same particle and cannot be prescribed independently. Fix
$\delta>0$ and $a_\delta$ with
$\kappa_\delta:=p_0\mu(a_\delta)>\|\kappa\|_\infty-\delta$; put
$x^i:=\delta_{a_\delta}$ for $i\le N-1$, whence $\kappa_i=\kappa_\delta$ and
$V_i=c_\delta:=V(\delta_{a_\delta})$, and $x^N:=\delta_0+\delta_b$, whence
$\kappa_N=0$ and $V_N=:M_b\to\infty$ as $b\to\infty$. Then
$J=\kappa_\delta(M_b-c_\delta)$ and $\PsiV=(N-1)c_\delta+M_b$, so
$J/\PsiV\to\kappa_\delta$; as $\delta$ is arbitrary, the supremum of
$J/\PsiV$ is $\|\kappa\|_\infty$. Any improvement would require information on
the correlation in~\eqref{eq:cov-identity}, which the bound destroys.
\end{remark}

Theorem~\ref{thm:consistency} constrains the limit points but does not identify
them; the family of quasi-stationary distributions is not expected to be a
singleton (Section~\ref{sec:prelim}), and only the full strength of the criteria
of~\cite{CV2021}, for which Lemma~\ref{lem:survival} is the missing piece, 
selects one member of it. All three estimates being now available, the next
theorem carries out that selection and quantifies it in three ways: it exhibits
a unique quasi-stationary distribution in the class $\{\nu:\nu(V)<\infty\}$ and
gives the exponential rate at which the conditional laws reach it; it identifies
that distribution as the Yaglom limit and its decay rate as the Malthusian
parameter; and it bounds the error of the stationary empirical measure by a
negative power of the number of particles.

\begin{theorem}
\label{thm:convergence}
Assume {\rm(H1)}--{\rm(H6)}. Then:
\begin{enumerate}[label=\emph{(\roman*)},leftmargin=2.4em]
\item the killed semigroup admits a unique quasi-stationary distribution
$\nu_{\mathrm{QSD}}$ with $\nu_{\mathrm{QSD}}(V)<\infty$, and there are
scalars $c,\gamma>0$, depending on $(G,h)$ and on $V$ through
\cite[Theorem~2.1]{CV2021} but on neither $t$ nor $N$, such that
\begin{equation}
\label{eq:tv}
\bigl\|\Prob_{\pi_0}\bigl(\eta_t\in\cdot\mid \tau>t\bigr)
-\nu_{\mathrm{QSD}}\bigr\|_{\mathrm{TV}}
\ \le\ c\,\pi_0(V)\,e^{-\gamma t},\; t>0,
\end{equation}
for every probability measure $\pi_0$ on $E$ such that the scalar
$\pi_0(V)=\int_EV\,d\pi_0$ is finite;
\item $\nu_{\mathrm{QSD}}=\nuY$, the Yaglom limit, and its decay rate is the
Malthusian parameter $\lambda_0$; in particular the conditional laws
$\Prob_{\delta_0}(\eta_t\in\cdot\mid\tau>t)$ converge to $\nuY$ in total
variation, at the exponential rate $\gamma$;
\item there is a scalar $d>0$, depending only on $c$, $\gamma$,
$\|\kappa\|_\infty$, the right-hand side of~\eqref{eq:moment-XN} and the
constant of~\cite{Villemonais2014} and in particular not on $N$ such
that, for every bounded measurable $\varphi:E\to\R$ and every integer $N\ge2$,
\begin{equation}
\label{eq:rate}
\E\bigl[\,\bigl|\mathcal X^N(\varphi)-\nuY(\varphi)\bigr|\,\bigr]
\ \le\ \frac{d}{N^{\varpi}}\,\|\varphi\|_\infty,
\;
\varpi:=\frac{\gamma}{2\bigl(\|\kappa\|_\infty+\gamma\bigr)}
\in\bigl(0,\tfrac12\bigr) ,
\end{equation}
and consequently $\mathcal X^N(\varphi)\to\nuY(\varphi)$ in probability as
$N\to\infty$.
\end{enumerate}
Part~\emph{(iii)} holds for every integer $N\ge2$, whereas
\cite[Theorem~2.3]{CV2021} states the same bound only for
$N>\lambda_1/(\lambda_1-\|\kappa\|_\infty)$. That restriction enters their
argument at a single point, the moment bound of~\cite[Theorem~2.2]{CV2021},
whose denominator $\lambda_1-\|\kappa\|_\infty N/(N-1)$ is negative for small
$N$; \eqref{eq:moment-XN} replaces it by a bound uniform in $N\ge2$, and the
restriction disappears with it.
\end{theorem}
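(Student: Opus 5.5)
Part~(i) follows by checking, one by one, the hypotheses of \cite[Theorem~2.1]{CV2021} for the unkilled process $X$ on the Polish space $E$ with bounded killing rate $\kappa$. The Foster--Lyapunov condition with rate $\lambda_1>\|\kappa\|_\infty$ and norm-like $V$ is Lemma~\ref{lem:lyapunov}. The small-set property for compact sets is Lemma~\ref{lem:smallsets}; it is in fact stated for the killed kernel $P^\kappa_{t_L}$, which is the stronger form used there. The uniform comparison of survival probabilities on compact sets is Lemma~\ref{lem:survival}, and each compact set lies in some $K_L$. That theorem then gives a unique QSD $\nu_{\mathrm{QSD}}$ in $\{\nu(V)<\infty\}$ and the total-variation bound \eqref{eq:tv}, with constants $c,\gamma$ depending only on these inputs, and hence not on $N$. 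I expect the only care needed here to be in matching the extended-domain conventions: $V$ must be in the extended domain of $\mathcal L_X$, which the localisation after \eqref{eq:sup-Z-X} ensures.

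For~(ii), apply \eqref{eq:tv} with $\pi_0=\delta_{\delta_0}$. Since $V(\delta_0)=2+\varepsilon<\infty$, the conditional laws converge in total variation at rate $\gamma$ to $\nu_{\mathrm{QSD}}$, which is therefore the Yaglom limit $\nuY$. To identify the decay rate $\lambda$, test the QSD identity against $\eta\mapsto\langle\eta,v\rangle$. By Lemma~\ref{lem:malthus}(iii), $\mathcal L\langle\eta,v\rangle=-\lambda_0\langle\eta,v\rangle$, and $\langle\mathbf 0,v\rangle=0$, so after the usual localisation, justified by $\langle\eta,v\rangle\le v_{\max}Z$ and \eqref{eq:sup-Z}, we get $P^\kappa_t\langle\cdot,v\rangle=e^{-\lambda_0t}\langle\cdot,v\rangle$. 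Integrating against $\nuY$ gives
\[
e^{-\lambda t}\nuY(\langle\cdot,v\rangle)=e^{-\lambda_0 t}\nuY(\langle\cdot,v\rangle).
\]
Here $\nuY(\langle\cdot,v\rangle)\in(0,\infty)$, because $m\le v$, $Z\ge1$ on $E$, and $\langle\eta,v\rangle\le V(\eta)^{1/\alpha}$ with $\nuY(V)<\infty$. Hence $\lambda=\lambda_0$. An alternative route would use Theorem~\ref{thm:consistency}(iii), but the eigenfunction argument is cleaner and needs nothing beyond Lemma~\ref{lem:malthus}.

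For~(iii), I would follow the proof of \cite[Theorem~2.3]{CV2021}, which combines the exponential contraction \eqref{eq:tv} with the Villemonais comparison \cite{Villemonais2014}. That comparison bounds the discrepancy between $m^N_t$ and the conditioned law $\Prob_{m^N_0}(\eta_t\in\cdot\mid\tau>t)$ by roughly $e^{\|\kappa\|_\infty t}/\sqrt N\,\|\varphi\|_\infty$. One starts the particle system from $\Lambda^N$, so that $m^N_t\overset{d}{=}\mathcal X^N$ for all $t$. Writing
\[
\mathcal X^N(\varphi)-\nuY(\varphi)=\bigl[m^N_t(\varphi)-\Prob_{m^N_0}(\eta_t(\varphi)\mid\tau>t)\bigr]+\bigl[\Prob_{m^N_0}(\cdot\mid\tau>t)-\nuY\bigr](\varphi),
\]
the first bracket has expectation at most $C'e^{\|\kappa\|_\infty t}N^{-1/2}\|\varphi\|_\infty$. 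By \eqref{eq:tv}, the second is at most $c\,\E[\mathcal X^N(V)]e^{-\gamma t}\|\varphi\|_\infty$, and \eqref{eq:moment-XN} makes this uniform in $N\ge2$. Choosing $t$ with $e^{(\|\kappa\|_\infty+\gamma)t}=\sqrt N$ balances the two terms and gives the exponent $\varpi=\gamma/(2(\|\kappa\|_\infty+\gamma))$; convergence in probability then follows by Markov's inequality.

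The main obstacle is precisely this step: in \cite{CV2021} the moment bound is the only place where the restriction $N>\lambda_1/(\lambda_1-\|\kappa\|_\infty)$ enters. I would verify that their argument uses \cite[Theorem~2.2]{CV2021} solely to bound $\E[m^N_0(V)]$ under stationarity, so that substituting \eqref{eq:moment-XN} removes the restriction. I would also check that the Villemonais estimate needs only a bounded $\kappa$ and holds for every $N\ge2$.
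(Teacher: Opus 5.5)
Your proposal is correct and follows the paper's proof essentially step for step. For~(i), Assumption~H of \cite{CV2021} is verified through Lemmas~\ref{lem:lyapunov}, \ref{lem:smallsets} and~\ref{lem:survival}. For~(ii), the choice $\pi_0=\delta_{\delta_0}$ together with the eigenfunction $\eta\mapsto\langle\eta,v\rangle$, made a true martingale by \eqref{eq:sup-Z}, identifies $\nuY$ and $\lambda_0$. For~(iii), the estimate of \cite{Villemonais2014} is combined with \eqref{eq:tv} and the $N$-uniform bound \eqref{eq:moment-XN} at $t=\log N/[2(\|\kappa\|_\infty+\gamma)]$; the existence of $\Lambda^N$ for every $N\ge2$, which you use implicitly when starting the system at stationarity, comes from Theorem~\ref{thm:consistency}(i), exactly as in the paper.
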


\begin{proof}
Assumption~H of~\cite{CV2021} holds for $X$, its three requirements being
exactly our three estimates: compact sets are small sets for $X$
(Lemma~\ref{lem:smallsets}); the function $V$ of~\eqref{eq:lyapunov} is locally
bounded, has relatively compact sublevel sets and satisfies
$\mathcal L_XV\le-\lambda_1V+C$ with $\lambda_1>\|\kappa\|_\infty$
(Lemma~\ref{lem:lyapunov}); and survival probabilities are uniformly comparable
on compacts (Lemma~\ref{lem:survival}). Part~(i) is
then~\cite[Theorem~2.1]{CV2021}.

We next identify $\nu_{\mathrm{QSD}}$, a single newborn has $V(\delta_0)=1+v(0)^\alpha+\varepsilon\rho(0)=2+\varepsilon<\infty$, since $v(0)=1$ by Lemma~\ref{lem:malthus}(iii) and $\rho(0)=1$, so~\eqref{eq:tv} applies with $\pi_0=\delta_{\delta_0}$, the Dirac mass at the configuration
$\delta_0\in E$ carrying one individual of age zero, and yields the convergence of $\Prob_{\delta_0}(\eta_t\in\cdot\mid\tau>t)$ to $\nu_{\mathrm{QSD}}$ in total variation. 

By definition, the Yaglom limit is the weak limit of these conditional laws; hence it exists and $\nu_{\mathrm{QSD}}=\nuY$. For the decay rate, let $\lambda>0$ be the scalar such that $\nuY P_t^{\kappa}=e^{-\lambda t}\nuY$ and write $w(\eta):=\langle\eta,v\rangle$, so that $w(\mathbf 0)=0$. Since
$\mathcal Lw=-\lambda_0w$ by Lemma~\ref{lem:malthus}(iii), the process $(e^{\lambda_0t}w(\eta_t))_{t\ge0}$ is a local martingale; it is dominated on $[0,t]$ by $e^{\lambda_0t}v_{\max}\sup_{s\le t}Z_s$, which is integrable
by~\eqref{eq:sup-Z}, hence a true martingale, and therefore $P_t^{\kappa}w=e^{-\lambda_0t}w$, the indicator $\ind_{\{\tau>t\}}$ being redundant because $w(\mathbf 0)=0$.

Consequently
$e^{-\lambda t}\nuY(w)=(\nuY P_t^{\kappa})(w)=\nuY(P_t^{\kappa}w)
=e^{-\lambda_0t}\nuY(w)$ for every $t\ge0$. The scalar $\nuY(w)$ is finite,
because $w\le V^{1/\alpha}$ by $V\ge\langle\eta,v\rangle^\alpha$
and $\nuY(V)<\infty$ by~(i), and it satisfies $\nuY(w)\ge m$, because $v\ge m$
by Lemma~\ref{lem:malthus}(iii) and every $\eta\in E$ carries at least one
individual; dividing by $\nuY(w)>0$ gives $e^{-\lambda t}=e^{-\lambda_0t}$ for
every $t\ge0$, hence $\lambda=\lambda_0$. This proves~(ii).

It remains to establish the rate. Since $\kappa$ is bounded
by~\eqref{eq:kappa}, the general approximation estimate of
Villemonais~\cite{Villemonais2014}, valid on any Polish state space, provides a
scalar $d_0>0$ such that
$\E|m_t^N(\varphi)-\E_{m_0^N}(\varphi(\eta_t)\mid\tau>t)|
\le(d_0/\sqrt N)\|\varphi\|_\infty e^{\|\kappa\|_\infty t}$ for all integers
$N\ge2$, all $t\ge0$ and all bounded measurable $\varphi$. The first factor
decreases in $N$ but grows in $t$, whereas the contraction~\eqref{eq:tv} decays
in $t$; combining the two with the moment bound~\eqref{eq:moment-XN}, which
controls $\E[\mathcal X^N(V)]$ uniformly in $N$, and optimising the resulting
sum over $t$ at $t=\log N/[2(\|\kappa\|_\infty+\gamma)]$, gives~\eqref{eq:rate}.
This is the argument of~\cite[Theorem~2.3]{CV2021}, with~\eqref{eq:moment-XN}
substituted for the moment bound of~\cite[Theorem~2.2]{CV2021}. The remaining
ingredients, the estimate of~\cite{Villemonais2014}, the
contraction~\eqref{eq:tv} and Assumption~H, hold for every integer $N\ge2$,
so the substitution removes the restriction on $N$ that the theorem carries.
Markov's inequality turns~\eqref{eq:rate} into convergence in probability.
\end{proof}
The two theorems combine into an exact identity, which the application of
Section~\ref{sec:app} uses as a parameter-free check on the estimator: the
Malthusian parameter, fixed by~\eqref{eq:malthus} alone, must coincide with a
functional of $\nuY$ produced by the particle system.

\begin{corollary}
\label{cor:mean-killing}
Under {\rm(H1)}--{\rm(H6)}, the Malthusian parameter is the mean killing rate
under the Yaglom limit:
$
\lambda_0\ =\ \nuY(\kappa)\ =\ p_0\int_0^\infty\mu(a)\,
\bigl(\nuY|_{\{Z=1\}}\circ\iota^{-1}\bigr)(da) ,
$
where $\iota:\R_+\to E$, $\iota(a):=\delta_a$, identifies $\R_+$ with the
singleton stratum $\{Z=1\}\subset E$, so that
$\nuY|_{\{Z=1\}}\circ\iota^{-1}$ is the sub-probability law on $\R_+$ of the age
carried by a lone individual under $\nuY$.
\end{corollary}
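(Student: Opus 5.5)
The plan is to show directly that any quasi-stationary distribution $\nu$ with decay rate $\lambda$ satisfies $\lambda=\nu(\kappa)$, and then to specialise to $\nu=\nuY$, whose decay rate is $\lambda_0$ by Theorem~\ref{thm:convergence}(ii). This route uses only (ii). The alternative of taking a limit point of $\mathcal X^N$ via Theorem~\ref{thm:consistency}(iii) would first require knowing that $\mathcal X^N$ concentrates on $\nuY$. That is in fact available from Theorem~\ref{thm:convergence}(iii) together with the tightness given by~\eqref{eq:moment-XN}, so it could serve as a second route. I will follow the direct one and keep the particle route as a cross-check.

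The first step is to apply quasi-stationarity to the constant function. By definition of the killed semigroup,
$\nuY P_t^\kappa(E)=\Prob_{\nuY}(\tau>t)=e^{-\lambda_0t}$.
Under $\eta$, absorption happens only through the transition $\delta_a\mapsto\mathbf 0$, which occurs at rate $\kappa(\eta)$. Hence $\ind_{\{\tau>t\}}+\int_0^{t\wedge\tau}\kappa(\eta_s)\,ds$ is a martingale, the compensator being bounded because $\|\kappa\|_\infty<\infty$ by~\eqref{eq:kappa}. Taking expectations gives
\[
1-\Prob_{\nuY}(\tau>t)=\int_0^t\E_{\nuY}\bigl[\kappa(\eta_s)\ind_{\{\tau>s\}}\bigr]\,ds=\int_0^t(\nuY P_s^\kappa)(\kappa)\,ds=\nuY(\kappa)\int_0^te^{-\lambda_0 s}\,ds,
\]
where the last equality uses quasi-stationarity once more, this time with the bounded measurable test function $\kappa$. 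The left-hand side equals $1-e^{-\lambda_0t}=\lambda_0\int_0^te^{-\lambda_0s}ds$. Comparing the two sides, or differentiating at $t=0$, yields $\lambda_0=\nuY(\kappa)$.

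To apply quasi-stationarity to the bounded measurable functions $\ind$ and $\kappa$, I need the identity $\nuY P_t^\kappa=e^{-\lambda_0 t}\nuY$ as an identity of measures. This holds because Theorem~\ref{thm:convergence}(ii) gives the QSD property in measure form, which is also what~\eqref{eq:qsd} provides. So no approximation by test functionals is needed beyond what the paper already did.

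The second, explicit expression is bookkeeping. Since $\kappa$ vanishes off the singleton stratum $\{Z=1\}$, which is open and closed, we have $\nuY(\kappa)=\int_{\{Z=1\}}p_0\mu(a)\,\nuY(d\delta_a)$. Pushing forward by $\iota^{-1}$, which is a homeomorphism of $\{Z=1\}$ onto $\R_+$, gives the stated integral against the sub-probability law of the lone individual's age.

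The only delicate point is justifying the compensator identity for $\tau$. It follows from~\eqref{eq:LX}: $\eta$ is $X$ killed at rate $\kappa$, so conditionally on the path of $X$,
\[
\Prob(\tau>t\mid X)=\exp\Bigl(-\int_0^t\kappa(X_s)\,ds\Bigr).
\]
Differentiating this expression in $t$ and taking expectations recovers the same formula without any martingale argument.
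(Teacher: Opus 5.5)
Your proof is correct, but it takes a different route from the paper's.

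The paper goes through the particle system. Theorem~\ref{thm:convergence}(iii) makes $\nuY$ the only possible limit of $\mathcal X^N$. Theorem~\ref{thm:consistency}(iii) then says this limit is quasi-stationary with decay rate $\nuY(\kappa)$. Theorem~\ref{thm:convergence}(ii) identifies that same decay rate as $\lambda_0$, and a QSD has only one decay rate, since $\nu P^\kappa_t(E)=e^{-\lambda t}$ determines $\lambda$.

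You instead prove the classical soft-killing identity $\Prob_\nu(\tau\le t)=\int_0^t(\nu P^\kappa_s)(\kappa)\,ds$, either from the compensator of $\ind_{\{\tau\le t\}}$ or from $\Prob(\tau>t\mid X)=\exp(-\int_0^t\kappa(X_s)\,ds)$. Inserting quasi-stationarity then shows that \emph{every} QSD $\nu$ has decay rate $\nu(\kappa)$. So you only need that $\nuY$ is a QSD with rate $\lambda_0$, which is parts (i)--(ii) of Theorem~\ref{thm:convergence}.

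What your route buys:
\begin{enumerate}
\item It is shorter and more general.
\item It bypasses part (iii), with its reliance on the propagation-of-chaos estimate and the argument of Champagnat and Villemonais.
\item It avoids a routine but unstated step in the paper. One must pass from convergence in probability of each $\mathcal X^N(\varphi)$ to weak convergence of the law of $\mathcal X^N$ to $\delta_{\nuY}$ in $\mathcal P(\mathcal P(E))$, which is the hypothesis of Theorem~\ref{thm:consistency}(iii).
\item It shows that the identification $\lambda^*=\nu^*(\kappa)$ in Theorem~\ref{thm:consistency}(iii) is automatic once $\nu^*$ is known to be quasi-stationary.
\end{enumerate}

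In exchange, the paper's route reads the identity directly off the particle system, which is how it is used as a check in Section~\ref{sec:app-results}. It adds no logical content, however. The consistency statement $\mathcal X^N(\kappa)\to\lambda_0$ in probability also follows from your identity combined with Theorem~\ref{thm:convergence}(iii), since $\kappa$ is bounded and measurable.
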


\begin{proof}
By Theorem~\ref{thm:convergence}(iii), $\mathcal X^N\to\nuY$ in probability, so
$\nuY$ is the only limit point in Theorem~\ref{thm:consistency}; hence
$\lambda^*=\nuY(\kappa)$, and $\lambda^*=\lambda_0$ by
Theorem~\ref{thm:convergence}(ii). The second equality is~\eqref{eq:kappa}
integrated against $\nuY$, the killing rate vanishing off $\{Z=1\}$.
\end{proof}

\begin{remark}
\label{rem:markov}
When $G=\mathrm{Exp}(\mu)$ the process is a continuous-time Galton--Watson
process and our constants degenerate to the known ones: $\mu^*=\mu_\infty=\mu$,
$\|\kappa\|_\infty=p_0\mu$ and $\lambda_0=\mu(1-m)$ by~\eqref{eq:malthus}, so
that $\alpha^*:=\|\kappa\|_\infty/\lambda_0=p_0/(1-m)$ and {\rm(H6)} is the
moment condition of~\cite[Proposition~3.3]{CV2021}. Only {\rm(H5)} is extra,
reading $p_0<\tfrac12$ here. For $G=\mathrm{Exp}(1)$ and
$(p_0,p_1,p_2)=(0.6,\,0.1,\,0.3)$, so that $m=0.7$, all of {\rm(H1)}--{\rm(H4)}
and {\rm(H6)} hold with $\alpha^*=2$, yet
$p_0\mu^*=0.6\not<0.4=(1-p_0)\mu_\infty$: Theorem~\ref{thm:convergence} does not
apply, although \cite[Proposition~3.3]{CV2021} covers this process. The gap is
structural. Once the state carries the ages, any admissible Lyapunov function
must blow up along the singleton ray $\{\delta_a\}_{a\ge0}$, on which $X$
escapes at rate only $(1-p_0)\mu(a)$; Proposition~\ref{prop:sharp} turns this
into $\lambda_1\le(1-p_0)\mu_\infty$, which against
$\lambda_1>\|\kappa\|_\infty$ is {\rm(H5)}. Working on $\N^*$, as one may
in the exponential case, avoids the ray, so {\rm(H5)} is the price of the lift
and not a defect of the estimates.
\end{remark}
 
\begin{remark}
\label{rem:yaglom}
Part~(ii) of Theorem~\ref{thm:convergence} upgrades the Yaglom limit in its mode
of convergence and in its moments. In the Galton--Watson case it exists under
subcriticality alone, the $x\log x$ condition being equivalent to
$R$-positivity, hence to a finite mean~\cite[Section~5]{SVJ1966}. Here {\rm(H6)}
gives $\nuY(V)<\infty$, so $\nuY$ has a finite moment of order $\alpha>1$, and
the convergence holds in total variation, at an exponential rate,
from every initial law $\pi_0$ with $\pi_0(V)<\infty$, the decay rate being the
Malthusian parameter.
\end{remark}
 
\begin{remark}
\label{rem:rate}
The exponent in~\eqref{eq:rate} is strictly smaller than $\tfrac12$, whereas the
central limit theorems available for Fleming--Viot systems with soft
killing~\cite{CDGR2020} suggest $N^{-1/2}$. The loss is structural: the
propagation-of-chaos constant grows in $t$ like $e^{\|\kappa\|_\infty t}$ and
must be traded against the contraction $e^{-\gamma t}$, so a chaos estimate
uniform in time would restore the parametric rate. The constants are not on the
same footing either: the scalar $C$ of Lemma~\ref{lem:lyapunov} is explicit,
being $K_1+C_4+\lambda_1$ in the notation of~\ref{app:lyapunov}, whereas
$\gamma$ is inherited from~\cite[Theorem~2.1]{CV2021}, which does not track its
dependence on the Doeblin constants of Lemma~\ref{lem:smallsets}, and $d$
inherits that obstruction. Since the particle system is ergodic for every
integer $N\ge2$ by~\eqref{eq:drift-N}, no admissibility threshold constrains the
simulation and $N$ is governed by accuracy alone; until $\gamma$ is made
explicit it must be chosen empirically, as in Section~\ref{sec:app-results}.
\end{remark}
 


\section{Application to foot-and-mouth disease surveillance}
\label{sec:app}


Foot-and-mouth disease is a highly contagious infection of cloven-hoofed
livestock, contained in disease-free regions by movement bans and by the culling
of infected premises. These measures lower the effective reproduction number,
and once it falls below one the outbreak is subcritical and dies out almost
surely, so that a controlled outbreak is monitored rather than merely
suppressed and its long-time behaviour is meaningful only conditionally on
persistence. This is exactly the object the Yaglom limit describes, and our test
case is the serotype~O epidemic of 2001 in Cumbria, the worst-affected county of
an outbreak that reached $2\,026$ infected premises nationally.

We use the \texttt{fmd} data set of the \textsf{R} package
\textsf{stpp}~\cite{gabriel2013}, recording the outbreak in north Cumbria
studied by Diggle, Rowlingson and Su~\cite{diggle2005}: easting, northing and
reported day, day~$0$ being 1~February 2001. The two lines

\begin{quote}\small
\begin{verbatim}
library(stpp); data(fmd)
write.csv(data.frame(day = fmd[, 3]), "fmd_stpp.csv", row.names = FALSE)
\end{verbatim}
\end{quote}

\noindent
extract the temporal component and reproduce, from the public package alone,
every count below; we used version~2.0-8, and the qualified form
\texttt{stpp::fmd} should be used, the name being shared with an unrelated data
set of \textsf{sparr}. The record holds $n=648$ reported premises over days $28$
to $198$, that is $171$ calendar days of which $126$ carry at least one report.
The prevalence they imply, once each premises is given an infectious period
under the fitted lifetime law, reaches some $140$ active farms around day~$58$
before declining by more than an order of magnitude; it is that decline, the
window $[65,198]$, carrying $267$ premises over $134$ days of which $43$ record
no case, that we calibrate on. Only the temporal coordinate enters the analysis,
which is why the caveat attached to the data set, that the farms at risk are
withheld and the record is supplied for illustration, does not bear on what
follows: we require a single rate, and a rate of decline is a property of the
numerator alone.

At the farm scale every object of the theory has an epidemiological reading. A
branching individual is an infected premises; its lifetime $T\sim G$ is the
infectious period, from onset to removal by culling, so that the hazard
$\mu(a)=g(a)/\bar G(a)$ is the culling pressure on a premises infectious for $a$
days, that is, the chance per unit time that a still-undetected infection of age
$a$ is found and removed. Its offspring are the premises it infects, each
entering at age zero, so that $m=R_{\mathrm{eff}}$ and $p_0$ is the chance of
being culled having infected no other. The two extremes of the hazard govern the
analysis in opposite ways: $\mu^{*}$ is the strongest pressure the regime ever
exerts and bounds the killing rate through $\|\kappa\|_\infty=p_0\mu^{*}$, while
$\mu_\infty$ is the pressure still applied to long-standing infections and is
what confines the ages. Assumption~{\rm(H5)}, $p_0\mu^{*}<(1-p_0)\mu_\infty$,
then reads: a lone surviving premises must be more likely to infect
another farm before removal than the outbreak is to end with its childless
culling. It is a statement about the balance between transmission and detection
at the very end of an outbreak, which is the regime surveillance is concerned
with.

The state is a measure, and it is worth saying plainly what that means. At time
$t$ the outbreak is described not by a number but by the point measure
$\eta_t=\sum_i\delta_{a_i(t)}$ on $\mathbb{R}_+$, whose atoms are the ages of
infection of the premises still infectious: for how long each has been spreading
the virus, not merely how many are doing so. Its mass
$Z_t=\langle\eta_t,\mathbf{1}\rangle$ is the prevalence, an integer varying
along the trajectory, so the state has no fixed dimension: the space
$\mathcal{M}_p(\mathbb{R}_+)$ is a disjoint union of strata
$\mathbb{R}_+^n/\mathfrak{S}_n$ of every dimension, which is why bounding the
prevalence does not bound the state. The quotient by $\mathfrak{S}_n$ is a
modelling statement: premises are exchangeable, only the multiset of ages
matters. The Yaglom limit sits one level higher, being the law of a random
configuration, and its two images are not equivalent. The count marginal
$\pi_Y$ is a law on the positive integers, what a theory carried on
$\mathbb{N}^{*}$ would deliver; the normalised intensity measure $\bar\alpha_Y$
is a law on $\mathbb{R}_+$, obtained by pooling the ages of infection of all the
premises that a configuration drawn from $\nu_Y$ carries. Only the lift produces
it. Two outbreaks carrying the same number of active premises differ
epidemiologically according as their infections are fresh, culling keeping pace,
or long standing, the same farms remaining infectious because removal has
slowed; $\pi_Y$ cannot tell them apart, and that $\bar\alpha_Y$ can is what
Section~\ref{sec:app-results} exploits. Extinction, finally, is itself a
statement about this geometry: the outbreak can end only from the singleton
stratum, and only through the culling of a last premises that has infected none.

Three modelling assumptions should be stated plainly. Branching neglects
depletion of susceptibles, which is legitimate on the controlled decline, where
the infectious premises are few relative to the population at risk, and not near
the peak. Offspring numbers are taken independent across premises, whereas
contact structure is spatial, second-order analyses of this very record
reporting space--time clustering below some five kilometres and ten
days~\cite{gabriel2013}. This understates the variance of $\pi_Y$ without
biasing the Malthusian rate, that depending on the mean offspring number only.
And a fixed pair $(G,h)$ presumes a control regime that does not drift over the
calibration window, an assumption tested in Section~\ref{sec:app-calib} by
varying the window and whose failure is what Section~\ref{sec:app-results} is
designed to catch.

The record constrains the model through a single scalar. The estimator needs the
full age configuration, yet the record supplies only dates, culling dates being
unrecorded, so that $\eta_t$ is nowhere present in the data used here. What we
extract is the rate $\lambda_0$ at which the epidemic declines once under
control, which the births alone determine, the incidence and every linear
functional of $\eta_t$ decaying in the subcritical regime, in expectation and
asymptotically, at the common Malthusian rate of~\eqref{eq:malthus}; the
reproduction number follows by the Euler--Lotka inversion~\eqref{eq:EL}, fixing
$(G,h)$, and the whole configuration, with its Yaglom limit, is then
reconstructed from $(G,h)$ by the particle system. That the record timestamps
reports rather than onsets is immaterial for the rate: a notification delay
convolves the birth intensity with its own law, changing the multiplicative
constant of an exponential decay and leaving its rate unchanged. The prevalence
of Figure~\ref{fig:record}(B) peaks near $140$ active premises whereas the
Yaglom limit carries an average of three, which is no tension, the
quasi-stationary regime describing the persistent tail of a controlled outbreak
and not its epidemic phase. Figures~\ref{fig:record} and~\ref{fig:config}
display the two halves of the argument, the record as it is read and what the
theory regenerates from it.

\begin{figure}[h!]
  \centering
  \includegraphics[width=\linewidth]{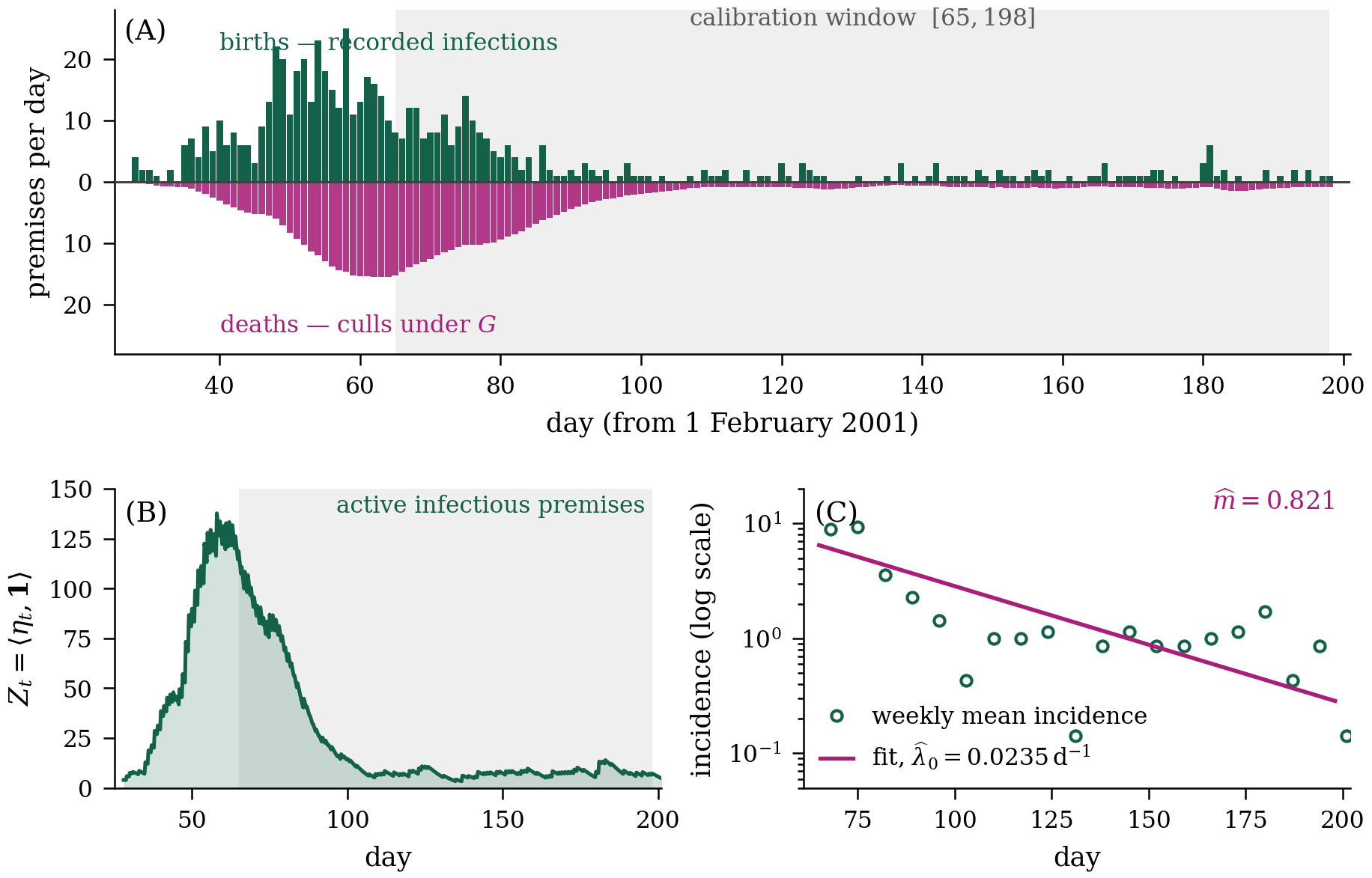}
  \vspace{-0.9cm}
  \caption{The 2001 north Cumbrian foot-and-mouth outbreak read as a branching
    process, \textsf{stpp} data set~\cite{gabriel2013,diggle2005}. (A)~Births,
    the $648$ recorded premises, above the axis; deaths, the culls placed under
    the fitted lifetime law $G=\Gamma(2,4)$, below it; the shaded band is the
    calibration window $[65,198]$, which carries $267$ of the $648$ premises.
    (B)~The reconstructed prevalence $Z_t$, peaking near $140$ active premises
    around day~$58$. (C)~The decline in logarithmic coordinates, weekly mean
    incidence against the fitted exponential; the slope is
    $\widehat\lambda_0=0.0235\,\mathrm{d}^{-1}$, whence $\widehat m=0.821$
    by~\eqref{eq:EL}. Day~$0$ is 1~February 2001.}
  \label{fig:record}
\end{figure}

\begin{figure}[h!]
  \centering
  \includegraphics[width=\linewidth]{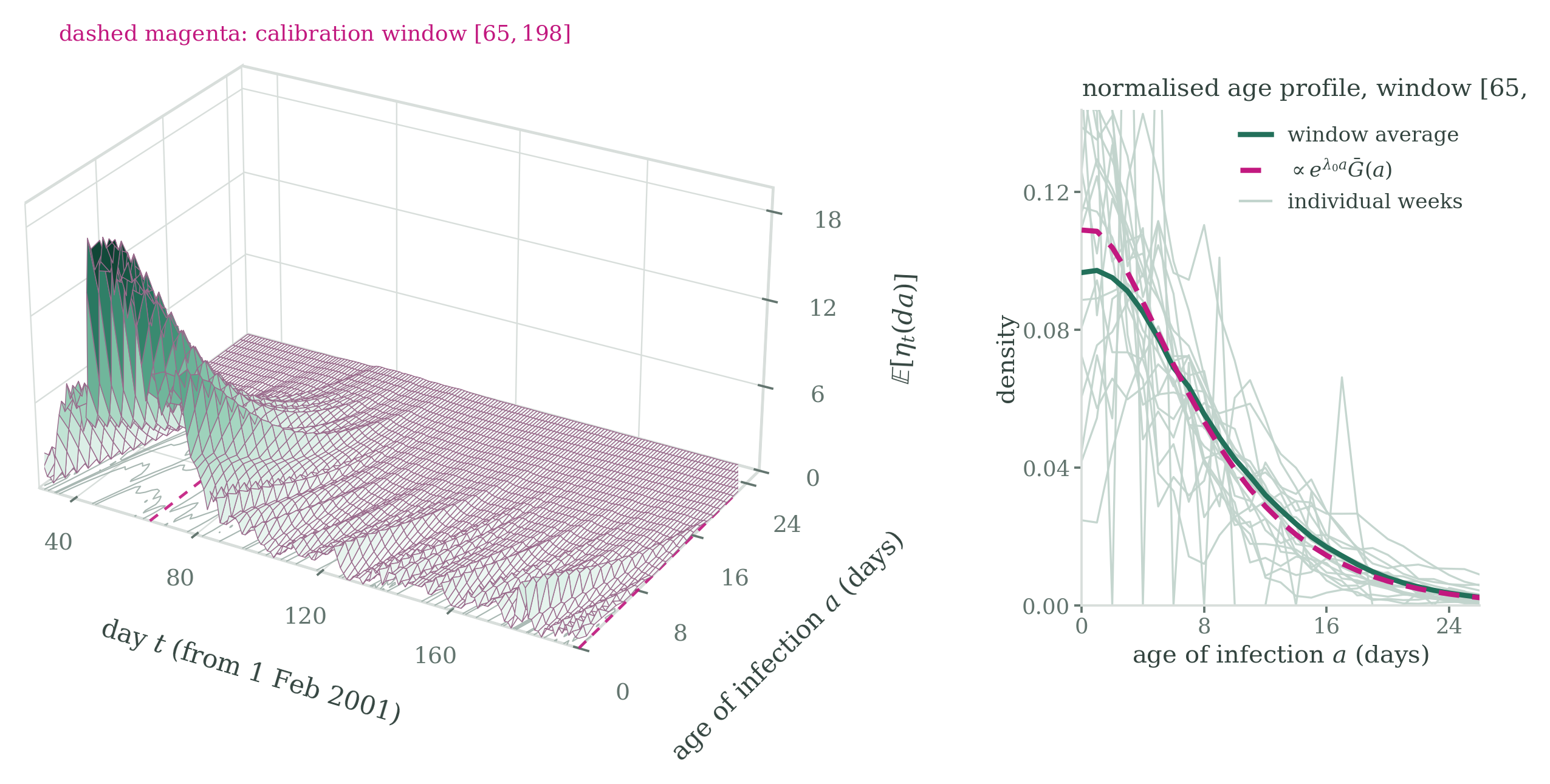}
    \vspace{-0.5cm}
  \caption{The age configuration reconstructed from the record. Left: the mass
    of $\eta_t$ above the time--age plane, with its level curves projected on
    the base and the calibration window marked by the dashed verticals. Each
    recorded premises contributes an atom that ages at unit speed and is removed
    under $G$. The ridge is narrow and stands at low ages, three quarters of the
    mass being carried by premises infectious for less than ten days, and it
    falls in height through the calibration window without moving in age, which
    is the decline of a subcritical population at fixed age structure. Right:
    the same profiles normalised over the window, week by week in grey and in
    average in green, against the stable-age density
    $\propto e^{\lambda_0a}\bar G(a)$. The surface is not available from the
    surveillance record, which contains dates alone.}
  \label{fig:config}
\end{figure}

\subsection{Calibration and verification of the assumptions}
\label{sec:app-calib}

The lifetime law is taken from the published analysis of the full DEFRA record.
Chis~Ster and Ferguson~\cite{chisster2007} report infection-to-cull delays of
some $8.8$--$9.4$ days on the $2\,026$ national premises; netting out a one-day
latent period leaves a mean infectious period near eight days, so we set
$\mathbb{E}[T]=8$ and take $G=\Gamma(k,\vartheta)$ with $k=2$ and
$k\vartheta=\mathbb{E}[T]$, hence $\vartheta=4$ days, the shape being warranted
by the meta-analysis of Mardones et al.~\cite{mardones2010}. This settles two
assumptions outright. The Gamma density is continuous, positive on $(0,\infty)$
and bounded away from zero on compact subsets of it, with $G(0)=0$ and full
support, which is {\rm(H1)}; and its hazard
$\mu(a)=\frac{a}{\vartheta(\vartheta+a)}$ increases from zero to $1/\vartheta$,
so that $\mu^{*}=\mu_\infty=1/\vartheta=0.25$ per day, both finite and positive,
which is {\rm(H4)}. Epidemiologically, a culling pressure that rises with the
age of an infection and saturates is what one expects of a detection process: a
premises infectious for a fortnight is far more likely to have been found than
one infectious for a day, but no regime detects instantly. For the offspring law
we follow Bessell et al.~\cite{bessell2010} and take
$\xi\sim\mathrm{Poisson}(m)$, whence $p_0=e^{-m}$ and $p_1=me^{-m}>0$, the
latter being {\rm(H3)} and being what the survival comparison of
Lemma~\ref{lem:survival} requires. The three remaining assumptions depend on
$m$, hence on the calibration. Equation~\eqref{eq:malthus}, the root condition
$m\int_0^\infty e^{\lambda t}\,dG(t)=1$, is the Euler--Lotka equation; for the
Gamma law it inverts in closed form,
\begin{equation}
  \label{eq:EL}
  \widehat m=\bigl(1-\widehat\lambda_0\,\vartheta\bigr)^{k}.
\end{equation}
A log-linear Poisson regression of the daily incidence over the controlled
decline $[65,198]$, over $134$ days of which $43$ carry no case and all enter
the fit, gives $\widehat\lambda_0=0.0235\ \text{d}^{-1}$ and, through
\eqref{eq:EL}, $\widehat m=0.821$. The residual deviance is $242.2$ on $132$
degrees of freedom and the Pearson dispersion is $\widehat\phi=2.17$, so the
daily counts are not Poisson about the fitted curve: the contacts of an epidemic
are spatially structured, which inflates the variance, and the decline is
steeper in its first weeks than in its long tail, so that a single exponential
is a first-order summary. We therefore treat $\widehat\lambda_0$ as a
quasi-likelihood estimate, unaffected by the dispersion since it solves the same
estimating equation, and widen its interval by $\widehat\phi^{1/2}$, obtaining
$95\%$ confidence intervals $[0.0178,0.0291]$ for $\widehat\lambda_0$ and
$[0.781,0.863]$ for $\widehat m$. Those cover the overdispersion but not the
serial dependence a branching mechanism induces, and are therefore a lower bound
on the uncertainty.

What this delivers is an effective reproduction number for the controlled phase
as a whole, averaged over the window. Its endpoints matter: across start days in
$[60,75]$ and end days in $[180,198]$, $\widehat m$ ranges over $[0.751,0.872]$,
drifting upwards as the window is moved later. A decline that flattens in its
tail admits two readings these data do not separate, residual importation and
heterogeneity of transmission, under which the best-connected premises are
removed first. What matters for what follows is a weaker and verifiable
property, namely that every value $\widehat m$ takes over that family of
windows, and over the confidence interval, leaves the assumptions of
Section~\ref{sec:main-results} in force.

Since the Gamma hazard increases, $\mu^{*}=\mu_\infty$ and {\rm(H5)} reduces to
$p_0<\tfrac12$, the scale $\vartheta$ cancelling. For any offspring law of mean
$m$ one has $\xi\ge1$ on $\{\xi>0\}$, so that $m=\mathbb{E}[\xi]\ge1-p_0$; thus
{\rm(H5)} forces $m>\tfrac12$ whatever the offspring law, and the theory of
Section~\ref{sec:main-results} lives in the barely subcritical regime,
$R_{\mathrm{eff}}$ just below one. That regime is a feature of the lift rather
than a defect of the estimates (Proposition~\ref{prop:sharp}), and it is
precisely the regime in which an outbreak persists long enough for $\nu_Y$ to be
the pertinent reference, and in which surveillance is operationally hard. Under
the Poisson law {\rm(H5)} reads $m>\log2$, which through~\eqref{eq:EL} is a
ceiling on the mean infectious period at fixed decline rate,
\begin{equation}
  \label{eq:H5-ET}
  \mathbb{E}[T]<\frac{k\bigl(1-(\log2)^{1/k}\bigr)}{\widehat\lambda_0}
  =14.3\ \text{days},
\end{equation}
against the eight days used here; the ceiling still stands at $11.5$ days at the
upper end of the confidence interval, and a latent period longer than the one
day netted out above only lowers $\mathbb{E}[T]$ further. At the calibrated
point $p_0=0.440$ and $\|\kappa\|_\infty=p_0\mu^{*}=0.110$, against
$(1-p_0)\mu_\infty=0.140$: the killing rate is beaten by the age-confinement
rate, which is {\rm(H5)} in the form used throughout. At the least favourable
window, $\widehat m=0.751$ gives $p_0=0.472$, still below one half, and
subcriticality {\rm(H2)} holds a fortiori.

Because {\rm(H5)} constrains $p_0$ rather than $m$, it is worth recording how
much latitude the offspring law retains at the calibrated mean. The universal
bound $p_0\ge1-m$ gives $p_0\ge0.179$, and Poisson gives $0.440$, so the
condition admits laws appreciably more dispersed than Poisson, but not
arbitrarily so. Within the negative binomial family of mean $m$ and dispersion
$r$ one has $p_0=(1+m/r)^{-r}$, which crosses one half at $r^{*}=2.10$: the
calibration is robust to a variance-to-mean ratio of up to about $1.4$, and no
further. Since transmission of foot-and-mouth disease is heterogeneous across
premises, this is the assumption most exposed to the choice of a parametric
family, and the one a re-analysis should test first.

It remains to exhibit an exponent $\alpha>1$ with
$\mathbb{E}[\xi^{\alpha}]<\infty$ and $\alpha\lambda_0>\|\kappa\|_\infty$, which
is {\rm(H6)}. The Poisson law has moments of every order, so only the second
requirement binds and it reads $\alpha>\alpha^{*}$, where
\begin{equation}
  \label{eq:alphastar}
  \alpha^{*}:=\frac{\|\kappa\|_\infty}{\lambda_0}
  =\frac{p_0\mu^{*}}{\lambda_0}=4.69
\end{equation}
at the calibrated point, the admissible exponents forming the open interval
$(\alpha^{*},\infty)$. The requirement is mild across the whole calibration,
$\alpha^{*}$ staying below $6.32$ over the confidence interval and over every
window examined above, so a single exponent serves throughout. All six
assumptions therefore hold on the calibrated model.

The branching model also treats the study region as closed, whereas premises
could in principle be infected from outside it. The national movement ban of
late February 2001 makes such importation unlikely over the calibration window,
which opens more than a month later. Its direction is in any case known:
residual importation would flatten the observed decline, lowering
$\widehat\lambda_0$ and raising $\widehat m$, hence lowering $p_0$, so the
margin reported above for {\rm(H5)} is generous rather than conservative.

By Theorem~\ref{thm:consistency} the particle system is ergodic for every
$N\ge2$, so no admissibility threshold restricts the simulation and $N$ is
governed by accuracy alone (Remark~\ref{rem:rate}). What the calibration does
quantify is the cost of the alternative. Estimating $\nu_Y$ by rejection
sampling requires simulating the unconditioned process and retaining the
trajectories still alive at a horizon long enough for the conditional laws to
have converged, and $\mathbb{P}(\tau>T)\asymp e^{-\lambda_0T}$: over the burn-in
of $800$ days used in Section~\ref{sec:app-results} the proxy
$e^{-800\lambda_0}$ is of order $10^{-8}$, so more than $10^{8}$ trajectories
would have to be simulated for one to be retained, while the particle system
keeps its $1500$ particles alive. The cost rises steeply as $m$ decreases
towards $\log2$, reaching some $10^{14}$ at the boundary of {\rm(H5)}. With
{\rm(H1)}--{\rm(H6)} in force, Lemma~\ref{lem:smallsets} and
Lemma~\ref{lem:survival} complete Assumption~H of~\cite{CV2021}, and
Theorem~\ref{thm:convergence} applies: the stationary empirical measure
$\mathcal{X}^N$ estimates $\nu_Y$ with error at most
$d\,N^{-\varpi}\|\varphi\|_\infty$, with $\varpi<\tfrac12$.

\subsection{Quasi-stationary estimation and change detection}
\label{sec:app-results}

We run the particle system with $N=1500$ particles from a single newborn
premises; after a burn-in of $800$ days, functionals of $\nu_Y$ are estimated as
time averages over a stationary window of $8200$ days, sampled every $4$ days,
with standard errors obtained by batch means. Two functionals are of interest
(Figure~\ref{fig:qsd}). The first is the count marginal $\pi_Y(n)=\nu_Y(Z=n)$,
the law of the number of premises simultaneously infectious under persistence,
of mean $\mathbb{E}_{\nu_Y}[Z]=3.008\pm0.011$ farms. The second is the age
profile
\begin{equation}
  \label{eq:ageprofile}
  \bar\alpha_Y(da)
  =\frac{1}{\nu_Y(\langle\eta,\mathbf{1}\rangle)}
   \int_E\eta(da)\,\nu_Y(d\eta),
\end{equation}
the size-biased law of the ages of infection carried by the active premises. It
is the functional the lift to $\mathcal{M}_p(\mathbb{R}_+)$ was built to reach:
not a function of the population size, and not recoverable from $\pi_Y$.

The age profile admits a comparison with classical theory. In the stable-age
theory of age-structured populations the normalised profile is proportional to
$e^{\lambda_0a}\bar G(a)$, of mean $6.731$ days for $G=\Gamma(2,4)$ at the
calibrated $\lambda_0$. The particle system returns $6.733$ days, and the two
densities agree to $0.00037$ in supremum norm for a peak of $0.108$. Nothing in
the simulation encodes the stable-age form: the particles carry ages and are
resampled at killing events, and the profile emerges from that mechanism alone.
The agreement is a check on the simulation rather than an independent
confirmation of the theory, the stable-age density being itself a consequence of
quasi-stationarity.

\begin{figure}[htbp]
  \centering
  \includegraphics[width=0.9\linewidth]{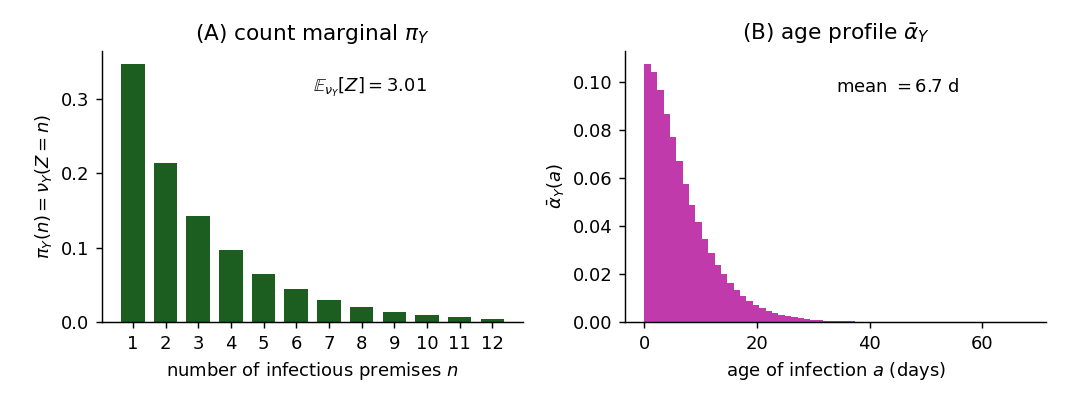}
    \vspace{-0.9cm}
  \caption{The Yaglom limit of the calibrated outbreak, estimated by the
    Fleming--Viot system ($N=1500$). (A)~Count marginal $\pi_Y(n)=\nu_Y(Z=n)$,
    of mean $\mathbb{E}_{\nu_Y}[Z]=3.008$ premises. (B)~Age profile
    $\bar\alpha_Y$, the size-biased law of the ages of the active premises, of
    mean $6.73$ days.}
  \label{fig:qsd}
\end{figure}

The second check is parameter-free and bears on a different stratum.
Corollary~\ref{cor:mean-killing} yields an exact identity: the killing rate being
carried by the singletons,
\begin{equation}
  \label{eq:audit}
  \lambda_0=\nu_Y(\kappa)
  =p_0\,\pi_Y(1)\,\mathbb{E}_{\nu_Y}\!\bigl[\mu(A)\mid Z=1\bigr],
\end{equation}
where $A$ denotes the age of the unique individual alive on the event
$\{Z=1\}$. Its right-hand side is an output of the particle system, while the
left is the rate estimated from the incidence data in
Section~\ref{sec:app-calib} and never supplied to the simulation as such. On the
calibrated model the right-hand side evaluates to $0.023451\pm0.000081$ against
$\lambda_0=0.023477$, a relative discrepancy of $0.11\%$ and a third of one
standard error. The identity tests two results at once. That any
deterministic limit point of the stationary empirical measure is
quasi-stationary with decay rate $\nu^{*}(\kappa)$ is
Theorem~\ref{thm:consistency}(iii), established independently of~\cite{CV2021};
that the selected limit is $\nu_Y$ and its decay rate the Malthusian parameter
is Theorem~\ref{thm:convergence}(ii). Neither is used in building the
simulation, and the two enter through different channels, the first through the
killing rate of the singleton stratum and the second through the Euler--Lotka
root, so the agreement constrains the pair. The rate of Theorem~\ref{thm:convergence}(iii) is confirmed as
well: run over $N$ from ten to three hundred particles, the error
$\mathbb{E}\,|\mathcal{X}^N(\varphi)-\nu_Y(\varphi)|$ for
$\varphi=\mathbf{1}_{\{Z=1\}}$ decays with empirical log-log slope
$-0.504\pm0.005$, so the parametric $N^{-1/2}$ of Remark~\ref{rem:rate} is
attained and the guaranteed exponent $\varpi<\tfrac12$ is not tight.

That part of the theorem holds for every $N\ge2$, whereas
\cite[Theorem~2.3]{CV2021} requires $N>\lambda_1/(\lambda_1-\|\kappa\|_\infty)$;
by Proposition~\ref{prop:sharp} the largest admissible $\lambda_1$ is
$(1-p_0)\mu_\infty$, so that threshold is at best $N^{*}=4.7$ at the calibrated
point, and it diverges as {\rm(H5)} approaches its boundary. The distinction is
not formal: the system is ergodic and estimable at $N=2$, where the right-hand
side of~\eqref{eq:audit} evaluates to $0.0278$ against $\lambda_0=0.0235$, and
the discrepancy falls to $12\%$ at $N=3$, $5\%$ at $N=8$ and $2\%$ at $N=32$.
Accuracy alone governs the choice of $N$, as Remark~\ref{rem:rate} asserts, and
no threshold forbids the small values.

The lift now bears on the surveillance question. A degradation of control that
shows in a slower removal of infected premises, that is, a lengthening of the
infectious period, $G=\Gamma(2,\vartheta)$ with $\vartheta$ rising above its
calibrated value at fixed $m$, ages the active premises while leaving
their number unchanged. The invariance is exact, and it is a change of time
unit: if $T\sim\Gamma(k,1)$ then $\vartheta T\sim\Gamma(k,\vartheta)$, so ages
and lifetimes are dilated by $\vartheta$ while the clock is slowed by the same
factor, and the process is the image of the one at $\vartheta=1$ under the
dilation of the ages. That dilation preserves the number of atoms, so $\pi_Y$
does not depend on $\vartheta$ at all, whereas $\bar\alpha_Y$ is dilated with
it; the rate scales as $\lambda_0=(1-m^{1/k})/\vartheta$ by~\eqref{eq:EL}. Running the system at $\mathbb{E}[T]=11$ days with $m$
unchanged confirms this: none of $\pi_Y(1),\dots,\pi_Y(6)$ moves by more than
$1.1$ batch-means standard errors, and $\mathbb{E}_{\nu_Y}[Z]$ by $0.6$.
Figure~\ref{fig:detection} makes the separation concrete: at a matched
in-control false-alarm rate of $0.10$, a test on $\bar\alpha_Y$ reaches
detection power $0.21$ when the mean infectious period lengthens by three days,
while a test on $\pi_Y$ remains at the false-alarm level. Surveillance is
censored at extinction: the clock runs on $\{Z_t\ge1\}$, each replicate resolves
into alarm, extinction before alarm, or horizon reached, and the quantity
reported is
$\mathbb{P}(\text{alarm}<\min(\text{extinction},\text{horizon}))$, matched in
its false-alarm rate but not in its censoring, the decay rate being smaller
under the alternative.

\begin{figure}[h!]
  \centering
  \includegraphics[width=\linewidth]{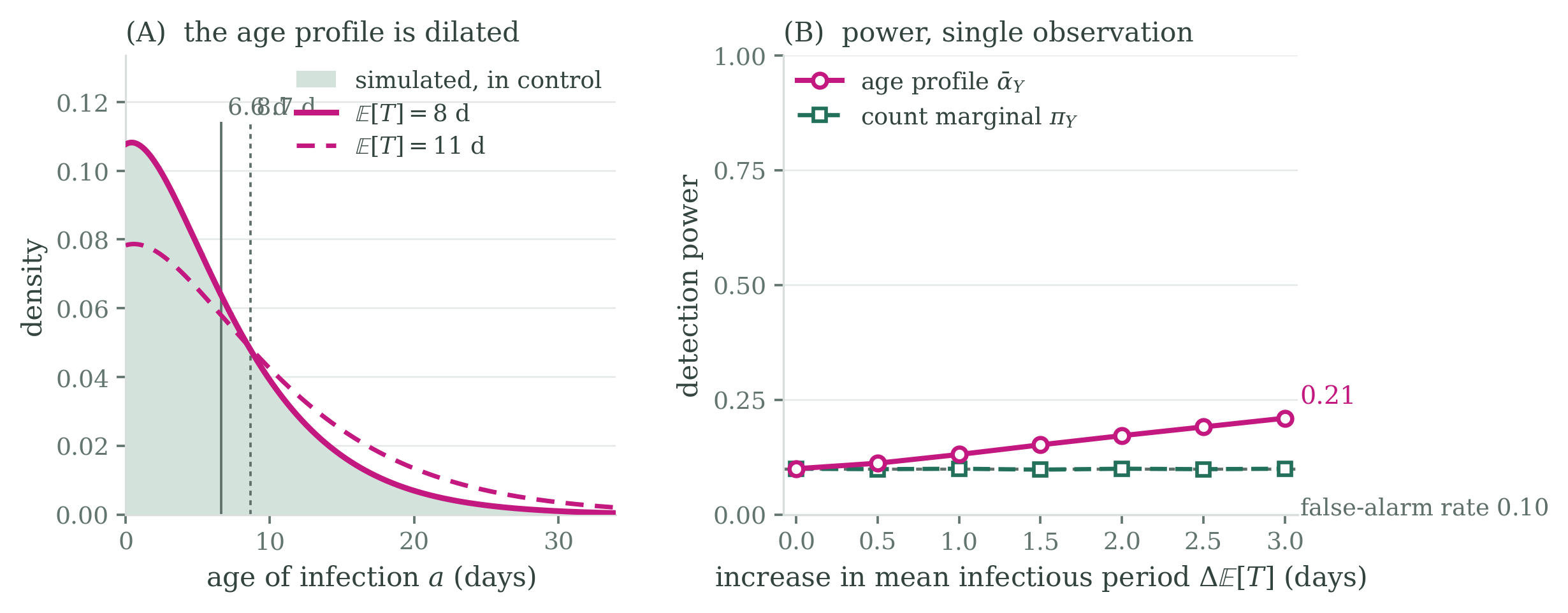}
    \vspace{-0.9cm}
  \caption{Detecting a lengthening of the infectious period ($m$ fixed at the
    calibrated value). (A)~The mechanism: the stable-age density is dilated by
    $\vartheta'/\vartheta$, which moves the mean age while leaving the number of
    premises unchanged; the histogram is the simulated in-control profile.
    (B)~The resulting power, at a matched false-alarm rate of $0.10$, for a
    single observation of the configuration.}
  \label{fig:detection}
\end{figure}

The power is modest, and the reason is structural: under persistence the
outbreak carries three premises on average, so a single observation supplies
about three ages. What the figure establishes is a separation rather than an
operational alarm, and a separation of a particular kind, the count marginal
having no power against this alternative at any sample size. One boundary of
that claim should be stated. The count is blind to the quasi-stationary law of
the prevalence, read from a snapshot; it is not blind to the dynamics, since
$\lambda_0$ falls from $0.0235$ to $0.0171\ \text{d}^{-1}$ under the three-day
change, which is visible in the very series used for calibration. The two
readings are complementary: the rate requires a long stretch of incidence data
and returns one number for the whole window, whereas the age profile is a
property of the current state.

Taken together, the application establishes four things on a real, calibrated
epidemic model. The hypotheses {\rm(H1)--(H6)} hold at the calibrated point and
across every window examined in Section~\ref{sec:app-calib}, in the transparent
form~\eqref{eq:H5-ET} that the mean infectious period stay below some fourteen
days, so the theory of Section~\ref{sec:main-results} is not vacuous but
describes an operationally relevant regime. The quasi-stationary age
configuration, absent from the surveillance record, is reconstructed from a
single estimated rate and a literature lifetime law. The reconstruction is
checked against the stable-age density of classical demography and,
independently, against the parameter-free identity~\eqref{eq:audit}, which it
meets to a third of a standard error, and it attains the parametric rate. And
the age profile detects a slower-culling degradation to which the population
count is exactly blind at quasi-stationarity. That is the concrete payoff of
carrying the Yaglom limit on $\mathcal{M}_p(\mathbb{R}_+)$ rather than on the
integers.
\section{Conclusion}
\label{sec:conclusion}

We have constructed a Fleming--Viot particle system whose particles are age
configurations, and shown that its stationary empirical measure selects the
Yaglom limit of a subcritical Bellman--Harris process, in total variation in
time and at a polynomial rate in the number of particles. Three estimates carry
the argument, none of them inherited from the point-valued theory: an additive
Lyapunov function confining masses and ages at once, an explicit Doeblin
minorisation carried by the singleton stratum, and a direct comparison of
survival probabilities. The drift condition {\rm(H5)} is not an artefact of
these estimates but a necessary feature of the lift
(Proposition~\ref{prop:sharp}), and the Yaglom limit is itself strengthened
along the way, holding in total variation, at an exponential rate, with decay
rate the Malthusian parameter.

The application shows what the lift buys. On the 2001 Cumbrian foot-and-mouth
record, a lifetime law taken from the literature and a single rate read from the
incidence fix the model, and the particle system then regenerates the entire age
configuration, an object the surveillance record does not contain. The
reconstruction is checked twice: against the stable-age density of classical
demography, which the simulation reproduces, and against the parameter-free
identity of Corollary~\ref{cor:mean-killing}, which it meets to a third of a
standard error and which constrains the Malthusian parameter and the decay rate
of the selected limit together. A slower cull ages the active premises without
changing their number, so the count marginal is blind to the degradation at
quasi-stationarity, while the age profile, available only on
$\mathcal{M}_p(\mathbb{R}_+)$, registers it.

\bibliographystyle{elsarticle-num}
\bibliography{biblio}


\appendix

\section{Proof of Lemma~\ref{lem:malthus}}
\label{app:malthus}

\begin{proof}
(i) Fix a scalar $\theta<\mu_\infty$ and a scalar
$\varsigma\in(0,\mu_\infty-\theta)$. By definition of $\mu_\infty$ there is a
scalar $A_\varsigma\ge0$, depending on $\varsigma$ alone, such that
$\mu\ge\mu_\infty-\varsigma$ on $[A_\varsigma,\infty)$, so that
\begin{equation}
\label{eq:tail-Gbar}
\bar G(a)=\exp\Bigl(-\int_0^a\mu\Bigr)
\le\exp\Bigl(-\int_{A_\varsigma}^a\mu\Bigr)
\le e^{(\mu_\infty-\varsigma)A_\varsigma}\,
e^{-(\mu_\infty-\varsigma)a},\; a\ge A_\varsigma.
\end{equation}
Since $\E[e^{\theta T}]=1+\theta\int_0^\infty e^{\theta a}\bar G(a)\,da$ and
$\theta<\mu_\infty-\varsigma$, the integral converges
by~\eqref{eq:tail-Gbar}.

(ii) Let $\Xi:[0,\mu_\infty)\to\R$ be the function
$\Xi(\lambda):=m\,\E[e^{\lambda T}]$, which by (i) is finite, continuous and
strictly increasing there, the last property because $T>0$ almost surely
by {\rm(H1)}. One has $\Xi(0)=m<1$ by {\rm(H2)}. Set the scalar
$\theta_0:=(1-p_0)\mu_\infty$, so that $\theta_0<\mu_\infty$; note also that
$\theta_0<\mu^*$, since $p_0\ge1-m>0$ as recorded in
Section~\ref{sec:prelim} and $\mu_\infty\le\mu^*$. Since $\mu\le\mu^*$
by {\rm(H4)} we have $\bar G(a)\ge e^{-\mu^*a}$, so $T$ stochastically dominates
an $\mathrm{Exp}(\mu^*)$ variable and, $\theta_0<\mu^*$ making the right-hand
side finite, $\E[e^{\theta_0T}]\ge\mu^*/(\mu^*-\theta_0)$, whence
\[
\Xi(\theta_0)\ \ge\ \frac{m\,\mu^*}{\mu^*-\theta_0}\ >\ 1
\;\Longleftrightarrow\;
\theta_0>(1-m)\mu^* ,
\]
and the latter holds because
$(1-m)\mu^*\le p_0\mu^*<(1-p_0)\mu_\infty=\theta_0$, the first inequality
because $p_0\ge1-m$ and the second by {\rm(H5)}. The intermediate value theorem
applied to $\Xi$ on $[0,\theta_0]$ now gives a root $\lambda_0\in(0,\theta_0)$
of~\eqref{eq:malthus}, unique since $\Xi$ increases wherever it is finite.
This proves both the existence of $\lambda_0$ and the strict inequality
$\lambda_0<(1-p_0)\mu_\infty$.

(iii) Write $v=m\,e^{-\lambda_0a}\,J/\bar G$ with
$J(a)=\int_a^\infty e^{\lambda_0s}g(s)\,ds$, so that $J'=-e^{\lambda_0a}g$ and
$\bar G'=-g$. Differentiating and using $g=\mu\bar G$,
\begin{equation}
\label{eq:v-ode}
v'=-\lambda_0v+\frac{m\,e^{-\lambda_0a}}{\bar G}
\Bigl(-e^{\lambda_0a}g+\frac{J\,g}{\bar G}\Bigr)
 =-\lambda_0v+\mu\,(v-m).
\end{equation}
Evaluating~\eqref{eq:reproductive-value} at $a=0$ gives
$v(0)=m\,\E[e^{\lambda_0T}]=1$, which is exactly~\eqref{eq:malthus};
substituting~\eqref{eq:v-ode} and $v(0)=1$
into~\eqref{eq:generator-linear} yields
$\mathcal Av=v'+\mu(m\,v(0)-v)=-\lambda_0v$, and hence
$\mathcal L\langle\eta,v\rangle=\langle\eta,\mathcal Av\rangle
=-\lambda_0\langle\eta,v\rangle$.

For the bounds, substitute $s=a+w$ in~\eqref{eq:reproductive-value} and use
$g=\mu\bar G$:
\begin{equation}
\label{eq:v-integral}
v(a)=m\int_0^\infty e^{\lambda_0w}\,
\exp\Bigl(-\int_a^{a+w}\mu\Bigr)\,\mu(a+w)\,dw.
\end{equation}
As $\supp(G)=[0,\infty)$ by {\rm(H1)} and $\mu_\infty>0$ by {\rm(H4)}, the
residual lifetime is almost surely finite, so
$\int_0^\infty e^{-\int_a^{a+w}\mu}\mu(a+w)\,dw=1$ for every $a$; since
$e^{\lambda_0w}\ge1$, \eqref{eq:v-integral} gives $v\ge m$. For the upper bound,
fix a scalar $\varsigma\in(0,\mu_\infty-\lambda_0)$, which is possible by (ii)
since $\lambda_0<(1-p_0)\mu_\infty\le\mu_\infty$, and let $A_\varsigma$ be as
in (i), that constant depending on $\varsigma$ alone. For $a\ge A_\varsigma$ one
has $\int_a^{a+w}\mu\ge(\mu_\infty-\varsigma)w$, so by~\eqref{eq:v-integral} and
$\mu\le\mu^*$, $v(a)\le m\mu^*\int_0^\infty
e^{-(\mu_\infty-\varsigma-\lambda_0)w}\,dw
=m\mu^*/(\mu_\infty-\varsigma-\lambda_0)<\infty$, while for $a\le A_\varsigma$,
using $m\int_a^\infty e^{\lambda_0s}\,dG(s)\le m\,\E[e^{\lambda_0T}]=1$
in~\eqref{eq:reproductive-value}, $v(a)\le e^{-\lambda_0a}/\bar G(a)
=\exp(\int_0^a(\mu-\lambda_0))\le e^{\mu^*A_\varsigma}$.
Thus $m\le v\le v_{\max}<\infty$, where the scalar $v_{\max}$ is the larger of
the two bounds just obtained. Finally $v'=(\mu-\lambda_0)v-m\mu$
by~\eqref{eq:v-ode}, which is bounded since $\mu$ is continuous and bounded
by {\rm(H4)}; hence $v\in C_b^1(\R_+)$.
\end{proof}
\section{Proof of Lemma~\ref{lem:lyapunov}}
\label{app:lyapunov}

\begin{proof}
Let $\eta\in E$ carry the atoms $a_1,\dots,a_n$ and write
$u:=\langle\eta,v\rangle$, $r:=\langle\eta,\rho\rangle$ and
$n:=\langle\eta,\ind\rangle$, so that $V=1+u^\alpha+\varepsilon r$
by~\eqref{eq:lyapunov}; note that $u\ge mn$ and $r\ge n$, since $v\ge m$ by
Lemma~\ref{lem:malthus}(iii) and $\rho\ge1$. Throughout, the extension
of~\eqref{eq:generator-linear} to the unbounded weight $\rho$ is legitimate by
the localisation argument recorded. Since $V(\mathbf 0)=1$, the killing correction in~\eqref{eq:LX} reads
$\kappa\,(V-V(\mathbf 0))=\kappa\,(u^\alpha+\varepsilon r)$, so that
\begin{equation}
\label{eq:LXV-split}
\mathcal L_XV(\eta)=\Bigl(\mathcal L u^\alpha(\eta)
+\kappa(\eta)\,u(\eta)^\alpha\Bigr)
+\varepsilon\Bigl(\mathcal L r(\eta)+\kappa(\eta)\,r(\eta)\Bigr).
\end{equation}
The two brackets are estimated in turn. We begin with the mass term. By Lemma~\ref{lem:malthus}(iii),
$\mathcal Lu=-\lambda_0u$. The death of individual $i$, which occurs at rate
$\mu(a_i)$, changes $u$ by the random variable
$\Delta_i:=\xi_i\,v(0)-v(a_i)=\xi_i-v(a_i)$, where $\xi_1,\dots,\xi_n$ are
independent copies of $\xi$, and $|\Delta_i|\le\xi_i+v_{\max}$ by
Lemma~\ref{lem:malthus}(iii), so that the jumps of $u$ are bounded in law,
uniformly in $\eta$. This is what the boundedness of $v$ achieves, and it is
what fails if the age weight is placed inside the power. Reading the jumps
off~\eqref{eq:generator},
\begin{equation}
\label{eq:Lu-alpha}
\mathcal L\,u^\alpha
=\alpha u^{\alpha-1}\,\mathcal Lu+R(\eta)
=-\alpha\lambda_0\,u^\alpha+R(\eta),
\;
R(\eta):=\sum_{i=1}^{n}\mu(a_i)\,\E\bigl[(u+\Delta_i)^\alpha-u^\alpha
-\alpha u^{\alpha-1}\Delta_i\bigr] ,
\end{equation}
and $R\ge0$ by the convexity of $x\mapsto x^\alpha$ on $\R_+$ ($\alpha>1$
by {\rm(H6)}) and by $u+\Delta_i\ge0$. There is a finite constant $c_\alpha$,
depending on $\alpha$ alone, such that for $\alpha\in(1,2]$
$0\le(x+y)^\alpha-x^\alpha-\alpha x^{\alpha-1}y\le c_\alpha|y|^\alpha$, and for
$\alpha>2$ the same difference is at most
$c_\alpha(x^{\alpha-2}y^2+|y|^\alpha)$, for all $x\ge0$ and $x+y\ge0$. Set the
finite scalars $M_\alpha:=2^{\alpha-1}(\E[\xi^\alpha]+v_{\max}^\alpha)$ and
$M_2:=2(\E[\xi^2]+v_{\max}^2)$, which bound $\E|\Delta_i|^\alpha$ and
$\E|\Delta_i|^2$ uniformly in $i$ and in $\eta$; both are finite by {\rm(H6)},
the second being needed only when $\alpha>2$, in which case $\E[\xi^2]<\infty$
follows from $\E[\xi^\alpha]<\infty$. Using $\mu\le\mu^*$ by {\rm(H4)} and
$n\le u/m$,
\begin{equation}
\label{eq:R-bound}
0\le R(\eta)\ \le\ \mu^*c_\alpha
\bigl(M_\alpha\,n+M_2\,n\,u^{\alpha-2}\bigr)
\ \le\ C_1\bigl(u+u^{\alpha-1}\bigr)\ =\ o\bigl(u^\alpha\bigr),
\end{equation}
with the convention that the second term inside the bracket is absent when
$\alpha\le2$, with $C_1:=\mu^*c_\alpha(M_\alpha+M_2)/m$ a finite scalar, and
with the last equality holding because $\alpha>1$. As for the killing term in
the first bracket of~\eqref{eq:LXV-split}, it is supported on the singletons
by~\eqref{eq:kappa}, where $u=v(a)$ and
\begin{equation}
\label{eq:K1}
\kappa(\delta_a)\,v(a)^\alpha\ \le\ \|\kappa\|_\infty\,v_{\max}^\alpha
\ =:\ K_1\ <\ \infty.
\end{equation}
Combining~\eqref{eq:Lu-alpha}, \eqref{eq:R-bound} and~\eqref{eq:K1},
\begin{equation}
\label{eq:mass-drift}
\mathcal L u^\alpha+\kappa\,u^\alpha
\ \le\ -\alpha\lambda_0\,u^\alpha+C_1\bigl(u+u^{\alpha-1}\bigr)+K_1.
\end{equation}

We turn to the age term, which is linear in $\eta$ and therefore produces no
convexity remainder. From~\eqref{eq:generator-linear} and $\rho(0)=1$,
$\mathcal A\rho(a)=(\beta-\mu(a))\rho(a)+m\,\mu(a)$. The killing term
$\kappa(\eta)\,r(\eta)$ equals $p_0\mu(a)\rho(a)$ on $\eta=\delta_a$ and
vanishes otherwise by~\eqref{eq:kappa}; using $-\mu\le-(1-p_0)\mu$ on the
configurations where it vanishes, we obtain in all cases
$\mathcal L r(\eta)+\kappa(\eta)\,r(\eta)
\le\sum_{i=1}^{n}(\beta-(1-p_0)\mu(a_i))\rho(a_i)+m\,\mu^*n$.
Choose a scalar $A_0\ge0$ so that $\mu(a)\ge\mu_\infty-\varepsilon'/(1-p_0)$ for
$a\ge A_0$, which is possible by the definition of $\mu_\infty$; then, for
$a\ge A_0$, one has
$\beta-(1-p_0)\mu(a)\le\beta-(1-p_0)\mu_\infty+\varepsilon'=-\zeta$, where the
scalar $\zeta:=(1-p_0)\mu_\infty-\varepsilon'-\beta$ satisfies
$\zeta>\|\kappa\|_\infty>0$, its positivity and this inequality following from
the choice of $\beta$ and $\varepsilon'$ in the statement, which is admissible
by {\rm(H5)}. On the compact $[0,A_0]$ both $\rho$ and $\mu$ are bounded, so
that $(\beta-(1-p_0)\mu(a))\rho(a)\le-\zeta\rho(a)+C_2$ there, for a finite
scalar $C_2$ depending on $\beta$, $\zeta$, $p_0$ and $A_0$ only. Hence, using
again $n\le u/m$,
\begin{equation}
\label{eq:age-drift}
\varepsilon\Bigl(\mathcal Lr+\kappa\,r\Bigr)
\ \le\ -\varepsilon\zeta\,r+\varepsilon\,C_3\,n
\ \le\ -\varepsilon\zeta\,r+\frac{\varepsilon C_3}{m}\,u ,
\; C_3:=C_2+m\mu^*.
\end{equation}
Inserting~\eqref{eq:mass-drift} and~\eqref{eq:age-drift}
into~\eqref{eq:LXV-split},
\[
\mathcal L_XV(\eta)\ \le\ -\alpha\lambda_0\,u^\alpha
-\varepsilon\zeta\,r
+C_1\bigl(u+u^{\alpha-1}\bigr)+\frac{\varepsilon C_3}{m}\,u+K_1.
\]
Since $\alpha>1$, Young's inequality absorbs the three middle terms: for the
scalar $\theta>0$ fixed in the statement there is a finite scalar
$C_4=C_4(\theta)$ with
$C_1(u+u^{\alpha-1})+\varepsilon C_3u/m\le\theta u^\alpha+C_4$. Therefore
\[
\mathcal L_XV(\eta)\ \le\ -(\alpha\lambda_0-\theta)\,u^\alpha
-\zeta\,\varepsilon r+(K_1+C_4)
\ \le\ -\lambda_1\bigl(u^\alpha+\varepsilon r\bigr)+(K_1+C_4)
\ =\ -\lambda_1V(\eta)+C ,
\]
with $\lambda_1=\min(\alpha\lambda_0-\theta,\zeta)$ and
$C:=K_1+C_4+\lambda_1$, a scalar which is therefore explicit, since
$K_1=\|\kappa\|_\infty v_{\max}^\alpha$ by~\eqref{eq:K1} and $C_4$ is furnished
by Young's inequality, the last equality using
$V=1+u^\alpha+\varepsilon r$ once more. This is~\eqref{eq:lyap-global}. That
$\lambda_1>\|\kappa\|_\infty$ holds by the two choices made in the statement:
$\alpha\lambda_0-\theta>\|\kappa\|_\infty$ because
$\theta<\alpha\lambda_0-\|\kappa\|_\infty$, a non-empty range by {\rm(H6)}; and
$\zeta>\|\kappa\|_\infty$ as recorded above. It remains to describe the sublevel sets. If $V(\eta)\le L$ then
$u\le L^{1/\alpha}$ by~\eqref{eq:lyapunov}, hence
$n\le\lfloor L^{1/\alpha}/m\rfloor=R_L$, the mass being an integer; and
$\varepsilon e^{\beta a_i}\le\varepsilon r\le L$ for each atom, hence
$a_i\le\beta^{-1}\log(L/\varepsilon)=A_L$. Thus $K_L$ is contained in the image
of the compact set $\bigsqcup_{k=1}^{R_L}[0,A_L]^k$ under the continuous map
$(a_1,\dots,a_k)\mapsto\sum_{i\le k}\delta_{a_i}$, hence is relatively compact
in the narrow topology, with all limit points carrying at least one atom, so
lying in $E$. Conversely, a narrowly compact subset of $E$ has bounded masses,
since $\eta\mapsto\langle\eta,\ind\rangle$ is narrowly continuous, and, being
tight with integer masses, a common compact support; so $V$ is bounded on it,
and every compact subset of $E$ lies in some $K_L$.
\end{proof}

\section{Proof of Lemma~\ref{lem:smallsets}}
\label{app:smallsets}

\begin{proof}
By the last statement of Lemma~\ref{lem:lyapunov}, every $\eta\in K_L$ carries
$n\le R_L$ individuals, of ages $a_1,\dots,a_n\in[0,A_L]$. Write $R:=R_L$,
$A:=A_L$. Fix scalars $s_1,s_2>0$, put $t_L:=s_1+s_2$ and $A':=A+s_1$, and
choose scalars $0<c_1<c_2<s_2$. We bound $P^{\kappa}_{t_L}(\eta,\cdot)$ from
below by retaining a single trajectory, over two successive time intervals.

On $[0,s_1]$ we reduce the configuration to a singleton. Require individual $1$
to survive and every other individual to die childless. Individual $1$ survives
with probability $\bar G(a_1+s_1)/\bar G(a_1)\ge\bar G(A+s_1)>0$, the last
inequality because $a_1\le A$ and $\bar G$ is decreasing. Individual
$i\in\{2,\dots,n\}$ dies childless in $[0,s_1]$ with probability
$p_0\bigl(G(a_i+s_1)-G(a_i)\bigr)/\bar G(a_i)\ge p_0\,c_{s_1}$, where
\begin{equation}
\label{eq:cs1}
c_{s_1}:=\inf_{a\in[0,A]}\int_{a+s_1/2}^{a+s_1}g
\ \ge\ \frac{s_1}{2}\,\inf_{[s_1/2,\,A+s_1]}g\ >\ 0
\end{equation}
by {\rm(H1)}, the interval $[s_1/2,A+s_1]$ being a compact subset of
$(0,\infty)$. These requirements concern disjoint subtrees, hence are
independent by the branching property, so their joint probability is at least
the scalar $c_*:=\bar G(A+s_1)\,(p_0c_{s_1})^{R-1}>0$, using $p_0c_{s_1}\le1$
and $n-1\le R-1$. At time $s_1$ the configuration is the singleton $\delta_b$
with $b=a_1+s_1\in[s_1,A']$.

On $[s_1,t_L]$ we fix the age. Starting from $\delta_b$, require the individual
to die at time $s_1+w$ for some $w\in(0,s_2)$, producing exactly one offspring,
which is possible since $p_1>0$ by {\rm(H3)}, and require that offspring to
survive until $t_L$, at which time it has age $z=s_2-w$. Applying the strong
Markov property at time $s_1$, the law of $\eta_{t_L}$ dominates the image,
under $z\mapsto\delta_z$, of the measure on $(0,s_2)$ with density
$q_b(z)=[g(b+s_2-z)/\bar G(b)]\,p_1\,\bar G(z)$,
and, for $b\in[s_1,A']$ and $z\in[c_1,c_2]$,
\begin{equation}
\label{eq:b0}
q_b(z)\ \ge\ p_1\,\bar G(c_2)
\inf_{[\,s_1+s_2-c_2,\;A'+s_2-c_1\,]}g\ =:\ b_0\ >\ 0 ,
\end{equation}
again by {\rm(H1)}, the interval being a compact subset of $(0,\infty)$ because
$c_2<s_2$. The population never vanishes along this event, so~\eqref{eq:b0} is a
bound on the killed kernel.

Combining the two intervals through~\eqref{eq:cs1} and~\eqref{eq:b0}, for every
measurable $B\subset E$,
\[
P^{\kappa}_{t_L}(\eta,B)\ \ge\ c_*\,b_0
\int_{c_1}^{c_2}\ind_{\{\delta_{z}\in B\}}\,dz
\ =\ \alpha_L\,\nu_L(B),
\; \alpha_L:=c_*\,b_0\,(c_2-c_1),
\]
where $\nu_L$ is the law of $\delta_{\mathcal U}$ with $\mathcal U$ a random
variable uniform on $[c_1,c_2]$. This is~\eqref{eq:doeblin}. Since
$P^{\kappa}_{t_L}(\eta,E)\le1$ we have $\alpha_L\in(0,1]$, and
$t_L,\alpha_L,\nu_L$ depend only on $(G,h)$ and $L$. Finally
$P_{t_L}\ge P^{\kappa}_{t_L}$, so $K_L$ is a small set for $X$; and by the last
statement of Lemma~\ref{lem:lyapunov} every compact subset of $E$ is contained
in some $K_L$.
\end{proof}

\section{Proof of Lemma~\ref{lem:survival}}
\label{app:survival}

\begin{proof}
Let $K$ be compact, so that $K\subset K_L$ for some scalar $L\ge1$ by the last
statement of Lemma~\ref{lem:lyapunov}; write $R:=R_L$ and $A:=A_L$, and let
$S:\R_+\times\R_+\to[0,1]$ be given by $S(a,t):=\Prob_{\delta_a}(\tau>t)$, which
is non-increasing in $t$.

We first reduce to a single lineage. Let $\eta\in K$ carry the individuals of
ages $a_1,\dots,a_n$ with $n\le R$ and $a_i\le A$, by
Lemma~\ref{lem:lyapunov}. The population survives to $t$ as soon as the subtree
of any single individual does, and it dies out only if all subtrees do; by the
branching property and a union bound,
\begin{equation}
\label{eq:survival-sandwich}
\max_{i\le n}S(a_i,t)\ \le\ \Prob_\eta(\tau>t)\ \le\ \sum_{i\le n}S(a_i,t)
\ \le\ R\,\max_{i\le n}S(a_i,t).
\end{equation}
It therefore suffices to compare $S(a,t)$ and $S(a',t)$ for $a,a'\in[0,A]$,
uniformly in $t$.

Suppose first $a\le a'\le A$. If $t<a'-a$, the individual started at $\delta_a$
is still alive at time $t$ as soon as it survives to age $a+t\le a'$, so
$S(a,t)\ge\bar G(a')/\bar G(a)\ge\bar G(A)\ge\bar G(A)\,S(a',t)$, the last step
because $S(a',t)\le1$. If $t\ge a'-a$, starting from $\delta_a$, the individual
survives to age $a'$ with probability $\bar G(a')/\bar G(a)\ge\bar G(A)$, at
which time the configuration is $\delta_{a'}$; by the Markov property and the
monotonicity of $S(a',\cdot)$,
\begin{equation}
\label{eq:survival-up}
S(a,t)\ \ge\ \frac{\bar G(a')}{\bar G(a)}\,S\bigl(a',t-(a'-a)\bigr)
\ \ge\ \bar G(A)\,S(a',t).
\end{equation}
In both cases $S(a,t)\ge\bar G(A)\,S(a',t)$. Suppose now $a>a'$, and fix a scalar $s>0$. Starting from $\delta_a$, require the individual to die at an age in $(a+s/2,a+s)$ with exactly one offspring, and that offspring to survive to age $a'$. Using {\rm(H3)} for the single offspring
and {\rm(H1)} for the density, the probability of this event is at least
$p_1\bar G(a')\bigl[G(a+s)-G(a+s/2)\bigr]/\bar G(a)
\ge p_1\bar G(A)(s/2)\inf_{[s/2,\,A+s]}g=:\sigma_1>0$,
a scalar depending on $(G,h)$, $A$ and $s$ only, and on that event the
configuration is $\delta_{a'}$ at some time before $s+a'$. Again by the Markov
property and the monotonicity of $S(a',\cdot)$,
\begin{equation}
\label{eq:survival-down}
S(a,t)\ \ge\ \sigma_1\,S(a',t).
\end{equation}

Setting the scalar $\sigma_A:=\min(\bar G(A),\sigma_1)>0$,
\eqref{eq:survival-up} and~\eqref{eq:survival-down} give
$S(a,t)\ge\sigma_A\,S(a',t)$ for all $a,a'\in[0,A]$ and all $t\ge0$. Combined
with~\eqref{eq:survival-sandwich}, this yields, for all $\eta,\eta'\in K$ with
respective ages $(a_i)$ and $(a'_j)$ and all $t\ge0$,
$\Prob_\eta(\tau>t)\ge\max_iS(a_i,t)\ge\sigma_A\max_jS(a'_j,t)
\ge(\sigma_A/R)\,\Prob_{\eta'}(\tau>t)$, whence
$\inf_{t\ge0}\inf_K\Prob_\cdot(\tau>t)/\sup_K\Prob_\cdot(\tau>t)\ge\sigma_A/R>0$.
\end{proof}

\section{Proof of Lemma~\ref{lem:stability}}
\label{app:stability}

\begin{proof}
Put $f:=e^{\psi}$, so that the scalar $c:=\inf f=e^{\inf\psi}$ lies in $(0,1]$
and $f\le1$, since $\psi\le0$ and $\psi$ is bounded. For a bounded measurable
function $f$ on $\R_+$ let $Q_tf(a):=\E_{\delta_a}\bigl[\prod_if(a_i(t))\bigr]$,
which defines a semigroup of operators on bounded measurable functions of the
age; the point of the proof is that it maps $C_b^1(\R_+)$ into itself. By the
branching property,
$\E_\eta\bigl[\prod_if(a_i(t))\bigr]=\prod_j(Q_tf)(a_j)$ for
$\eta=\sum_j\delta_{a_j}$; since $\varphi_\psi(\mathbf 0)=0$, the indicator
$\ind_{\{\tau>t\}}$ is redundant and
$P_t^{\kappa}\varphi_\psi(\eta)=\E_\eta[\varphi_\psi(\eta_t)]
=1-\prod_jQ_tf(a_j)$. It therefore suffices to check that
$\psi_t:=\log Q_tf$ lies in $C_b^1(\R_+)$ and is non-positive. For the
bounds, $Q_tf\le1$ because $f\le1$, so $\psi_t\le0$. From below,
$\prod_if(a_i(t))\ge c^{Z_t}$, and $x\mapsto c^{x}$ being convex, Jensen's
inequality together with the moment bound $\E_{\delta_a}[Z_t]\le1$ recorded
before~\eqref{eq:sup-Z} gives
\begin{equation}
\label{eq:Qtf-lower}
Q_tf(a)\ \ge\ \E_{\delta_a}\bigl[c^{Z_t}\bigr]
\ \ge\ c^{\,\E_{\delta_a}[Z_t]}\ \ge\ c\ >\ 0 ,
\end{equation}
uniformly in $a$, since $c\le1$. For the regularity, decompose on the residual
lifetime of the ancestor and write $u_s:=Q_sf(0)$, a scalar lying in $[c,1]$
by~\eqref{eq:Qtf-lower}, so that $Q_tf(a)=[\bar G(a+t)/\bar G(a)]f(a+t)
+\int_0^t[g(a+s)/\bar G(a)]h(u_{t-s})\,ds$.
Substituting $y=a+s$ moves the dependence on $a$ to the endpoints:
\begin{equation}
\label{eq:Qtf-shifted}
\bar G(a)\,Q_tf(a)=\bar G(a+t)\,f(a+t)
+\int_a^{a+t}g(y)\,h\bigl(u_{t+a-y}\bigr)\,dy.
\end{equation}
Taking $a=0$ in~\eqref{eq:Qtf-shifted} yields the Volterra equation
$u_t=\phi(t)+\int_0^tg(y)\,h(u_{t-y})\,dy$, where $\phi:=\bar Gf$ is $C^1$ with
$\|\phi'\|_\infty\le\mu^*+\|f'\|_\infty<\infty$, since $\bar G'=-g$,
$g=\mu\bar G\le\mu^*$ by {\rm(H4)} and $f\in C_b^1$; recall also that $h$ is
$C^1$ on $[0,1]$ with $h'$ non-decreasing, so that $0\le h'\le h'(1)=m<1$ by
{\rm(H2)}, and that $G\le1$. The map $t\mapsto u_t$ is then $C^1$ with
$U_t:=\sup_{s\le t}|u_s'|<\infty$. This does not follow at once from the
continuity of $g$, which is not assumed differentiable, and we argue in two
steps.

First, $u$ is Lipschitz. Fix $T>0$ and, for $0\le t\le t+\delta\le T$, split the
integral at $y=t$:
\begin{equation}
\label{eq:u-increment}
u_{t+\delta}-u_t=\phi(t+\delta)-\phi(t)
+\int_t^{t+\delta}\!\!g(y)\,h\bigl(u_{t+\delta-y}\bigr)\,dy
+\int_0^t\!g(y)\bigl[h\bigl(u_{t+\delta-y}\bigr)-h\bigl(u_{t-y}\bigr)\bigr]dy .
\end{equation}
Set $\omega(\delta):=\sup\{|u_{t+\delta}-u_t|:\,t+\delta\le T\}$, a quantity
bounded by $1$ since $u$ takes its values in $[c,1]$. Using
$h(u)\in[0,1]$, $g\le\mu^*$, $h'\le m$ and $\int_0^tg=G(t)\le1$, the three terms
of~\eqref{eq:u-increment} are bounded respectively by
$\|\phi'\|_\infty\delta$, $\mu^*\delta$ and $m\,\omega(\delta)$, whence
$\omega(\delta)\le(\|\phi'\|_\infty+\mu^*)\delta+m\,\omega(\delta)$ and
therefore $\omega(\delta)\le L\delta$ with
$L:=(\|\phi'\|_\infty+\mu^*)/(1-m)$, the denominator being positive by
{\rm(H2)}. In particular $u$ is differentiable almost everywhere with
$U_t\le L<\infty$.

Second, $u'$ is continuous. Being Lipschitz, $u$ is absolutely continuous, and
so is $h\circ u$, with $(h\circ u)'=h'(u)\,u'$ almost everywhere. For any
absolutely continuous $F$ on $[0,T]$ one has $F(t)=F(0)+\int_0^tF'$, so that
Fubini gives $(g*F)(t)=F(0)G(t)+(G*F')(t)$; since $G$ is $C^1$ with $G'=g$ and
$G(0)=0$, and $F'\in L^\infty$, the right-hand side is differentiable in $t$
with derivative $F(0)g(t)+(g*F')(t)$, which is continuous because $g$ is
continuous and bounded. Applying this to $F=h\circ u$, whose value at $0$ is
$h(u_0)=h(f(0))$, and adding $\phi'$,
\begin{equation}
\label{eq:u-derivative}
u_t'=\phi'(t)+h\bigl(f(0)\bigr)\,g(t)
+\int_0^t g(y)\,h'\bigl(u_{t-y}\bigr)\,u'_{t-y}\,dy ,
\end{equation}
the right-hand side being continuous in $t$. Hence $u\in C^1$, with $U_t\le L$
by the first step.

Differentiating~\eqref{eq:Qtf-shifted} in $a$, which is legitimate because the
integrand is continuous and $u$ is $C^1$, and using $\bar G'=-g$, $g=\mu\bar G$
and $u_0=f(0)$, we obtain
\begin{align}
\label{eq:Qtf-derivative}
(Q_tf)'(a)&=\mu(a)\Bigl(Q_tf(a)-h(u_t)\Bigr)
+\frac{\bar G(a+t)}{\bar G(a)}\,f'(a+t)+\frac{g(a+t)}{\bar G(a)}\Bigl(h\bigl(f(0)\bigr)-f(a+t)\Bigr)\notag\\
&\;+\frac{1}{\bar G(a)}\int_a^{a+t}\!\!g(y)\,h'\bigl(u_{t+a-y}\bigr)\,
u'_{t+a-y}\,dy.
\end{align}
Each term is bounded uniformly in $a$: $\mu\le\mu^*$ by {\rm(H4)}, and
$Q_tf,\,h(u_t)\in[0,1]$; $\bar G(a+t)/\bar G(a)\le1$ and $f'$ is bounded;
$g(a+t)/\bar G(a)=\mu(a+t)\,\bar G(a+t)/\bar G(a)\le\mu^*$, while
$h(f(0)),\,f(a+t)\in[0,1]$; and the last term is at most $m\,U_t$, since
$h'\le m$ on $[0,1]$ and $\int_a^{a+t}g\le\bar G(a)$. Hence
$Q_tf\in C_b^1(\R_+)$, and $\psi_t=\log Q_tf\in C_b^1(\R_+)$
by~\eqref{eq:Qtf-lower}, since $Q_tf\ge c>0$.
\end{proof}
\end{document}